\definecolor{refColor}{HTML}{0376E9}
\definecolor{figColor}{HTML}{E90303}
\definecolor{urlColor}{HTML}{0bcb9a}
\newcommand{\bra}[1]{\mathinner{\langle{#1}|}}
\newcommand{\ket}[1]{\mathinner{|{#1}\rangle}}
\newcommand{\BraKet}[2]{\langle #1 | #2 \rangle}
\renewcommand{\vec}[1]{{\boldsymbol{#1}}}
\newcommand{\tr}[1]{\operatorname{Tr}\left[#1\right]}
\newcommand{\Tr}[2]{\operatorname{Tr}_{#2}\left[#1\right]}
\newcommand{\const}{\textrm{const}}
\newcommand{\A}{\mathcal{A}}
\renewcommand{\H}{\mathcal{H}}
\newcommand{\TT}{\mathbb{T}}
\renewcommand{\S}{\mathcal{S}}
\renewcommand{\L}{\mathcal{L}}
\newcommand{\sub}[1]{{\text{\tiny\textnormal{#1}}}}
\newcommand{\spn}[1]{\operatorname{span}\{\,#1\,\}}
\newcommand{\Sym}[1]{\operatorname{Sym}\left(#1\right)}
\newcommand{\Rb}{{r_\sub{B}}}
\newcommand{\id}{\mathds{1}}
\newcommand{\del}{\partial}
\newcommand{\Z}{\mathbb{Z}}
\newcommand{\G}{\mathcal{G}}
\newcommand{\GG}{\mathcal{G}_{\{1\}}}
\newcommand{\GTG}{{\mathcal{G}\otimes G}}
\newcommand{\SG}{{\square G}}
\newcommand{\Vi}{V_l}
\newcommand{\Vs}{V_s}
\newcommand{\Ei}{E_l}
\newcommand{\Es}{E_s}
\newcommand{\Ds}{\Delta_s}
\newcommand{\GC}{\mathcal{G}_C}
\newcommand{\VC}{V_C}
\newcommand{\EC}{E_C}
\newcommand{\DC}{\Delta_C}
\newcommand{\GCl}{\mathcal{G}_\text{Cl}}
\newcommand{\VCl}{V_\text{Cl}}
\newcommand{\ECl}{E_\text{Cl}}
\newcommand{\DCl}{\Delta_\text{Cl}}
\newcommand{\ZD}{\mathbb{Z}_3}
\renewcommand{\AC}{\mathcal{A}_C}
\newcommand{\U}{\Theta}
\newcommand{\etal}{\emph{et\,al.}\xspace}
\newcommand{\Ac}{{\bar{A}}}
\newcommand{\dA}{{\partial A}}
\newcommand{\anti}[1]{\bar{#1}}
\newcommand{\anyon}[1]{\text{\bfseries\sffamily{#1}}}
\newcommand{\FmatD}[1]{F^{\anyon{DDD}}_{\,\anyon{#1}}}
\newcommand{\RmatD}[1]{R^{\anyon{DD}}_{\anyon{#1}}}
\newcommand{\Hom}[1]{\operatorname{Hom}(#1)}
\newcommand{\aut}[1]{\operatorname{Aut}\left(#1\right)}
\newcommand{\Aut}[1]{\operatorname{Aut}\left(#1\right)}
\newcommand{\Id}{\operatorname{id}}
\newcommand{\nel}{\mathrm{e}}
\newcommand{\double}[1]{\mathcal{D}(#1)}
\newcommand{\hex}{{\scriptsize \varhexagon}}
\newcommand{\sq}{{\scriptsize \square}}
\newcommand{\CaptionMark}[1]{\textbf{#1}}
\newtheorem{theorem}{Theorem}
\newtheorem{lemma}{Lemma}
\newtheorem{proposition}{Proposition}
\crefname{paragraph}{Paragraph}{Paragraphs}
\begin{document}

\title{Quantum doubles in symmetric blockade structures}

\author{Hans Peter B\"uchler}
\email{hans-peter.buechler@itp3.uni-stuttgart.de}
\author{Tobias F. Maier}
\author{Simon Fell}
\author{Nicolai Lang}
\thanks{All authors contributed equally to the theoretical analysis and the writing of this manuscript.}
\affiliation{%
    Institute for Theoretical Physics III 
    and Center for Integrated Quantum Science and Technology,\\
    University of Stuttgart, 70550 Stuttgart, Germany
}

\date{\today}


\begin{abstract}
  Exactly solvable models of topologically ordered phases with non-abelian
  anyons typically require complicated many-body interactions which do not
  naturally appear in nature. This motivates the ``inverse problem'' of quantum
  many-body physics: given microscopic systems with experimentally realistic
  two-body interactions, how to design a Hamiltonian that realizes a desired
  topological phase? Here we solve this problem on a platform motivated by
  Rydberg atoms, where elementary two-level systems couple via simple blockade
  interactions. Within this framework, we construct Hamiltonians that realize
  topological orders described by non-abelian quantum double models. We
  analytically prove the existence of topological order in the ground state,
  and present efficient schemes to prepare these states. We also introduce
  protocols for the controlled adiabatic braiding of anyonic excitations to
  probe their non-abelian statistics. Our construction is generic and applies
  to quantum doubles $\double{G}$ for arbitrary finite groups $G$. We
  illustrate braiding for the simplest non-abelian quantum double
  $\double{S_3}$.
\end{abstract}

\maketitle


The ground state phase diagrams of quantum many-body systems at zero
temperature can be extremely rich. Of special interest are quantum phases with
properties that are unique to quantum systems. One of the most intriguing
examples are topologically ordered phases of two-dimensional systems which are
characterized by their pattern of long-range
entanglement~\cite{Kitaev2006a,Levin2006,Wen2013,Wen2017}. It is the
entanglement structure of these gapped ground states which entails anyonic
statistics of excitations and robust ground state degeneracies on topologically
non-trivial manifolds. 
While topological phases with abelian anyonic excitations have useful
applications as quantum error correction
codes~\cite{Kitaev2003,Bombin2006,Terhal2015}, phases with non-abelian
excitations are of special interest due to their higher-dimensional braid group
representations, which makes them potential substrates for topological quantum
computing~\cite{Freedman2002,Nayak2008,Wang2010}. While there is general
consensus that some fractional quantum Hall states are natural examples of
topological orders with abelian anyons~\cite{Bartolomei2020Science,
Nakamura2020DirectObservation}, 
the appearance and realization of non-abelian phases in fractional quantum Hall
states or artificial matter is much more challenging~\cite{Kiryl2015,
ReviewStern2010NonAbelian, Nayak2008}. Especially the experimental detection of
their characteristic entanglement structure (e.g., by probing the anyonic
braiding statistics) is still an open problem.
Meanwhile, on the theory side, there are thoroughly explored models that give
rise to a large variety of topological orders with non-abelian anyons.
Examples are Chern-Simon theories for fractional quantum Hall
states~\cite{Zhang1989,Lopez1991,Halperin1993}, Kitaev's quantum double
models~\cite{Kitaev2003}, and string-net
condensates~\cite{Levin2005,Fidkowski2009}, as well as other minimalistic
models~\cite{Kitaev2006,Lang2015}.  Unfortunately, most of these models rely on
non-trivial multi-body interactions which do not naturally appear in nature --
a major roadblock to experimentally explore these topological orders.
This motivates the inverse problem of quantum many-body physics: provided a
platform with experimentally realistic interactions and tunability of
elementary degrees of freedom, is it possible -- and if so how -- to engineer
quantum systems that naturally realize interesting quantum many-body ground
states?
In this paper, we consider systems characterized by a two-body blockade
interaction and show how these can be used to systematically engineer
microscopic models that realize a large class of non-abelian topological
orders. 

In recent years, a major goal has been to realize and probe topological phases
in artificial matter. In condensed matter settings, $p$-wave superconductors
are promising for the realization of Majorana zero modes, either on the
boundaries of wires~\cite{Lutchyn2010,Fidkowski2011} or in the core of
vortices~\cite{Read2000,Ivanov2001}, while recent progress with two-dimensional
van der Waals materials opens a pathway towards fractional Chern
insulators~\cite{Regnault2011,Sun2011,Tang2011,Jia2024}.  The framework of
quantum simulation provides another promising approach, where a variety of
systems based on cold atomic and molecular gases have been put forward.
First theoretical proposals focused on the realization of Kitaev's honeycomb
model~\cite{Kitaev2006} in optical lattices~\cite{Duan2003} or using a spin
toolbox realized by polar molecules~\cite{Micheli2006}. Other proposals target
the realization of fractional quantum Hall states with rotating
gases~\cite{Cooper2008}, Majorana modes in double wires~\cite{Kraus2013} and
$p$-wave superfluids~\cite{Buehler2014}, and bosonic fractional Chern
insulators with polar molecules~\cite{Yao2013} and Rydberg
atoms~\cite{Weber2022}.
In particular Rydberg atoms have emerged as promising platform to study
topological phases, with the first experimental observation of a symmetry
protected topological phase in one-dimension~\cite{Leseleuc2019}, and attempts
to probe a $\mathbb{Z}_2$ spin liquid with toric code topological
order~\cite{Verresen2021,Semeghini2021}. (Although the experimentally observed
signatures are most likely due to dynamical state preparation, rather than
ground state properties of the engineered
Hamiltonian~\cite{Sahay2022,Giudici2022}.) The advantage of the Rydberg
platform is the high flexibility to arrange atoms in arbitrary
two-~\cite{Nogrette2014,Barredo_2016,Bernien_2017} and three-dimensional
geometries~\cite{Barredo2018}, local optical access to individual atoms, a
large freedom to select internal states to engineer microscopic Hamiltonians,
as well as strong van der Waals and dipolar exchange interactions between
different Rydberg levels~\cite{Jaksch2000,Saffman_2010,Barredo2015}.  Notably,
the strong van der Waals coupling can often be modeled by a simple blockade
interaction~\cite{Bernien_2017,Pichler2018a,Turner2018,Lin2019,Bluvstein2021,Ebadi2022,Nguyen2022,Stastny2023a}:
a strong (infinite) interaction on distances shorter than a (tunable) blockade
radius, and a vanishing interaction on larger distances. The simplicity of this
coupling makes Rydberg atoms a versatile platform for the bottom-up design of
artificial quantum matter.

In this paper, we use a framework inspired by the Rydberg platform to design
microscopic Hamiltonians with ground states that are in the topological phase
of Kitaev's paradigmatic quantum double models (of which the toric code is the
simplest example~\cite{Kitaev2003}).
The latter are characterized by a finite group $G$ and, for non-abelian groups,
their topological order supports non-abelian anyonic excitations. Our framework
is based on microscopic two-level systems, arranged in a periodic
two-dimensional structure, with local detunings, local transverse fields, and
simple two-body blockade interactions.  
The approach presented here is the natural extension to arbitrary groups $G$ of
the construction presented in Ref.~\cite{Maier2025} for the abelian group
$\mathbb{Z}_2$.
The main idea is that blockade interactions can be abstractly described by
vertex-weighted \emph{blockade graphs}, and the design of these graphs can be
guided by two crucial insights. First, ground states map to
\emph{maximum-weight independent sets} and satisfy local constraints that are
encoded in the topology of these graphs. And second, blockade graphs can
feature \emph{local graph automorphisms} that translate to local unitary
symmetries of the Hamiltonian, which, in turn, enforce strong quantum
fluctuations within the subspace of states that satisfy the local constraints.
While closely related models with local symmetries have been recently
proposed~\cite{Chamon_2020,Yang2021,Green_2023,Yu_2024,Yu2025}, the
Hamiltonians introduced here allow for a rigorous poof that their ground state
is in the topological phase of the prescribed quantum double $\double{G}$ for
weak transverse fields. 
Furthermore, a spectral gap to flux anyons can be proved in the thermodynamic
limit, while a finite charge gap is expected as well, but much more challenging
to show rigorously \cite{Gap2025}.
Leveraging this framework, we propose an efficient scheme to adiabatically
prepare these ground states, together with a controllable procedure to prepare
states with localized anyonic excitations. Finally, we propose an efficient
protocol for the adiabatic braiding of anyons to experimentally probe their
non-abelian statistics. 
While our construction is generic and works for arbitrary groups $G$, we
illustrate the braiding protocol for the simplest non-abelian quantum double
$\double{S_3}$. The proposed construction and protocols pave the way towards
probing non-abelian topological orders in artificial matter with realistic
two-body interactions. While our approach is inspired by the Rydberg platform,
our formulation is platform-agnostic and allows for alternative realizations,
e.g., with polar molecules in optical tweezers~\cite{Ruttley2025} or
superconducting qubits connected by microwave cavities acting as a bus to
mediate blockade interactions~\cite{Blais2021}. 

\begin{figure*}[tb]
    \centering
    \includegraphics[width=1.0\linewidth]{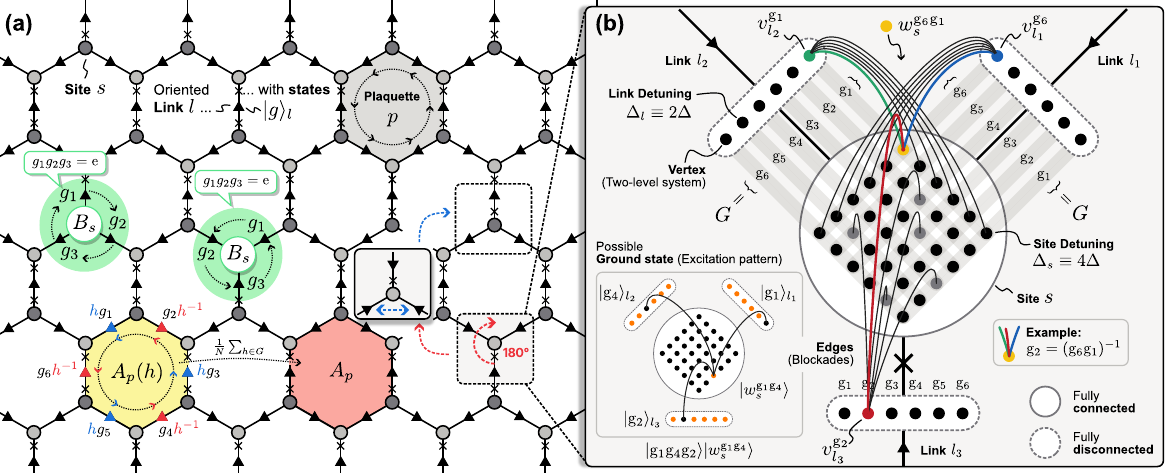}
    \caption{%
        \CaptionMark{Conventions and Construction.}
        (a) We construct quantum doubles \eqref{eq:QD} for a finite group $G$
        on the trivalent and bipartite honeycomb lattice. By convention, the
        links of the lattice are assigned an orientation (solid arrows).  Every
        link $l$ is associated with a $N=|G|$-dimensional quantum system with
        one state $\ket{g}_l$ for each group element $g\in G$. Each plaquette
        $p$ is assigned a counter-clockwise orientation (dotted arrows). With
        this convention, a plaquette $p$ is on the left (right) of a bounding
        link $l$ [write $l\in{}_p\!\!\uparrow$ ($l\in\uparrow_{\!p}$)], if the
        link's orientation is parallel (antiparallel) to the orientation of the
        plaquette. On each site $s$, we define a projector $B_s$ that singles
        out states that satisfy the no-flux condition $g_1g_2g_3=\nel$, where
        $\nel\in G$ denotes the identity and the multiplication sequence
        depends on the sublattice (two green sites).  On each plaquette, there
        are operators $A_p(h)$ that act by left/right multiplication on the
        group elements on the bounding links (blue/red). $A_p(h)$ acts by left
        (right) multiplication if the plaquette orientation aligns
        (counter-aligns) with the link orientation. This construction ensures
        that $A_p(h)$ commutes with all site constraints $g_1g_2g_3=\nel$
        enforced by $B_s$.  Summing over all group elements yields the
        projector $A_p$.
        (b) Microscopically, the quantum double is realized by a blockade
        Hamiltonian \eqref{eq:H} encoded by a blockade graph
        $\G=(V,E,W)$. Depicted is an example for $N=6$ with group
        elements $\mathrm{g_i}$ for $i=1,\ldots,6$ and exemplary group product
        $\mathrm{g_2}=(\mathrm{g_6}\mathrm{g_1})^{-1}$. Note that roman symbols
        like $\mathrm{g_2}$ label \emph{specific group elements}, whereas
        italic symbols like $g_3$ are used as \emph{variables}. For the
        construction it is convenient to mark one of the three edges at each
        site (crosses); this choice has no physical consequence. Then one
        places $N$ two-level systems (vertices) on each link $l$ which are not
        in blockade which each other; each is assigned a group element and
        labeled by $v_l^{g}$. Additionally, there are $N^2$ two-level systems
        on the site labeled by pairs of group elements and denoted by
        $w_s^{g_1g_2}$; these site systems are all in blockade which each other
        (blockades not shown).  The crucial part is how the link vertices are
        connected by edges (blockades) with the site vertices (solid arcs, only
        a few are shown). This construction is explained in the text and
        depends on the orientation of the site (= the sublattice) and the
        marked edge (crosses). The inset shows an exemplary classical ground
        state $\ket{\mathrm{g_1g_4g_2}}\ket{w_s^{\mathrm{g_1g_4}}}$ that
        satisfies all blockades and realizes the state with constraint
        $\mathrm{g_1g_4g_2}=\mathrm{e}$. Note that there is only one two-level
        system excited on the site and all but one on each link.  Although the
        construction seems to break the three-fold rotation symmetry of a site
        (via the edge marked with a cross), the cyclic symmetry of the
        constraint $g_1g_2g_3=\nel$ ensures that the constructed blockade graph
        is completely symmetric under rotations by $120^\circ$. To obtain the
        blockade graph on the other sublattice, one can rotate the shown site
        by $180^\circ$ and swap two of the three edges, thereby inverting the
        orientation of the multiplication around the vertex.
    }
    \label{fig:fig1}
\end{figure*}

\section{The model}
\label{sec:model}

We consider extensive \emph{blockade structures} $\G$ of two-level systems
arranged in space. We denote the total Hilbert space of such a structure by
$\H_\G$. The two-level systems $i$ are subject to a uniform transverse field
$\Omega$ and local detunings $\Delta_i$, and interact via an isotropic Blockade
potential~\cite{Bernien_2017,Pichler2018a,Turner2018,Lin2019,Bluvstein2021,Ebadi2022,Nguyen2022,Stastny2023a};
i.e., their interaction vanishes at large distances and saturates at a value
$U_0$ on distances shorter than a blockade radius~$\Rb$:
\begin{align}
    U(r):=
    \begin{cases}
        0&\text{for}\;r>\Rb\,,\\
        U_0&\text{for}\;r\leq\Rb\,.
    \end{cases}
    \label{eq:U}
\end{align}
Note that in the context of blockade interactions, $U_0$ is often set to
infinity. Here we keep it as free but large parameter which has no effect on
our results but makes rigorous statements easier to prove. 
The simplicity of the blockade potential~\eqref{eq:U} suggests encoding the
spatial arrangement of a structure $\G$ by a \emph{vertex-weighted blockade
graph} $\G\equiv (V,E,W)$: The two-level systems $i\in V$ form the
\emph{vertices} $V$ of the graph, so that the Hilbert space of the structure
has the form $\H_\G=(\mathbb{C}^2)^{\otimes|V|}$. The detunings are interpreted
as the \emph{weights} $W \equiv \{ \Delta_{i}\}$ of the vertices, and the
\emph{edges} $e=\{i,j\}\in E$ of the graph denote pairs of two-level systems
that are in blockade (i.e., are separated by less than the blockade radius
$\Rb$).

With these conventions, the Hamiltonian associated to a blockade
structure/graph $\G$ has the form
\begin{align}
    H_\G &= H_\G^0 + \Omega  \sum_{i\in V} \sigma_i^x\nonumber\\
    \text{with}\quad
    H_\G^0 &= U_0\sum_{\{i,j\}\in E} n_in_j - \sum_{i\in V}  \Delta_i  n_i
    \,,
    \label{eq:H}
\end{align}
where $\sigma_i^{\alpha}$ denotes the Pauli matrices for each two-level system
and $n_{i} = (1-\sigma_i^{z})/2$ is the projector onto the state $\ket{1}_i$. 
The summation of the interaction term in \eqref{eq:H} runs over all pairs
$\{i,j\}\in E$ of sites which are connected by an edge of the blockade graph
$\G$. Clearly there exists a one-to-one correspondence between Hamiltonians of
the form~\eqref{eq:H} and vertex-weighted graphs $\G$ for every transverse
field $\Omega$. However, note that not every abstract graph can be realized as
blockade graph of a spatial structure in two or three dimensions.
The Hamiltonian \eqref{eq:H} belongs to the class of transverse field Ising
models in the presence of a site dependent longitudinal field $\Delta_i$. The
complexity and versatility of this family of Hamiltonians (e.g., to realize
non-abelian topological phases) is hidden in the spatial arrangement of the
two-level systems, and therefore the choice which pairs of two-level systems
are in blockade.

We now present a generic construction of a family of blockade graphs $\G$ such
that the ground states of the corresponding Hamiltonians \eqref{eq:H} realize
all topological phases of Kitaev's quantum double
models~\cite{Kitaev2003,Bombin2008,Cui2015,Komar2017,Cui_2020,Simon2023},
defined on a honeycomb lattice with trivalent sites (\cref{fig:fig1}).
These models are derived from a finite group $G$ of order $N = |G|$. To each
element $g\in G$ a quantum state  $\ket{g}$ is assigned on every \emph{link} of
the lattice; thus the Hilbert space of the quantum double on a periodic
honeycomb lattice with $L$ unit cells is $\H_G = (\mathbb{C}^N)^{\otimes 3 L}$.
In addition, we assign an orientation to each link; it is convenient to choose
all links incoming (outgoing) on alternating lattice sites, see
\cref{fig:fig1}~(a). Then, the Hamiltonian of the quantum double model can be
written as
\begin{equation}
    H_G
    = - J_s  \sum_{\text{Sites $s$}} B_{s}  - J_p \sum_{\text{Faces $p$}}  
    \underbrace{%
        \frac{1}{N}\sum_{h\in G} A_{p}(h)
    }_{A_{p }}
    \label{eq:QD}
\end{equation}
with $J_s>0$ and $J_p>0$. (Note that we are using the convention of
Simon~\cite{Simon2023} -- which is formulated on the \emph{dual} lattice of the
original model introduced by Kitaev~\cite{Kitaev2003}.)

The \emph{site operators} $B_s$ are local projectors onto configurations where
the three states $\ket{g_1}$, $\ket{g_2}$ and $\ket{g_3}$ on the links adjacent
to site $s$ obey the ``no-flux'' constraint $g_1 g_2 g_3 = \nel$ with $\nel\in
G$ the identity of the group $G$. 
Note that in general the group $G$ is non-abelian and therefore the order of
multiplication is important -- here we follow the convention with clockwise
multiplication on sites with outward pointing arrows, and counter-clockwise
multiplication on sites with inward pointing arrows [\cref{fig:fig1}~(a)].  
The \emph{plaquette operators} $A_p(h)$ flip between such configurations by
changing each state $\ket{g_l}$ on links $l\in p$ bounding plaquette $p$ to
$\ket{h g_l}$ or $\ket{g_l h^{-1}}$: if the arrow on the link is parallel to
the counter-clockwise orientation of the loop surrounding the plaquette, the
action is $\ket{h g_l}$ on this link (= the plaquette is on the left of the
link arrow); otherwise, the action is $\ket{g_l h^{-1}}$ (= the plaquette is on
the right of the link arrow), see \cref{fig:fig1}~(a). The sum $A_p$ of
$A_p(h)$ over all group elements $h\in G$ defined in \cref{eq:QD} is then again
a projector.

It is straightforward to show that the projectors $B_s$ and $A_{p}$ all commute
with each other, and the ground state of Hamiltonian \eqref{eq:QD} on a planar
patch with open boundaries (with suitable boundary conditions) is the (unique)
\emph{equal-weight superposition} of all configurations that satisfy the local
constraints imposed by the site terms $B_s$. 
For the abelian group $G= \Z_2$, this model yields the toric code on a
honeycomb lattice, whereas for a general group $G$, the ground state is the
fixpoint wave function of a topological phase characterized by the quantum
double $\double{G}$ of the group $G$~\cite{Dijkgraaf1991,Majid1998,Kitaev2003}.
Notably, for non-abelian groups $G$, these topological phases feature
non-abelian anyonic excitations and can be used for universal topological
quantum computation~\cite{Cui2015}.

Our next goal is to describe a construction $G\mapsto\G$ of a blockade graph
$\G$ for an arbitrary finite group $G$, such that the ground state of the
associated blockade Hamiltonian~\eqref{eq:H} for weak $\Omega\ll\Delta_i,U_0$
is in the topological phase of the quantum double Hamiltonian~\eqref{eq:QD}.
We first explain the rationale of our approach and then describe the detailed
construction below.
We start with $\Omega=0$ and construct a blockade graph $\G=(V,E,W)$
such that there is a one-to-one correspondence between the degenerate ground
states of $H_\G$ and $H_G$ for $J_p=0$ (this ground state manifold is
extensively degenerate).
Crucially, the construction of $\G$ ensures the existence of a group of local
graph automorphisms that translate to local symmetries of the Hamiltonian
$H_\G$. The generators of this local symmetry act on the ground state space of
$H_\G$ exactly like the operators $A_p(h)$ act on the ground states space of
$H_G$ (for $J_p=0$). 
Then we turn on a weak field $\Omega$ and show that the (now unique) ground
state of $H_\G$ exhibits topological order and is in the same phase as the
ground state of the quantum double $H_G$ for finite $J_p>0$. 
Thus the core idea for this realization of Kitaev's quantum double models is to
implement the site terms $B_s$ via diagonal two-body interactions, while the
terms $A_p(h)$ appear as a local symmetry of the microscopic Hamiltonian.
Together with a uniform transverse field $\Omega$, the latter implies the
perturbative generation of terms $A_p(h)$ in the low-energy effective
Hamiltonian of the system.
Note that for $G=\Z_2$ this approach leads to the model presented in
Ref.~\cite{Maier2025}, and large parts of the proof of topological order
presented there can be straightforwardly transferred to the general
construction for arbitrary groups $G$ presented here.

We now describe the construction of the graph $\G$ for the Hamiltonian $H_\G$
in detail. To this end, we consider a finite group $G$ with $N=|G|$ elements.
As shown in \cref{fig:fig1}~(b), we distinguish between two-level systems
placed on the \emph{links} to implement the logical states $\ket{g}_l$, and
two-level systems on the \emph{sites} to realize the site constraints. 
For simplicity, we start with the construction of a blockade graph $\G_s$ for a
single site $s$ of the honeycomb lattice and its three adjacent links, see
\cref{fig:fig1}~(b). On the links we place $N$ two-level systems ($N$
\emph{vertices} in graph language). To each vertex on link $l$ we associate a
unique group element $g\in G$ and denote this vertex by $v_l^{g}$. This yields
a Hilbert space of dimension $2^N$ on each link and we identify the states
$\ket{g}_l$ of the quantum double as basis of an $N$-dimensional subspace
spanned by 
\begin{equation}
    \ket{g}_l \equiv 
    |\,1\,1\ldots\,\underset{\mathclap{\substack{\uparrow\\\text{Vertex $v_l^{g}$}}}}{0}\,\ldots 1\,1\,\rangle_l\,,
\end{equation}
i.e., all two-level systems on the link are excited to state $\ket{1}$ except
for vertex $v_l^{g}$ which is in the de-excited state $\ket{0}$. There are no
edges (= blockades) connecting the vertices on the links among themselves, and
we choose the detunings uniformly $\Delta_l\equiv \Delta_{i} = \Delta$ on all
vertices of the link. 
Next, we place $N^2$ two-level systems with detuning $\Delta_s \equiv
\Delta_{i}=4 \Delta$ on the \emph{site}, such that every pair has a distance
smaller than the blockade radius $\Rb$. Hence the blockade graph on a site with
$N^2$ vertices is \emph{fully connected} and has uniform weight. 

The last and most important step is to define the edges of the blockade graph
that connect the $3\times N$ vertices on the three links with the $N^2$
vertices on their common site. 
To this end, we label the three links adjacent to site $s$ by $l_1, l_2, l_3$
with order as indicated in \cref{fig:fig1}~(a), i.e., clockwise for outgoing
arrows and anti-clockwise for incoming arrows. Furthermore, we assign to each
(ordered!) pair of group elements $g_1,g_2\in G$ a unique vertex on the site
$s$ and denote it by $w_{s}^{g_1 g_2}$. Then the vertex $v_{l_1}^{g_1}$ on the
first link connects to the $N$ vertices $w_{s}^{g_1h}$ for all $h\in G$, while
the vertex $v_{l_2}^{g_2}$ on the second link connects to all vertices
$w_{s}^{h g_2}$ for $h \in G$. Finally, the vertex $v_{l_3}^{g_3}$ on the third
link connects to all vertices $w_s^{g_1 g_2}$ which satisfy the condition
$g_3=(g_1 g_2)^{-1}$. Note that for every $g_3 \in G$ there are exactly $N$
vertices on the site that satisfy this condition. 
In summary, each vertex $w_s^{g_1g_2}$ on a site has an edge with one vertex on
each adjacent link: $v_{l_1}^{g_1}$, $v_{l_2}^{g_2}$, and $v_{l_3}^{g_3}$, with
the three group elements satisfying $g_1 g_2 g_3=\nel$. It is important to
point out that this construction is invariant under cyclic permutations of the
links since $g_1 g_2 g_3= g_2 g_3 g_1 = g_3 g_1 g_3$, i.e., the construction is
invariant under the choice of labeling. In particular, the (apparent)
distinction of one of the three links [\cref{fig:fig1}~(b)] is an artefact of
the construction and not reflected in the graph. Furthermore, a mapping between
sites with incoming arrows and sites with outgoing arrows is possible by
exchanging the labeling between two links.

For such a single site with three adjacent links, the ground states of the
Hamiltonian $H_{\G_s}^0$ (with $U_0 > \Delta_s$) are characterized by a single
vertex $w_s^{g_1g_2}$ in state $\ket{1}$ and all other vertices on the site in
state $\ket{0}$ due to the on-site blockade interactions; we denote this state
by $\ket{w_{s}^{g_1g_2}}$.
Meanwhile, on each adjacent link, the (unique) vertex connected to the excited
vertex $w_{s}^{g_1 g_2}$ is in state $\ket{0}$ due to the blockade, while all
other vertices on the link are in state $\ket{1}$. Therefore each link is in
state $\ket{g_l}_l$ with the constraint $g_1 g_2 g_3 = \nel$.  Hence all states
in the degenerate ground state manifold of a single site can be written as
$\ket{g_1,g_2,g_3}\ket{w_{s}^{g_1g_2}}$ with $g_1 g_2 g_3 = \nel$, and there is
a one-to-one correspondence to the eigenstates of the projection operator $B_s$
with eigenvalue $+1$ of the quantum double~\eqref{eq:QD}. The ground state
energy of $H^0_{\G_s}$ for a single site with three links is $E=-(3N+1)\Delta$.
It is important to stress that here the full Hilbert space of a site with three
links is $2^{3N + N^2}$-dimensional and therefore much larger than the Hilbert
space of a quantum double model with dimension $N^3$. In particular the
one-to-one mapping is only valid for \emph{ground states}, while our blockade
model has a much richer excitation structure. 

Before we extend this analysis to the full honeycomb lattice, we discuss the
\emph{graph automorphisms} of such a single site (and its three adjacent
links). Automorphisms of vertex-weighted graphs are permutations of vertices
that map adjacent vertices to adjacent vertices with the same weights. The set
of all automorphisms of a graph forms its \emph{automorphism group} (with
concatenation of automorphisms as multiplication). Due to the high symmetry of
the construction, the automorphism group of the graph $\G_s$ turns out to be
$(G\times G )\rtimes \operatorname{Aut}(G)$ with $\operatorname{Aut}(G)$ the
group of group automorphisms of $G$. In the following, we focus on an important
subgroup of these graph automorphisms, a discussion of the full automorphism
group can be found in \cref{sec:aut_single}.  

Since each vertex on a link is associated with a unique group element $g\in G$,
the group $G$ induces a permutation $\varphi_{l}(h,k)$ of vertices on link $l$
via
\begin{equation}
    \varphi_{l}(h,k):
    v_{l}^{g} \mapsto v_l^{h g k^{-1}}  
    \quad \text{for every $h,k \in G$.}
    \label{eq:varphi}
\end{equation}
Note that these permutations map the states $\ket{g}_l$ on link $l$ to the
states $\ket{h g k^{-1}}_l$ on the same link. 
Correspondingly, there is a permutation $\phi_{s}(h_1,h_2,h_3)$ of the vertices
on site $s$ defined via
\begin{equation}
    \phi_{s}(h_1,h_2,h_3): 
    w_{s}^{g_1 g_2} \mapsto w_{s}^{h_1 g_1 h_2^{-1} \: h_2 g_2 h_3^{-1}} 
    \label{eq:phi}
\end{equation}
for every triple $h_1,h_2,h_3\in G$.
The permutations \eqref{eq:varphi} and \eqref{eq:phi} allow us to define a
permutation $\Phi_s$ which acts on the vertices of site $s$ and all three
adjacent links, and turns out to be a graph automorphism of $\G_s$ parametrized
by three group elements:
\begin{align}
    \Phi_s(h_1,h_2,h_3) :=
    &\;\varphi_{l_1}(h_1,h_2) 
    \cdot \varphi_{l_2}(h_2,h_3)
    \cdot \varphi_{l_3}(h_3,h_1)
    \nonumber\\
    &\quad\times\phi_s(h_1,h_2,h_3)\,.
    \label{eq:Phi}
\end{align}
Crucially, the automorphism $\Phi_s$ leaves the constraint $g_1 g_2 g_3 = \nel$
invariant. Note that every permutation of vertices (two-level systems) induces
a unitary operator on the Hilbert space; by abuse of notation, we denote
permutations and induced unitaries with the same symbol. Then the (unitary
action of) $\Phi_s$ maps ground states of $H^0_{\G_s}$ onto each other. In
addition, these operations generate a single \emph{orbit}, i.e., starting from
an arbitrary ground state $\ket{g_1,g_2,g_3}\ket{w_s^{g_1g_2}}$ one can reach
any other ground state by applying these graph automorphisms. According to the
definition put forward in Ref.~\cite{Maier2025}, this makes the blockade
structure $\G_s$ \emph{fully-symmetric}, a special feature that ensures that
the ground states of $H_{\G_s}$ for $\Omega \neq 0$ contain \emph{equal-weight
superpositions} of the degenerate ground states of $H^0_{\G_s}$.
Of special interest in the following are graph automorphism
$\Phi_s(h)\equiv\Phi_s(\nel,h,\nel)$ with two elements $h_i$ equal to the
identity~$\nel$. Under these permutations, the state $\ket{g}_{l_3}$ on one
link (here $l_3$) remains invariant, whereas the other two links (here $l_1$
and $l_2$) transform as $\Phi_s(h)\ket{g}_{l_1}=\ket{gh^{-1}}_{l_1}$ and
$\Phi_s(h)\ket{g}_{l_2}=\ket{h g}_{l_2}$, respectively. These automorphisms
allow us to construct local \emph{plaquette automorphisms} below.

We close this section by constructing the blockade graph $\G$ on the full
honeycomb lattice. The most important aspect is that the orientation of the
sites of the honeycomb lattice alternates between \emph{clockwise} (outgoing
arrows) and \emph{anti-clockwise} (incoming arrows), \cref{fig:fig1}~(a). The
construction of both types of sites follows the recipe detailed above: For each
site, we label the adjacent links according to its orientation, and connect the
vertices on the links to the vertices on the site as explained above. 
The site graphs are then joined by identifying the vertices on common links. To
ensure that this construction leads to a gapped ground state manifold that
realizes the intended constraints on every site, one adds up the detunings of
vertices that are shared between sites. This leads to the link detunings
$\Delta_l=2\Delta$ in the bulk of the honeycomb lattice. A mathematically
rigorous discussion of this construction (dubbed \emph{amalgamation}) can be
found in Ref.~\cite{Stastny2023a}. Note that for \emph{open} boundaries, the
above procedure leads to link detunings $\Delta_l=2\Delta$ in the bulk, but
only $\Delta_l=1\Delta$ for dangling links on the boundary.

%

In summary, this construction provides the required graph $\G$ describing the
Hamiltonian $H_\G$ in \cref{eq:H}. The remainder of this paper is dedicated to
studying the ground state properties of $H_\G$ and discussing local
modifications needed for braiding anyons.

\begin{figure}[tb]
    \centering
    \includegraphics[width=1.0\linewidth]{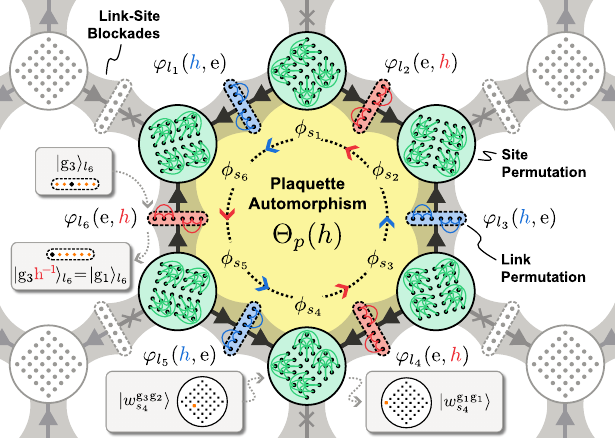}
    \caption{%
        \CaptionMark{Plaquette automorphisms.}
        The blockade graph $\G$ (shaded gray) constructed in \cref{fig:fig1}
        allows for local automorphisms $\Theta_p(h)$ for each $h\in G$ that
        affect only the vertices on links and sites bounding a single plaquette
        $p$. The automorphism decomposes into a product of permutations of
        vertices $\varphi_l$ and $\phi_s$ on links $l\in p$ and sites $s\in p$,
        respectively. The link permutations depend on whether the link
        orientation is parallel (anti-parallel) to the orientation of the
        plaquette (blue and red arrows). The definition of these permutations
        is given in the text. Describing the site permutations $\phi_s$ is most
        convenient if the site vertices are labeled by the pairs of group
        elements of the two adjacent edges that bound the plaquette.  (To apply
        the construction detailed in \cref{sec:model}, the radial edge on every
        site of the plaquette is identified with $l_3$ [marked by a cross];
        this is indicated by rotated and mirrored arrays of site vertices.) As
        a blockade graph automorphism, $\Theta_p(h)$ induces a symmetry of the
        blockade Hamiltonian $H_\G$ on the full Hilbert space.  As such, the
        ground state manifold remains invariant, and the representation induced
        by the link permutations acts by left and right group multiplications
        on the ground states $\ket{\vec g}$ in $\H^0_\G$.  The insets show
        exemplary actions of the permutations on states (black vertices:
        $\ket{0}$, orange vertices: $\ket{1}$). Note that links (and sites)
        that are not adjacent to $p$ are unaffected by the automorphism (gray).
    }
    \label{fig:fig2}
\end{figure}

\section{Ground state properties}
\label{sec:gsprop}


The graph $\G$ on a honeycomb lattice with periodic boundary conditions and $A$
units cells contains $|V|=(3 N + 2 N^2) A$ vertices, so that the Hilbert space
$\H_\G$ of our model is $2^{|V|}$-dimensional. The ground state manifold
$\H^0_\G$ of $H^0_\G$ for $U_0 > 4 \Delta$ is characterized by a state
$\ket{g_l}_l$ on each link, and a state $\ket{w_{s}^{g_i g_j}}$ on each site,
such that the group elements on the links $\vec{g}\equiv\{g_l\}$ satisfy the
constraints of the projectors $B_s$ for all sites (namely $g_1 g_2 g_3=\nel$
for the three links connected to a site).  
We denote states which satisfy this constraint on all sites as $\ket{\vec g}$;
they span the ground state manifold $\H^0_\G$. States in the orthogonal
complement $\H^\perp_\G \equiv (\H^0_\G)^\perp$ are denoted by $\ket{\vec n}$.
This demonstrates the one-to-one mapping between $\H^0_\G$ and the degenerate
ground states of the quantum double Hamiltonian $H_G$ for $J_p=0$. Note that
all states $\ket{\vec{n}}\in \H^\perp_\G$ exhibit an excitation gap of at least
$2 \Delta$.

Next, we show the existence of local graph automorphisms on each plaquette
(\cref{fig:fig2}). For each group element $h\in G$ we can define a permutation
$\Theta_p(h)$ of vertices that belong to the links and sites surrounding a
single plaquette $p$. To define $\Theta_p(h)$, we choose a labeling for each
affected site $s$ such that $l_3$ is the link that points outwards (is not part
of the plaquette). Then the permutation is defined via the permutations
\eqref{eq:varphi} and \eqref{eq:phi} as
\begin{equation}
    \Theta_p(h) 
    :=  
    \prod_{l\in \uparrow_{\!p}} \varphi_{l}(\nel,h) 
    \prod_{l'\in {}_p\!\uparrow} \varphi_{l'}(h,\nel)  
    \prod_{s \in p} \phi_s(\nel,h,\nel)\,,
    \label{eq:plaquette}
\end{equation}
where $\uparrow_{\!p}$ (${}_p\!\!\!\uparrow$) labels the links with the
plaquette to the right (left) of the arrow, and $s \in p$ denotes sites on the
boundary of plaquette $p$.
It is easy to convince oneself that this permutation is a local graph
automorphism of $\G$ for every $h \in G$. To see this, recall that every
\emph{single} site graph $\G_s$ has automorphisms $\Phi_s(\nel,h,\nel)$ which
leave one link ($l_3$) invariant. $\Theta_p(h)$ is the result of chaining six
of these permutations along common links bounding a plaquette. Hence we refer
to $\Theta_p(h)$ as \emph{plaquette automorphisms}.
Remarkably, plaquette automorphisms on different plaquettes commute with each
other -- in analogy to the operators $A_p(h)$ of the quantum double
model~\eqref{eq:QD}. 
Since $\Theta_{p}(h)$ are automorphisms of $\G$, the induced unitary
representations [which we also denote by $\U_{p}(h)$] give rise to local
symmetries of the blockade Hamiltonian $H_\G^0$, i.e.,
\begin{align}
    \U_{p}(h)H_\G^0=H_\G^0\U_{p}(h)
    \quad\text{for all plaquettes $p$ and $h\in G$.}
    \nonumber
\end{align}
As a consequence, all $\U_p(h)$ leave the ground state manifold $\H^0_\G$
invariant and act on ground states $\ket{\vec g}$ exactly like the operators
$A_p(h)$ act on the corresponding ground states of the quantum double
model~\eqref{eq:QD} for $J_p=0$. However, in contrast to the operators
$A_{p}(h)$, the operators $\U_p(h)$ affect not only states on links but also
states on sites, and furthermore act non-trivially on excited states $\ket{\vec
n}\in \H^\perp_\G$.  

We can now discuss the ground state of the full Hamiltonian $H_\G$ with finite
transverse field $\Omega\neq 0$. Note that for uniform $\Omega$ [recall
\cref{eq:H}] the operators $\U_p(h)$ remain symmetries of the full Hamiltonian
$H_\G$.
When a finite patch of the honeycomb lattice is embedded on a topologically
trivial surface with open boundaries (and ``dangling'' edges), the plaquette
automorphisms $\U_p(h)$ map all states $\ket{\vec g}$ in the ground state
manifold $\H^0_\G$ onto each other. This follows from the analogous property of
Kitaev's quantum double models~\cite{Kitaev2003}. Thus the graph automorphisms
$\Theta_p(h)$ generate a single orbit and the complete blockade structure
described by $\G$ is fully symmetric. This allows us to apply Theorem~1 from
Ref.~\cite{Maier2025} which states that in this case the ground state
$\ket{\Omega}$ of $H_\G$ is unique and has the form 
\begin{align}%
    \ket{\Omega}
    =\lambda(\Omega) \sum_{\ket{\vec g} \in \H^0_\G}  \ket{\vec g}
    +\sum_{\ket{\vec n} \in \H^\perp_\G} \eta_{\vec{n}}(\Omega)\ket{\vec n}\,.
    \label{eq:ground_state}
\end{align}%
The first term describes an equal-weight superposition of all product basis
states $\ket{\vec g}$ in $\H^0_\G$, while the second term describes admixtures
of additional states due to the coupling by the transverse field. Note that the
ground state~\eqref{eq:ground_state} is in the subspace $\H^S_\G$
(\emph{symmetric sector}) of all states with eigenvalue $+1$ for all symmetries
$\U_p(h)$, i.e.,
\begin{equation}
    \ket{\Omega}\in\H^S_\G
    = \left\{\,
        \ket{\psi}\,|\,\forall{p,h}:\U_p(h)\ket{\psi} = \ket{\psi} 
    \,\right\} 
    <\H_\G\,.
\end{equation}

Note that for \emph{periodic} boundary conditions, there is also a unique
ground state in the symmetric sector. However, equal-weight superpositions of
states in $\H^0_\G$ are only guaranteed within the orbits generated by
plaquette automorphisms \footnote{%
    On a torus, not all ground state configurations $\ket{\vec g}\in\H_{\G}^0$
    (satisfying the no-flux constraint on every site) can be generated by
    applying plaquette (or loop) automorphisms on the state
    $\ket{\text{\bfseries e}}$ with all $g_l=\nel$. This partitions $\H_{\G}^0$
    into \emph{topological sectors} -- and the amplitudes in the (unique)
    ground state between these sectors are not necessarily fixed by symmetries. 
}.

We stress that due to the admixtures in \cref{eq:ground_state}, one cannot
immediately conclude that $\ket{\Omega}$ is topologically ordered.  However,
for weak $\Omega \ll \Delta$, we can follow the arguments from
Ref.~\cite{Maier2025} to show that the state~\eqref{eq:ground_state} is
topologically ordered and characterized by the quantum double model
$\double{G}$. The rigorous proof for $G=\Z_2$ is given in Ref.~\cite{Maier2025}
and its extension to arbitrary groups $G$ is detailed in \cref{sec:prooftop}.
Here we only sketch the gist of the proof. It requires periodic boundary
conditions and makes use of the extended auxiliary Hamiltonian
\begin{equation}
    \tilde{H}_\G(\Omega, \omega) 
    := 
    \begin{aligned}[t]
        H^0_\G 
        \;+\;&\Omega \sum_{i\in V} \sigma_i^x 
        \\[-15pt]
        \;+\;&\omega\!\!\!\sum_{\text{Faces $p$}}\Bigl[
            \mathds{1}-\overbrace{\frac{1}{|G|}\sum_{h \in G}\U_p(h)}^{\mathclap{\text{Projector $\U_p$}}}
        \Bigr]\,.
        \end{aligned}
        \label{eq:exH}
\end{equation}
For $\omega=0$ and $\Omega=0$ we recover the extensive ground state degeneracy
of $\H^0_\G$ (now with topological degeneracies). In a first step, we turn on
$\omega$ with $0<\omega<\Delta$. This lifts the extensive degeneracy and the
ground states become equal-weight superpositions of states $\ket{\vec g}$ in
the orbits generated by plaquette symmetries $\U_p(h)$. The new ground state
manifold $\H^\omega_\G=\H^0_\G\cap\H^S_\G$ has only topological degeneracies
and is separated by a finite excitation gap of order $\omega$ from excited
states. The ground states in $\H^\omega_\G$ can be mapped by local unitaries to
the ground states of Kitaev's quantum double model~\eqref{eq:QD} and are
therefore topological ordered; note that they are also in the symmetric sector
$\H^S_\G$ since $\U_p(h)\U_p=\U_p$ for all $h\in G$.
The ground state manifold $\H^\omega_\G$, together with the Hamiltonian $\tilde
H_\G(0,\omega)$, shows that the latter is \emph{frustration-free} and satisfies
a condition called \emph{local topological quantum
order}~\cite{Bravyi2010,Michalakis2013,Bravyi2011}.
Given these features, it can be shown that its gap is stable in the
thermodynamic limit against arbitrary small, local
perturbations~\cite{Michalakis2013}, which implies that the ground state
manifold remains in the same gapped topological phase~\cite{Chen2010}. In
particular, we can turn on a small transverse field $\Omega \ll \omega,\Delta$
and the new ground state manifold $\H^{\Omega,\omega}_\G$ of the Hamiltonian
$\tilde H_\G(\Omega,\omega)$ remains topologically ordered. (Note that the
topological ground state degeneracy can be lifted by finite size effects.)
Using the arguments from above, we know that $H_\G$ has a unique ground state
$\ket{\Omega}$ in the symmetric sector $\H^S_\G$. Since $H_\G$ commutes with
$\tilde{H}_\G(\Omega,\omega)$ and $\ket{\Omega}$ is annihilated by the
positive-semidefinite auxiliary term in \cref{eq:exH}, $\ket{\Omega}$ must also
be the (unique) ground state of $\tilde{H}_\G(\Omega,\omega)$, i.e.,
$\ket{\Omega}\in \H^{\Omega,\omega}_\G$. This demonstrates that the ground
state $\ket{\Omega}$ of $H_\G$ for small enough $\Omega\ll \Delta$ is in the
same topological phase as Kitaev's quantum double model~\eqref{eq:QD}. 

The gap stability argument above also implies an excitation gap of order
$\Delta$ to all states in the symmetric sector $\H^S_\G$ (this includes ``flux
excitations'' of the quantum double, see \cref{sec:fluxlattice} below).
Consequently, the ground states are protected by a gap against any perturbation
which respects the plaquette symmetries $\U_{p}(h)$. 
  %
  %
An important question is whether the Hamiltonian $H_\G$ also exhibits a gap
between states in the symmetric sector $\H^S_\G$ and its complement
$(\H^S_\G)^\perp$ (this includes ``charge excitations''). Such a gap is at
least suggested by a (heuristic) Schrieffer-Wolff
transformation~\cite{Bravyi_2011} that predicts an effective Hamiltonian within
the manifold $\H^0_\G$ of the form
\begin{equation}
    H_\sub{eff} 
    \sim 
    - \Delta_c \sum_{p}\sum_{h\in G\backslash\{\nel\}} \U_{p}(h)
    \label{eq:SW}
\end{equation}
with coupling $\Delta_c\sim\Delta\left(\frac{\Omega}{\Delta}\right)^{K}$ and
$K$ the number of two-level systems that flip their state under the action of
$\U_p(h)$ (i.e., $K=24$ for the honeycomb lattice). However, a rigorous proof
of the existence of a finite charge gap is technically challenging and deferred
to an upcoming paper \cite{Gap2025}.

\section{Flux anyons and Wilson loops} 
\label{sec:fluxlattice}

The types of anyonic excitations of the quantum double $\double{G}$ are
classified by the irreducible representations of the Drinfeld
double~\cite{Drinfeld_1988,Kitaev2003}. A systematic characterization of the
anyons is given by the following construction~\cite{Dijkgraaf1991,Simon2023}.
First, one picks a conjugacy class $C$ of the group $G$ with an arbitrary
representative $r_C\in C$, i.e., $C =\{g r_C g^{-1}\,|\,g\in G\}$. Next, one
considers the centralizer $Z_G(r_C)=\{g\in G\,|\,g r_C=r_C g\}$ of this
representative, i.e., the subgroup of $G$ of all elements that commute with
$r_C$. The different anyon types of the quantum double $\double{G}$ can then be
labeled by pairs $[C,R]$ of a conjugacy class $C$ and an irreducible
representation $R$ of its centralizer $Z_G(r_C)$. (Note that the irreducible
representations are independent of the representative $r_C$ since centralizers
of different representatives are isomorphic via conjugation.) The quantum
dimension $d_{[C,R]}$ of an anyon turns out to be the product of the number of
elements in the conjugacy class $C$ and the dimension $d_R$ of the irreducible
representation $R$: $d_{[C,R]}=|C|d_R$.
Following the nomenclature of a lattice gauge theory, one distinguishes
\emph{flux anyons} $[C,E]$ given by a conjugacy class $C$ and the trivial
representation $E$, and \emph{charge anyons} $[C_\nel,R]$ given by an
irreducible representation $R$ of the group $G=Z_G(\nel)$ and the conjugacy
class $C_\nel\equiv\{\nel\}$ of the identity; anyons $[C,R]$ with $C\neq
C_\nel$ and $R\neq E$ carry flux and charge and are referred to as
\emph{dyons}.

The Hamiltonian $H_\G$ with its blockade graph $\G$ (defined via the site
graphs $\G_s$) enforces the zero-flux constraint $g_1 g_2 g_3 = \nel$ on every
site by construction. This means that flux excitations (which violate this
constraint) are energetically penalized. In the following, we present a
straightforward generalization of the site graph $\G_s$ that allows for the
preparation of states with an arbitrary flux anyon trapped on site $s$, while
maintaining a well-defined excitation gap to other flux sectors (and the
vacuum).
As explained above, flux anyons $[C,E]$ are characterized by a conjugacy class
$C$ of the group $G$. One can create such a localized flux on a given site $s$
by enforcing the modified constraint $g_1 g_2 g_3 \in C$ on the three links
connecting to this site. In the following, we describe a generalization
$\G_s[\bm\Delta]$ of the site graph $\G_s$ which allows for the preparation of
any flux anyon $[C,E]$ on site $s$ in the ground state. 
Instead of $N^2=|G|^2$ two-level systems on the site, we now need $N^3$
two-level systems, all of which are in blockade, so that the induced graph is
fully connected and only one of these two-level systems can be excited at any
time; we label these vertices by $w_s^{g_1 g_2 r}$ with $g_1,g_2,r\in G$. Next,
we choose detunings $\bm\Delta\equiv\{\Delta_C\}$ for all conjugacy classes $C$
of $G$; these determine by how much the energy of a flux anyon $[C,E]$ on this
site is lowered, and therefore play the role of site-local chemical potentials
for flux anyons. To this end, we set the detuning of vertex $w_s^{g_1 g_2 r}$
to $\Delta_C$ where $C$ is the conjugacy class of $r$ (i.e.~$r\in C$), such
that for every conjugacy class $C$ there are $|C|N^2$ vertices with the same
detuning $\Delta_C$.
We can now define the edges of the graph $\G_s[\bm\Delta]$ that connect the
vertices on the adjacent links to the vertices on the site. Each vertex
$v_{l_1}^{g_1}$ on link $l_1$ is connected to the $N^2$ vertices $w_s^{g_1 h
r}$ on the site, while each vertex $v_{l_2}^{g_2}$ on link $l_2$ is connected
to the $N^2$ vertices $w_s^{h g_2 r}$. Finally, we connect every vertex
$w_{s}^{g_1 g_2 r}$ on the site to the vertex $v_{l_3}^{g_3}$ on link $l_3$
with $g_3 = (g_1 g_2)^{-1}r$. As a consequence, each configuration which
satisfies the constraint $g_1 g_2 g_3 \in C$, with $C$ a conjugacy class of
$G$, has the same energy contribution $-\Delta_C$. To prepare a specific flux
anyon $[C_r,E]$ on such a site, one sets $\Delta_{C_r}=4\Delta$ and
$\Delta_C=-1$ for all $C\neq C_r$. This lowers the energy of the flux anyon
$[C_r,E]$ and separates it by a gap from all other flux excitations (and the
vacuum).
It is crucial that this site graph respects all local graph automorphisms of
$\G_s$ discussed in \cref{sec:model}, and a Hamiltonian $H_{\tilde\G}$ derived
from a graph $\tilde\G$ that uses the generalized site graph
$\G_s[\bm\Delta_s]$ on some sites (with potentially different detunings
$\bm\Delta_s$) is still symmetric under the local plaquette automorphisms
$\U_{p}(h)$ introduced in \cref{sec:gsprop}. Only global symmetries derived
from outer automorphisms of $G$ are modified by this construction; for details
see \cref{sec:aut_single}.

By construction, the ground states of the Hamiltonian $H^0_{\tilde\G}$ in the
symmetric sector $\H^S_{\tilde\G}$ are described by flux anyons trapped on the
sites with a generalized site graph $\G_s[\bm\Delta_s]$, and their energy can
be tuned by the detunings $\bm\Delta_s$.  It is important to point out that the
ground state manifold of a system with non-trivial flux anyons can have a
topological ground state degeneracy even on topologically trivial surfaces.
Furthermore, there is a finite excitation gap to states within the  symmetric
sector $\H^S_{\tilde\G}$. Therefore, this construction can be used to prepare
states with a preferred flux pattern on the lattice as \emph{ground states} of
the blockade Hamiltonian $H^{0}_{\tilde\G}$. We will use this feature in
\cref{sec:braiding} to implement the braiding of flux anyons. Note that the
above construction can be significantly simplified if the goal is to trap a
\emph{specific} flux anyon $[C_r,E]$. In this case, one can omit all site
vertices except the $|C_r|N^2$ vertices that belong to the relevant conjugacy
class (and, if required, the $N^2$ vertices for the vacuum $C_\nel$). 

Following the discussion in \cref{sec:gsprop} on the robustness of topological
order for a finite transverse field $\Omega \neq 0$ in a flux-free state on a
torus, we expect that the topological properties and ground state degeneracy of
the modified Hamiltonian $H_{\tilde\G}$ in the symmetric sector
$\H^S_{\tilde\G}$ are again robust in the presence of a small but finite
transverse field $\Omega \neq 0$, even in the presence of imprinted flux
anyons. In this case, the splitting of the ground state degeneracy is
exponentially suppressed in the distance between the flux anyons (instead of
the system size) and therefore requires that flux anyons are far apart.

\begin{figure}[tb]
    \centering
    \includegraphics[width=1.0\linewidth]{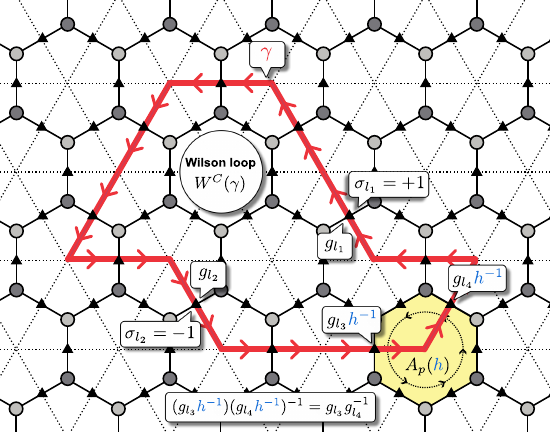}
    \caption{%
        \CaptionMark{Wilson loops.}
        Wilson loop operators $W^C(\gamma)$ are defined on closed, oriented
        loops $\gamma$ (red) on the dual lattice (dotted). For the evaluation
        of $W^C(\gamma)$ in the product basis $\ket{\vec g}\in \H_G$, the
        product of all group elements $g_l^{\sigma_l}$ on links $l$ crossed by
        $\gamma$ is needed (with multiplication order from left to right along
        the loop's orientation); here $\sigma_l=+1$ ($\sigma_l=-1$) if the
        crossed edge $l$ points to the left (right) of $\gamma$ when following
        its orientation. This construction ensures that $W^C(\gamma)$ commutes
        with the plaquette operators $A_p(h)$ (plaquette in the bottom right
        corner).
    }
    \label{fig:fig4}
\end{figure}


In analogy to lattice gauge theories, the flux anyons  are conveniently probed
and characterized by \emph{Wilson loops}. In the case of quantum doubles,
Wilson loops are closely related to closed charge-like \emph{Ribbon operators}
which probe the enclosed flux~\cite{Kitaev2003,Simon2023}. Wilson loop
operators are associated to a closed, oriented loop $\gamma$ on the \emph{dual}
lattice, see~\cref{fig:fig4}. In the product basis $\ket{\vec g}\in \H_G$ of
the quantum double model~\eqref{eq:QD}, they are defined by
\begin{align}
    W^{R}(\gamma) 
    := 
    \chi_R \biggl(\;\prod_{l \in \gamma} g_{l}^{\sigma_l}\biggr)
    \label{eq:Wilsonloop} 
\end{align}
%
%
with $R$ an irreducible representation of the group $G$ and $\chi_{R}$ its
character. The product runs over all links crossed by the closed loop $\gamma$
(with multiplication order from left to right along the loop's orientation).
The exponents $\sigma_l\in \{-1,1\}$ are defined such that $\sigma_{l}=+1$
($\sigma_{l}=-1$) if the arrow of the crossed link points left (right) when
following the loop along its orientation. This convention ensures that
$W^R(\gamma)$ commutes with all plaquette operators $A_p(h)$ (\cref{fig:fig4}).

For fixpoint ground states of a quantum double model -- which in our case
correspond to the ground states of $\tilde H_{\tilde \G}(0,\omega)$ or,
equivalently, the ground states of $H^0_{\tilde \G}$ in the symmetric sector --
the Wilson loops \eqref{eq:Wilsonloop} are independent of the shape of the
loop, and only depend on the enclosed flux.
For example, a loop $\gamma$ that encloses a flux $[C,E]$ yields $\langle
W^{R}(\gamma) \rangle=\chi_R(r_C)$ with $r_C\in C$, i.e., the measurement over
all irreducible representations $R$ uniquely determines the enclosed flux. 
This property can be made explicit by taking the discrete Fourier transform of
the Wilson loop $W^{R}(\gamma)$ over the group $G$,
\begin{subequations}
    \label{eq:Wilsonloop2}
    \begin{align}
        W^{C}(\gamma)
    &:=\frac{1}{|G|}\sum_{R} \sum_{r\in C}  \chi^{*}_{R}(r)W^{R}(\gamma) \label{eq:Wilsonloop2a}\\  
    &=
    \begin{cases}
        1  & \text{if $\prod_{l\in\gamma}g_{l}^{\sigma_l}\in C$}\,,\\
        0  & \text{otherwise}\,,
    \end{cases} 
    \label{eq:Wilsonloop2b} 
    \end{align}
\end{subequations}
%
%
and therefore $\langle W^{C'}(\gamma)\rangle_C = \delta_{C,C'}$ where
$\langle\bullet\rangle_C$ denotes a state with flux $C$ enclosed by $\gamma$;
see \cref{app:wilson} for details.


However, in general there are fluctuations of flux excitations. In this case it
is known from pure gauge theories in $2+1$ dimensions~\cite{Wilson1974} that
the Wilson loop in the trivial phase decays with an area law, whereas in the
topological phase it decays with a perimeter law
\begin{equation}
    \langle W^{R}(\gamma) \rangle  
    \sim e^{- |\gamma|/ \xi_R}\,,
\end{equation}
where $|\gamma|$ denotes the length of $\gamma$.
Hence the expectation values of Wilson loops can be used to probe two
properties: Varying the loop $\gamma$ and verifying the perimeter law
demonstrates the topological character of the phase, and the expectation value
for a fixed loop yields information about the enclosed flux.  

However, for our implementation of quantum doubles via a Hamiltonian $H_{\tilde\G}$
with finite transverse field $\Omega\neq 0$, we cannot necessarily associate a
group element $g_{l}$ to each link, since the full Hilbert space of a link is
much larger than $\H^G_l\equiv\spn{\ket{g}_l\,|\,g\in G}$.
To solve this problem, we modify the Wilson loop operator,
\begin{equation}
    \mathbb{W}^{R}(\gamma) 
    := W^{R}(\gamma)\prod_{l\in \gamma} \mathbb{P}_{l}^{G}
    \,,
    \label{eq:modwilson}
\end{equation}
where $\mathbb{P}_{l}^{G}$ is the projector on link $l$ onto the subspace
$\H^G_l$. This modification guarantees that the Wilson operator
$\mathbb{W}^{R}(\gamma)$ is well defined on the full Hilbert space
$\H_{\tilde\G}$ of the blockade structure $H_{\tilde\G}$.
To understand the effect of this projector, we consider a simplified model
where each link is with probability $p<1$ in a state in $\H^G_l$.
Then the projector $\prod_{l\in \gamma}\mathbb{P}_{l}^{G}$ leads to an
additional contribution $p^{|\gamma|}$ to the perimeter law. This suggests that
the modified Wilson loop \eqref{eq:modwilson} can still distinguish between the
trivial and the topological phase, as this additional factor is consistent with
a perimeter law.
Furthermore, we expect that one can factor out this additional contribution by
evaluating the ratio between two Wilson loops with irreducible representation
$R$ and trivial representation $E$, respectively:
\begin{equation}
    \overline{\mathbb{W}}^{R}(\gamma) 
    := 
    \frac{\langle \mathbb{W}^{R}(\gamma) \rangle}{\langle \mathbb{W}^{E}(\gamma) \rangle} 
    \approx
    \frac{1}{|\mathcal{I}_G|}\sum_{\vec g_\gamma\in\mathcal{I}_G}\chi_R 
    \biggl(\;\prod_{l\in\gamma} g_{l}^{\sigma_l}\biggr)
    \,.
    \label{eq:wilson_corrected}
\end{equation}
Here $\vec g_\gamma\equiv(g_l)_{l\in\gamma}$ denotes measurement outcomes of
one experimental sample for links along the loop $\gamma$. The last equality
shows that it is convenient for an experiment to evaluate the expectation value
by only taking into account the post-selected measurements $\mathcal{I}_G$
where all links along the loop $\gamma$ are in a state in $\H^G_l$.

Finally, we point out that if perturbations of the Hamiltonian $H_{\tilde\G}$
slightly break the local symmetries $\U_p(h)$, charges are also allowed to
fluctuate.  To identify the topological phase in this case, it is necessary to
use Fredenhagen-Marcu order parameters~\cite{Fredenhagen1983,Gregor2011}.

\begin{figure*}[tb]
    \centering
    \includegraphics[width=1.0\linewidth]{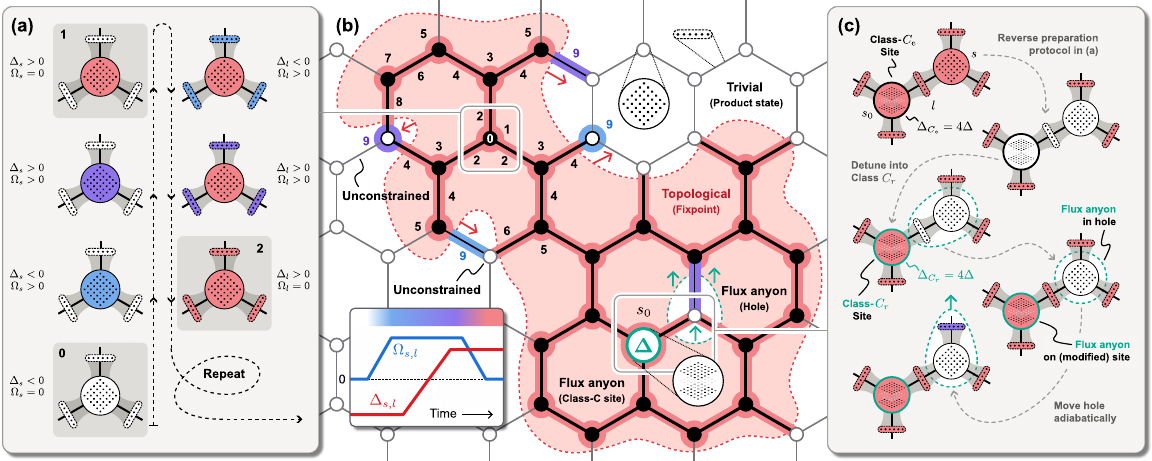}
    \caption{%
        \CaptionMark{Adiabatic preparation of ground states and flux anyons.}
        (a,b) The topological fixpoint ground state can be adiabatically
        ``grown'' starting from a completely de-excited array of two-level
        systems (i.e., $\Delta_s<0$ and $\Delta_l<0$ for all sites and links,
        white filling). One then iterates the ramping procedure (white
        \textrightarrow\;blue \textrightarrow\;purple \textrightarrow\;red)
        shown in the inset of panel (b) for sites and adjacent links
        repeatedly. Note that transverse fields $\Omega_{s,l}$ and detunings
        $\Delta_{s,l}$ are adiabatically modified uniformly for all two-level
        systems on a site or link, thereby respecting the local symmetries of
        the model at all times. A possible (non-optimal) initialization
        sequence is shown in (b) where numbers label time steps. Note that a
        site (link) can only be initialized in the ground state if there is at
        least one adjacent link (site) uninitialized (see four cases in time
        step 9).  Since $\Omega_s=0=\Omega_l$ after initialization (red area),
        the prepared ground state is the fixpoint wave function of the quantum
        double $\tilde H_\G(0,\omega)$, without admixtures from classical
        excited states in $\H^\perp_\G$.
        (c) Once a patch is prepared in the topological ground state, one can
        use the generalized site graph $\G_{s_0}[\bm\Delta]$ to adiabatically
        inject flux anyons into the system [labelled $\bm\Delta$ in panel (b),
        see \cref{sec:fluxlattice} for its construction]. The special site
        $s_0$ is initialized with $\Delta_{C_\nel}=-\Delta \nearrow 4\Delta$
        (and all other $\Delta_{C}=-\Delta=\const$) to prepare the no-flux
        constraint $g_1g_2g_3=\nel$. Then this site, together with an adjacent
        link $l$ and site $s$ are de-excited again [following the inverse
        protocol in panel (a)]. Subsequently, the special site is
        re-initialized with $\Delta_{C_r}=-\Delta\nearrow 4\Delta$ (green
        boundary, red filling) so that it enforces the constraint $g_1g_2g_3\in
        C_r$ for some non-trivial conjugacy class $C_r$. This protocol prepares
        two flux anyons in the vacuum fusion channel: $[C_r,E]$ localized on
        the ``flux factory'' site $s_0$, and the corresponding antiparticle
        $[\anti C_r,E]$ in the hole of de-excited sites and links. The hole
        (carrying its anyon) can then be adiabatically moved by sequences of
        (de-/re-)initializations of links and sites (purple link).
    }
    \label{fig:fig5}
\end{figure*}

\section{Adiabatic state preparation and anyon braiding}   
\label{sec:braiding}

A remarkable property of the Hamiltonian $H_\G$ (and $H_{\tilde\G}$)
[\cref{eq:H}] is that it can be generalized to \emph{local} transverse fields
$\Omega_s$ and $\Omega_l$ on sites $s$ and links $l$ without violating the
local plaquette symmetries $\U_p(h)$. 
The only constraint is that $\Omega_s$ ($\Omega_l$) and $\Delta_s$ ($\Delta_l$)
are equal for all two-level systems that are permuted by these local
automorphisms, i.e., transverse fields and detunings must be uniform on each
site $s$ and link $l$ respectively, but can vary between sites and links.
This allows us to locally control the system with time dependent parameters
$\Omega_s(t)$ and $\Delta_s(t)$ on sites [$\bm\Delta_s(t)$ for generalized
sites], and $\Omega_l(t)$ and $\Delta_l(t)$ on links. 
In a first step, we leverage this control to propose a protocol for the
adiabatic preparation of the flux-free topological quantum many-body ground
state of $H_\G$. Later, we extend this protocol to realize controlled braiding
of flux anyons using $H_{\tilde\G}$. 

For the adiabatic preparation of the ground state of $H_\G$ on a patch of the
honeycomb lattice (\cref{fig:fig5}), we start with $\Omega_i=0$ and $\Delta_i=
-\Delta <0$ on all vertices, and prepare these two-level systems in the unique
ground state $\ket{\psi_0} = \bigotimes_{i\in V} \ket{0}_i$. This state is
obviously in the symmetric sector, $\ket{\psi_0}\in\H^S_\G$, so that the
symmetry sector is completely fixed by this initial state. 
The main idea for an efficient adiabatic preparation is to \emph{grow} the
topological phase, starting from a single site $s$. On this site, we first
adiabatically turn on the transverse field $\Omega_s \sim \Delta$, then ramp
the detuning from $\Delta_s=-\Delta < 0$ to its final value $\Delta_s= 4\Delta
> 0$, and finally adiabatically turn of the transverse field again, see
\cref{fig:fig5}~(a).
Due to the strong blockade interaction on the site, only a single vertex can be
excited to $\ket{1}$, but since $\Omega_s$ acts uniformly on all vertices on
$s$, this protocol results in an equal-weight superposition of all possible
single-excitation states:  
\begin{equation}
    \ket{\psi_1}
    = \mathcal{N}\;
    \biggl[\bigotimes_{i \notin s}\ket{0}_i\biggr]
    \biggl[\;
    \sum_{g_1,g_2 \in G} \ket{w_s^{g_1g_2}}
    \biggr]\,,
\end{equation}
with normalization $\mathcal{N} = 1/|G|$. Recall that $\ket{w_s^{g_1g_2}}$
denotes the state with vertex $w_{s}^{g_1g_2}$ on site $s$ excited to $\ket{1}$
and all other vertices on the site in state $\ket{0}$. 
During this adiabatic ramping procedure, the system always exhibits a gap of
order $\max\{|\Delta_s|,\Omega_s|G|\}$. Note that the transverse field exhibits
a collective enhancement due to the blockade interaction, so that for optimal
ramping $\Omega_s \sim \Delta/|G|$ and the preparation can be achieved on the
time scale $\hbar /\Delta$. 

In the next step, we proceed to all links connected to site $s$ and repeat the
adiabatic ramping procedure with $\Omega_l \sim \Delta$ and final value
$\Delta_l=2\Delta$. Due to the blockade interactions between the vertices on
the links and the excited vertex on the site, one vertex on each link is in
blockade while the others can be efficiently adiabatically excited on the time
scale $\hbar/\Delta$. At the end of this procedure, the new ground state is
\begin{equation}
    \ket{\psi_2} 
    = \mathcal{N}\;
    \biggl[\bigotimes_{i\notin s,l}\ket{0}_i\biggr]
    \biggl[\;
    \sum_{g_1,g_2\in G} \ket{g_1,g_2,g_3}\ket{w_s^{g_1g_2}}
    \biggr]\,,
\end{equation}
where the states $\ket{g_1,g_2,g_3}$ on the three links $l\equiv l_{1,2,3}$
obey the no-flux condition $g_1g_2g_3=\nel$ [\cref{fig:fig5}~(a)].

The next step is to repeat the adiabatic ramping on all sites connected to the
links $l_1$, $l_2$, and $l_3$, and then proceed with all links connected to
these sites, etc. This iterative ramping protocol, alternating between links
and sites, grows the topological phase from the inside of the patch towards its
boundary, see~\cref{fig:fig5}~(b). 
In each step, the blockade interactions between links and sites constrains the
excitation patterns to the ground state manifold $\H^0_\G$. While there is much
freedom in the sequence of sites and links that are passed by the ramping
procedure [\cref{fig:fig5}~(b)], it is important that for each link only
\emph{one} connected site has already been excited, and for each site \emph{at
most two} connected links have already been excited. This guarantees that the
constraints imposed by the blockade interaction can always be fulfilled. 
Note that on sites with only \emph{one} connected link already excited, there
is still a collective enhanced coupling $\sqrt{|G|}\Omega_s$, whereas on sites
with two excited links there is no collective enhancement. But with proper
local addressing, this can be compensated by the strength of the transverse
field $\Omega_s$, such that each step can be implemented on a time scale
$\sim\hbar/\Delta$.

This protocol prepares a unique state on an open patch of the honeycomb lattice
and respects all local symmetries, i.e., the wave function $\ket{\psi_t}$
during the adiabatic ramping always satisfies $\U_{p}(h) \ket{\psi_t} =
\ket{\psi_t}$ and remains in the symmetric sector, $\ket{\psi_t}\in\H^S_\G$. It
is convenient to stop the preparation of a finite patch such that every
initialized site has all three emanating links initialized as well, i.e., the
patch has ``rough'' boundaries with dangling links. Then, the protocol prepares
the exact and unique ground state of $\tilde H_\G(0,\omega)$, i.e., the
equal-weight superposition of all configurations satisfying the zero-flux
constraint on every initialized site. This state corresponds to the unique
ground state of the quantum double model \eqref{eq:QD} for ``rough'' boundary
conditions. Note that after this initialization, it is still possible to ramp
up a homogeneous transverse field $\Omega$ to prepare the true ground state
$\ket{\Omega}$ of the Hamiltonian $H_\G$ on a time scale $\sim \hbar/\Delta$
due to the excitation gap in the symmetric sector.

In summary, the adiabatic preparation of the topological state $\ket{\Omega}$
can be achieved on a time scale $\tau\sim 2\sqrt{A}\,\hbar/\Delta$ with $A$ the
total number of sites in the patch of the honeycomb lattice. 
As required for the local unitary preparation of a state with topological order
from a trivial product state, the time for the preparation scheme scales with
the system size (but only with a $\sqrt{A}$ scaling)~\cite{Chen2010}.
Note that if the adiabatic ramping scheme \emph{violates} the local symmetries,
the ramping must be slower than the gap \emph{between} different symmetry
sectors to avoid the excitation of charge anyons. In particular, since the
charge gap is now essential for the adiabatic preparation, the transverse
fields $\Omega_s$ and $\Omega_l$ can no longer be switched off to prepare the
fixpoint ground state. Fortunately, this does not alter the overall scaling of
the preparation time with system size.

The protocol for adiabatic ground state preparation can be generalized to
states with well-defined flux anyons (ground states of $H_{\tilde\G}$), and the
subsequent adiabatic braiding of these anyons. 
Flux anyons can be either trapped inside holes, i.e., contiguous areas of sites
and links with all vertices in state $\ket{0}$, or pinned to generalized site
graphs $\G_s[\bm\Delta_s]$ (introduced in \cref{sec:fluxlattice}), where the
site-specific chemical potentials $\bm\Delta_s$ can be used to control the flux
anyon type pinned at this site.
To prepare well-defined flux anyons with the above method -- which is based on
the interplay of adiabatic ramping and blockade interactions -- one needs at
least one special site $s_0$ with the generalized blockade graph
$\G_{s_0}[\bm\Delta]$, detuned by $\bm\Delta=\{\Delta_C\}$. This site plays the
role of a ``flux factory'' to adiabatically inject fluxes into the system,
which subsequently can be trapped and moved inside holes (which does not
require modified sites), see \cref{fig:fig5}~(b).
The preparation starts with the initialization of the topological ground state
in the symmetric sector $\H^S_{\tilde\G}$, with all sites in the zero-flux
state.  For the generalized site $s_0$ this means
$\Delta_{C_\nel}=-\Delta\nearrow 4\Delta$ and $\Delta_C=-\Delta=\const$ for all
$C\neq C_\nel$, \cref{fig:fig5}~(c).
To inject flux anyons into the system, one selects the special site $s_0$, an
adjacent site $s$, and the link $l$ connecting them, and applies the inverse
adiabatic ramping procedure to bring all vertices on this link and the two
sites into state $\ket{0}$. This creates a hole encompassing the two sites,
with detunings $\Delta_C=-\Delta$ for all conjugacy classes $C$ on the modified
vertex $s_0$.
To create a flux anyon $[C_r,E]$ and its antiparticle $[\anti C_r,E]$ on the
two sites, one adiabatically ramps the transverse field $\Omega_{s_0}$ on all
site vertices, and subsequently the detunings
$\Delta_{C_r}=-\Delta\,\nearrow\,4\Delta$ of the site vertices that belong to
class $C_r$ (for $C\neq C_r$ the detunings $\Delta_C=-\Delta$ remain constant).
Finally, one also performs the ramping on the link to site $s$. This prepares a
state on the three adjacent links that satisfies the generalized condition $g_1
g_2 g_3 \in C_r$.
This step creates \emph{two} flux anyons: $[C_r,E]$ pinned at the special site
$s_0$ as the lowest energy state, and $[\anti C_r,E]$ trapped inside the hole
at the neighboring site $s$. Note that the hole necessarily carries the
anti-flux anyon $[\anti C_r,E]$ since the surrounding bulk state requires that
these two anyons fuse into the vacuum. 
Furthermore, the procedure prepares a well-defined state in the fusion space of
these two anyons, namely the unique topological state of two anyons $[C_r,E]$
and $[\anti C_r,E]$ in the vacuum fusion channel. 
After this initial creation of a flux pair, the hole can be adiabatically moved
around by ramping down a connecting link and an adjacent site, and subsequently
ramping up the original site and the connecting link again,
\cref{fig:fig5}~(b,c). In a similar fashion, anyons pinned at site $s_0$ can
first be immersed into a hole and subsequently moved away. Then the ``flux
factory'' $s_0$ can be reused for the creation of the next pair of flux anyons.
Finally, the Wilson loop operators $W^R(\gamma)$ [or $W^C(\gamma)$] for loops
$\gamma$ around flux-carrying holes can be used to measure the enclosed total
flux, which probes the fusion channel of the encircled holes. 

In summary, we have developed a complete toolbox to explore the non-abelian
character of flux anyons in quantum double models $\double{G}$: (i) adiabatic
ground state preparation, (ii) deterministic and adiabatic creation of flux
anyons in a well-defined fusion channel, (iii) adiabatic transport of these
anyons (necessary for braiding and fusion), and finally, (iv) probing of fusion
channels by measuring Wilson loop operators around flux anyons. In the
remainder of this paper, we apply this toolbox to the simplest non-abelian
quantum double $\double{S_3}$.

\begin{figure*}[tb]
    \centering
    \includegraphics[width=1.0\linewidth]{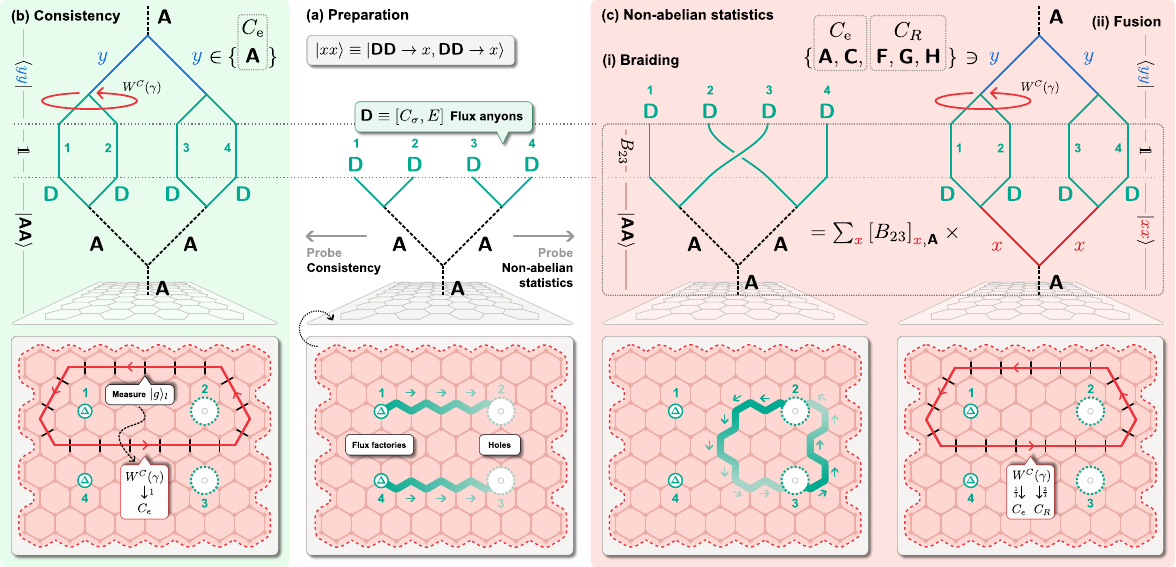}
    \caption{%
        \CaptionMark{Probing non-abelian statistics.}
        Protocol for the preparation, braiding and measurement (fusion) of
        anyons. The splitting/fusion diagrams (time evolution) are shown in the
        top row, the corresponding spatial configurations and manipulations in
        the bottom row.
        (a) As an initial step, two pairs $(1,2)$ and $(3,4)$ of $\anyon
        D=[C_\sigma,E]$ flux anyons (solid green lines) are created and
        separated within a topological domain (red area) following the protocol
        described in \cref{sec:braiding} and \cref{fig:fig5}~(c). We denote
        splitting/fusion states of this form as
        $\ket{xx}\equiv\ket{\anyon{DD}\rightarrow x,\anyon{DD}\rightarrow x}$.
        With this nomenclature, the system is initialized in the state
        $\ket{\anyon{AA}}$ where both pairs fuse into the vacuum $\anyon A$
        (black dashed lines).
        (b) As a consistency check, the Wilson loop $W^C(\gamma)$ can be
        computed from measuring the link states $\ket{g}_l$ along the depicted
        dual loop $\gamma$ (solid red line). Only for $C=C_\nel$ the
        expectation value should be non-zero since the enclosed anyons are in
        the vacuum channel.
        (c) Alternatively, the two anyons 2 and 3 from \emph{different} pairs
        can be exchanged with a half-braid (again using the protocol for
        adiabatically moving holes) to produce the new state
        $B_{23}\ket{\anyon{AA}}$ shown in (c-i). Here, $B_{23}$ denotes the
        unitary braiding matrix for exchanging flux 2 and 3. Using the $F$- and
        $R$- matrices of the unitary braided fusion category that describes the
        quantum double $\mathcal{D}(S_3)$ (see \cref{app:braiding}), this state
        can be expanded in the basis $\ket{xx}$, where the fusion rules allow
        $x\in\{\anyon{A},\anyon{C},\anyon{F},\anyon{G},\anyon{H}\}$. Here,
        $\{\anyon{A},\anyon{C}\}$ carry no flux ($C_\nel$) and
        $\{\anyon{F},\anyon{G},\anyon{H}\}$ carry non-zero flux of type $C_R$.
        These fluxes can be measured again by the same Wilson loop
        $W^C(\gamma)$, where now the expectation value for $C=C_R$ is finite.
        This demonstrates the non-abelian nature of $\mathcal{D}(S_3)$ in that
        the two states depicted in the lower left and right corners are locally
        indistinguishable while being linearly independent.
    }
    \label{fig:fig6}
\end{figure*}

\section{Examples for $\double{S_3}$}
\label{sec:anyons}

For the abelian group $G=\Z_2$, the construction of the Hamiltonian $H_\G$ an
its corresponding graph $\G$ reproduces the blockade structure studied in
Ref.~\cite{Maier2025}, which leads to an \emph{abelian} topological phase
known as the toric code~\cite{Kitaev2003}. 
Therefore we focus in the following on the simplest \emph{non-abelian} quantum
double derived from the permutation group $G=S_3\equiv C_{3v}$ with six
elements $\{\nel,R,R^2,\sigma,\sigma R,\sigma R^2\}$ with $R$ a three cycle and
$\sigma$ a two cycle ($\sigma^2=\nel$, $R^3=\nel$ and $\sigma R=R^2\sigma$).
Thus we have $|G|=N=6$, so that on each link there are six two-level systems
and on each site there are $36$ two-level systems; this setup is sketched in
\cref{fig:fig1}~(b).  
The quantum double $\double{S_3}$ features eight anyon types, given by the
irreducible representation of the Drinfeld double
$\double{S_3}$~\cite{Kitaev2003,Simon2023}. In particular, this implies an
eight-fold ground state degeneracy on a torus. As discussed in
\cref{sec:fluxlattice}, the anyons can be labeled by a conjugacy class $C$ of
the group $G$ and an irreducible representation of the centralizer of a
representative of the conjugacy class. The group $S_3$ has three conjugacy
classes $C_\nel$, $C_\sigma$, and $C_R$, and three irreducible representations:
the trivial representation $E$, a one dimensional representation $\Gamma_1$,
and a two-dimensional representation $\Gamma_2$. 
Taken together, these label the pure flux anyons and the pure charge anyons.
Following the standard notation, these are denoted as $\anyon
A\equiv[C_\nel,E]$ for the vacuum, $\anyon D\equiv [C_\sigma,E]$ and $\anyon
F\equiv [C_R,E]$ for the flux anyons, and $\anyon B\equiv [C_\nel,\Gamma_1]$,
and $\anyon C\equiv [C_\nel,\Gamma_2]$ for the charge anyons.  
In addition, there are three dyons: $\anyon E\equiv[C_\sigma,\Gamma_\sigma]$
for the non-trivial irreducible representation of the centralizer of
$C_\sigma$, and $\anyon G\equiv[C_R,\Gamma_{R_1}]$ and $\anyon
H\equiv[C_R,\Gamma_{R_2}]$ for the two non-trivial irreducible representations
of the centralizer of $C_R$. A full review of the anyon content of
$\double{S_3}$, their fusion channels, $F$-matrices and $R$-matrices can be
found in Ref.~\cite{Cui2015}.

Drawing from the toolbox developed above, we now describe a simple braiding
scheme for flux anyons to probe their non-abelian statistics, see
\cref{fig:fig6}. For this, we start by preparing a setup with four $\anyon
D=[C_\sigma,E]$ anyons using (ideally) two ``flux factories'' as explained in
\cref{sec:braiding} and illustrated in \cref{fig:fig6}~(a).
As discussed previously, the initial state is therefore in the fusion channel
where the first and second pair of $\anyon D$ anyons each fuse into the vacuum.
It is convenient to define a basis of the fusion space of four $\anyon D$
anyons $\H^\anyon{DDDD}_\anyon{A}$ that are in the global vacuum channel
$\anyon A$. We denote as $\ket{xx}\equiv\ket{\anyon{DD}\rightarrow
x,\anyon{DD}\rightarrow x}$ the fusion state where each pair fuses into anyon
$x$; note that both pairs must fuse into the same anyon since the fusion of all
four anyons yields the vacuum $\anyon A$ and for $\double{S_3}$ all anyons are
their own antiparticle.
The fusion rule~\cite{Cui2015}
\begin{align}
    \anyon D\otimes\anyon D= \anyon A\oplus \anyon C \oplus \anyon F\oplus \anyon G \oplus \anyon H
\end{align}
then determines a basis of the five-dimensional fusion space, namely
\begin{align}
    \H^\anyon{DDDD}_\anyon{A}
    =\operatorname{span}\left\{
        \ket{\anyon{AA}},\ket{\anyon{CC}},\ket{\anyon{FF}},\ket{\anyon{GG}},\ket{\anyon{HH}}
    \right\}\,,
\end{align}
where our system is initialized in the state $\ket{\anyon{AA}}$.  

Consequently, a measurement of the Wilson loop $W^C(\gamma)$ [via
post-selection, recall \cref{eq:wilson_corrected}] around the first (and
second) pair of anyons must yield the flux $C_\nel$ with probability 1, which
can be used to probe the consistency of the adiabatic preparation scheme,
\cref{fig:fig6}~(b).
Now we can perform braiding using the adiabatic ramping protocol from
\cref{sec:braiding}, see \cref{fig:fig6}~(c). In general, braiding anyons
induces unitary transformations on the fusion space, here
$\H^\anyon{DDDD}_\anyon{A}$. If we braid the first and second anyon around each
other, the fusion state $\ket{\anyon{AA}}$ remains invariant -- which can again
be tested by measuring Wilson loop operators. By contrast, if we exchange
(``half-braid'') the second and the third anyon, the initial state
$\ket{\anyon{AA}}$ transforms into (see \cref{app:braiding} and
Ref.~\cite{Cui2015})
\begin{align}
    \ket{\anyon{AA}}
    \;\mapsto\;
    \frac{1}{3}
    \Big[
        &\overbrace{%
            \ket{\anyon{AA}}+\sqrt{2}\ket{\anyon{CC}}
        }^{C_\nel} 
        \label{eq:braiding}\\  
        +\,&\sqrt{2}
        \big(
            \underbrace{%
                \ket{\anyon{FF}}+e^{i\frac{2\pi}{3}}\ket{\anyon{GG}}+e^{-i\frac{2\pi}{3}}\ket{\anyon{HH}}
            }_{C_{R}}
        \big)
    \Big]\,.
    \nonumber
\end{align}
As before, this state can be probed by measuring the Wilson loop around the
first two anyons. With probability $1/3$ one measures again the trivial flux
$C_\nel$, but with probability $2/3$ one now finds the non-trivial flux $C_R$. 
Hence this simple braiding protocol already reveals the non-abelian character
of the quantum double $\double{S_3}$: by adiabatically exchanging two identical
anyons trapped in two holes one can change the fusion channel of the system
without ever leaving the ground state manifold.

\section{Conclusion and Outlook}


We introduced and studied a family of two-dimensional models, constructed from
two-level systems subject to local transverse fields and detunings, where
excited states interact via a strong blockade interaction. These models are
motivated by the Rydberg platform with neutral atoms in optical tweezers. On an
abstract level, the Hamiltonians can be described by vertex-weighted blockade
graphs $\G$, with vertices representing two-level systems and edges blockade
interactions. 
We presented a construction for a family of blockade graphs, such that for
every finite group $G$, the ground state of the blockade Hamiltonian with weak
transverse fields is in the topologically ordered phase of the quantum double
model $\double{G}$. This family of topological phases is characterized by
anyonic flux and charge excitations which exhibit non-abelian statistics for
non-abelian groups $G$.
We proved the emergence of topological order in the many-body ground state
analytically. In an upcoming paper, we show the existence of a finite
excitation gap for the special case $G=\mathbb{Z}_2$ \cite{Gap2025}. The core
idea of our construction is to enforce the no-flux constraint in the ground
state by tailored blockade interactions such that the blockade graph exhibits
local graph automorphisms. These automorphisms translate to local symmetries of
the Hamiltonian, and the symmetric sector corresponds to the zero-charge sector
of the corresponding quantum double model. 
In this framework, we developed a complete toolbox to explore the non-abelian
character of flux anyons. This includes (i) efficient protocols for the
adiabatic preparation of ground states, (ii) deterministic and adiabatic
preparation schemes of flux anyons in a well-defined fusion channel, (iii) a
protocol for the adiabatic motion of these anyons (needed for braiding and
fusion), and finally, (iv) a procedure to probe the fusion channel of anyons by
measuring Wilson loops around them. Combined, these tools pave the way towards
probing non-abelian topological phases in artificial matter based on realistic
two-body interactions.


In this paper, both the construction of the blockade graph $\G$ and the
development of the toolbox were illustrated on the trivalent honeycomb lattice
as this is the simplest setting to discuss quantum doubles. Note that this is
not necessary, and the construction can be straightforwardly generalized to
arbitrary lattices and even irregular planar graphs, in accordance with
Kitaev's original formulation of quantum doubles~\cite{Kitaev2003}. 

Another generalization concerns the choice of detunings. In our construction of
$\G$, the detunings on sites and links where chosen such that $\Delta_s =
2\Delta_l$, combined with a sufficiently large blockade interaction $U_0 >
\Delta_s$. These choices are not unique. For example, it is easy to see that
the classical ground state manifold remains unchanged for $\Delta_s>2\Delta_l$
since larger $\Delta_s$ only stabilize the no-flux constraint. An interesting
open question is how the phase diagram is affected by variations of these
parameters.

While our framework is motivated by the Rydberg platform, we stress that our
abstract analysis omits the influence of microscopic van der Waals
interactions. To study their effects, a concrete \emph{embedding} of the
blockade graphs in two or three dimensions would be necessary. For the special
case $G=\mathbb{Z}_2$ (toric code topological order), an explicit embedding of
the corresponding blockade graph $\G$ was provided in Ref.~\cite{Maier2025}. It
is an interesting open question whether the proposed blockade graphs for
general groups $G$ can be embedded as well, and if so, how to achieve this most
efficiently.

Alternatively, the proposed models could be realized on other platforms.
Especially cold polar molecules in optical tweezers have recently seen
significant progress, with the potential advantage of much longer lifetimes of
excited states~\cite{Ruttley2025} and a high tunability of interaction
potentials~\cite{Micheli2006}. 
On the other hand, superconducting qubits that are connected by microwave
cavities -- which act as a bus to mediated blockade
interactions~\cite{Blais2021} -- have the potential to realize blockade graphs
with far less restrictions on geometry. Such ideas can naturally be extended to
Rydberg atoms in optical cavities that are connected by wave guides. On these
platforms, the realization of the blockade graph directly translates to a wave
guide structure, and therefore becomes a straightforward engineering task. 

We close with a comment on an intriguing though abstract problem. Our
construction leads to graphs with local automorphisms that, under certain
circumstances, generate a single orbit on the set of maximum-weight independent
sets. While there are trivial graphs which satisfy this condition, the
maximum-weight independent sets of our models have an intricate structure due
to the local constraints; in particular, they cannot be ``factorized'' (the
corresponding low-energy Hilbert space has no local tensor product structure).
This raises the question how graphs with these properties can be classified
and/or systematically constructed. This could be useful as each such graph
might give rise to an interesting quantum many-body phase. For example, it
would be interesting to explore whether there are graphs that stabilize the
topological order of Fibonacci anyons, the simplest anyon model universal for
topological quantum computation. If this turns out to be \emph{impossible}, it
would be helpful to understand \emph{why} to sharpen our
understanding of the limitations of blockade structures.



\begin{acknowledgments}
    We thank Jean-No\"el Fuchs for his introduction to quantum double models
    and Spyridon Michalakis for providing clarifications about the gap
    stability theorem.
\end{acknowledgments}


\bibliographystyle{./bib/bibstyle.bst}
\bibliography{bib/bibliography}

\clearpage

\appendix

\noindent\textbf{Important:} The technical nature of these appendices demands
for a streamlined notation which does not always match the notation of the main
text. We emphasize these differences when necessary.

\section{The graph for one site}%
\label{sec:aut_single}

In this appendix, we rigorously construct and discuss a generalized graph $\GC$
for a ``class-$C$ site'' with three emanating links. This site graph enforces
the flux condition $g_1 g_2 g_3 \in C$ for a conjugacy class $C$ of the group
$G$. $\GC$ corresponds to one part of the generalized site graph
$\G_s[\bm\Delta]$ introduced in \cref{sec:fluxlattice} and used in
\cref{fig:fig5}~(c), where $\Delta_C=4\Delta$ for one fixed class and all other
site vertices that do not belong to this class are omitted. At the end, we
consider the special case $C = \{1\}$ where we recover the graph $\GG$ (labeled
$\G_s$ in the main text) from \cref{fig:fig1}~(b) for a site with the zero-flux
condition $g_1 g_2 g_3 = 1$. Note that throughout this appendix, we refer to
the neutral element of a group as $1$ (in the main text we use $\nel$ instead).

In \cref{sec:GraphDefinition} we construct the graph $\GC$ and lay out the
notation. Subsequently, in \cref{sec:locaut}, we discuss its (local) graph
automorphisms and the structure of its automorphism group. Specifically for $C
= \{1\}$, we show that $\text{Aut}_{\text{loc}}(\GG) \cong G^2 \rtimes \aut{G}$
for local graph automorphisms, as claimed in \cref{sec:model}. Then we discuss
its maximum-weight independent sets in \cref{sec:MIS} and show that the graph
is fully-symmetric in \cref{sec:FullSymm}. Finally, in
\cref{sec:MultiClassSite}, we construct and discuss the ``multi-class graph''
$\GCl$ from \cref{sec:fluxlattice} for a site that incorporates all conjugacy
classes (labeled $\G_s[\bm\Delta]$ in the main text).

\subsection{Definition of the graph $\GC$}%
\label{sec:GraphDefinition}


We consider a generic group $G$ of order $N = |G|$ and a conjugacy class $C
\subseteq G$ of size $M = |C|$. In this subsection, we construct the
vertex-weighted graph $\GC = (\VC, \EC, \DC)$ for a class $C$-site $s$ with
three emanating links $l \in \ZD := \{1, 2, 3\}$. The labels $l$ fix an
ordering for the links. [In the tessellated structure in \cref{sec:torus}, 
this ordering alternates between adjacent sites of the two sublattices.
For just one site in this section, without any embedding in a larger graph
these are just labels used for the construction.]
Note the difference in notation compared to the main
text, where we used the variables $\{l_1, l_2, l_3\}$. An exemplary construction
with $N = 6$ and $M = 3$ is shown in \cref{fig:ClassSite}.

We start with a fully-connected (= complete) graph $K_{N^2M} = (\Vs^C, \Es^C,
\Ds^C)$ with vertices $\Vs^C = G^2 \times C$ on the site. In this notation,
each site-vertex is identified with a triple of group elements. As $K_{N^2M}$
is fully-connected, the edges $\Es^C = \{\, \{v_s, w_s\} \: | \;
v_s\neq w_s \in \Vs^C \,\}$ on the site include each unordered
pair of vertices. This is illustrated by the thick green circle in
\cref{fig:ClassSite}. We choose a uniform weight $\Delta_v = 4\Delta$ for
the vertices $v \in \Vs^C$.
	
Next, we construct the full graph $\GC$ as a graph extension of $K_{N^2M}$.
The full set of vertices $\VC = \Vs^C \cup \bigcup_{l \in \ZD} \Vi^C$ consists
of the vertices on the site and additionally includes the vertices $\Vi^C =
\{l\} \times G$ for each link. That is, the full graph consists of $N^2M + 3N$
vertices.  In this notation, each link-vertex is identified with the label of
its link and a group element from $G$. For the vertices $v \in \ZD \times G$ on
the links, we choose again uniform weights $\Delta_v = 1\Delta$. These vertices
are fully-disconnected, meaning no two vertices on the links are connected.
This is illustrated by the dashed black boxes in \cref{fig:ClassSite}.

Finally, we define the edges between the vertices on the links and the site.
To this end, we introduce the compact notation
\begin{align}
    \begin{split}
    &g_3 \equiv (g_1g_2)^{-1}c
    \quad\text{for}\quad 
    w \equiv (g_1, g_2, c) \in \Vs^C\,,
    \\
    &v_l \equiv (l,h_l) \in \Vi^C\,.
    \end{split}
    \label{eq:ShortNotation1}
\end{align}
Using this notation, we can write
\begin{align}
	\Ei^C &= \{\, \{w, v_l\} \,\vert\, w \in \Vs^C, \, v_l \in \Vi^C, \; g_l = h_l \,\}
\end{align} 
%
for the set of edges between site-vertices and vertices of link $l$. These
edges are drawn as the thin black lines in \cref{fig:ClassSite}.
The full set of edges of $\GC$ is then given by
\begin{align}
    \EC = \Es^C \cup \bigcup\nolimits_{l \in \ZD} \Ei^C.
\end{align}
This fully defines the graph $\GC$.

We conclude our construction with some remarks.
Firstly, note that each vertex $(g_1, g_2, c) \in \Vs^C$ on the site is connected to the
three vertices $(l, g_l) \in \Vi^C$ on the links, such that their group
elements fulfill the \emph{site constraint}
\begin{align}
    g_1 g_2 g_3 \in C\,.
    \label{eq:VertexCond}
\end{align}
Thus the site constraint writes the group structure of $G$ 
into the edges of the graph $\GC$.
This makes the constraint \cref{eq:VertexCond} central to the construction. 

Secondly, note that for $C = \{1\}$, the above construction corresponds exactly to
the graph defined in \cref{fig:fig1}~(b). In the main text, we denoted the
vertices on site $s$ as $w_s^{g_1\,g_2\,c}$ and the vertices on the link $l \in
\ZD$ as $v_l^{h_l}$ with group elements $g_1, g_2, h_l \in G$ and $c \in C$. 
This notation can be interpreted as bijective maps 
\begin{subequations}
    \begin{alignat}{5}
        w_s&: \; G^2 \times C \,&&\rightarrow\, \Vs^C&&: \; (g_1, g_2,c) &&\mapsto w_s^{g_1\,g_2\,c}\,,
        \\
        v_l&: \; \{l\} \times G \,&&\rightarrow\, \Vi^C&&: \; (l,h_l) &&\mapsto v_l^{h_l}\,.
    \end{alignat}
\end{subequations}
In the above construction, we used these bijections to identify the vertex sets
$\Vs^C \,\widehat{=}\, G^2 \times C$ and $\Vi^C \,\widehat{=}\, \{l\} \times
G$. This allows in the following for a more concise notation.

Finally, we note that this construction is not limited to the case of $|\ZD| =
3$ links on the honeycomb lattice. For $n > 3$ links this construction remains
well-defined for then $N^{n - 1}M$ vertices on the site.

\begin{figure*}
    \includegraphics[width=0.75\linewidth]{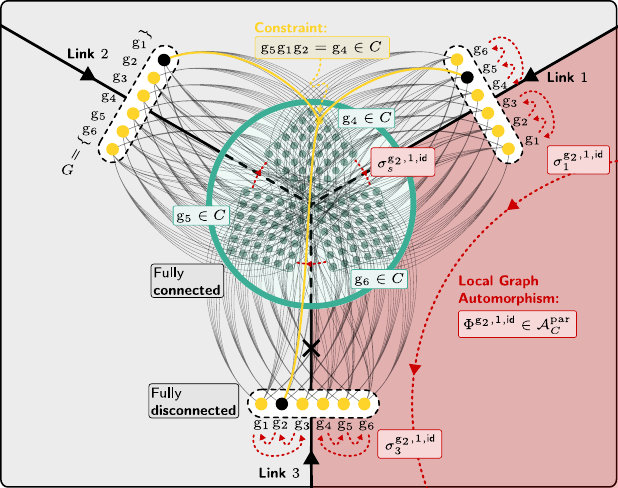}
    \caption{%
    \CaptionMark{Construction and automorphisms of $\GC$.}
    Construction of a generalized site graph $\GC$ with three emanating links
    $l \in \{1, 2, 3\}$ for an arbitrary conjugacy class $C$ of the group
    $G$. This construction extends the condition for the group multiplication
    on this site to $g_1g_2g_3 \in C$, where $g_l$ is the group element on link
    $l$. Italic symbols $g_l$ label dummy variables and roman symbols
    $\mathrm{g}_l$ label specific group elements. Illustrated is an example for
    the group $G = S_3$ with $N = |G| = 6$ and elements $(\mathrm{g}_1,
    \mathrm{g}_2, \mathrm{g}_3, \mathrm{g}_4, \mathrm{g}_5, \mathrm{g}_6) = (1,
    R, R^2, \sigma, R\sigma, R^2\sigma)$. As conjugacy class, we consider $C =
    \{\sigma, R\sigma, R^2\sigma\}$ with $M = |C| = 3$. The $N^2M$ vertices on
    the site (green points and one yellow) are fully connected (blockades
    not shown). On each link there are $N$ vertices (black and yellow points)
    that are fully disconnected. Each vertex on the site is connected via an
    edge (thin black lines) to one vertex on each link. Each link its assigned
    an inward-pointing orientation (black arrow). We mark the link $l = 2$
    (black \textsf{X}) to fix the group multiplication order $g_1g_2g_3$ of the
    group elements $g_l$ on the links, as well as the labeling $(g_1, g_2, c)
    \in G^2 \times C$ of vertices on the site. (It is important to stress that
    the constructed graph is still rotation symmetric.) The exemplary group
    multiplication $\mathrm{g}_5\mathrm{g}_1\mathrm{g}_2 = \mathrm{g}_4$, which
    satisfies the constraint for this site, is highlighted by yellow edges. The
    yellow vertices on the site and on the links are the corresponding
    maximum-weight independent set (MWIS) $M_w$ with $w =
    (\mathrm{g}_5,\mathrm{g}_1,\mathrm{g}_4)$. Note that this MWIS includes all
    but one vertex on each link, which associates the group element to the
    link. Also shown is a specific local graph automorphism
    $\Phi^{\mathrm{g_2},1,\Id} \in \AC$ that corresponds to a plaquette
    automorphism $\Theta_p(\mathrm{g}_2)$ on the lower-right plaquette (dashed
    red lines). It splits into the group permutations (short dashed red arrows)
    $\sigma^{\mathrm{g_2},1,\Id}_{0}$,
    $\sigma^{\mathrm{g_2},1,\Id}_{2}$ and
    $\sigma^{\mathrm{g_2},1,\Id}_{s}$ on link $0$, link $2$, and site
    $s$, respectively. This specific automorphisms leaves the vertices of link
    $1$ invariant.
    }
    \label{fig:ClassSite}
\end{figure*}

\subsection{Graph automorphisms of $\GC$}%
\label{sec:locaut}

A \emph{graph automorphism} of a vertex-weighted graph $\mathcal{G}=(V,E,W)$
is a bijection $\Phi\in\text{Sym}(V)$ on the vertices that preserves \ldots
\begin{enumerate}

    \item the connectivity,
        \begin{align}
            \{v, w\} \in E \;\Leftrightarrow\; \{\Phi(v), \Phi(w)\} \in E\,,
            \label{eq:Connectivity}
        \end{align}

    \item and the vertex weights,
        \begin{align}
            \forall v \in V: \; \Delta_{\Phi(v)} = \Delta_v\,.
            \label{eq:VertexWeights}
        \end{align}

\end{enumerate}
Here, $\Sym{X}$ denotes the group of permutations on the set $X$. The graph
automorphisms of $\mathcal{G}$ form a group $\aut{\G}$ via concatenation.

In the following, we consider the vertex-weighted graph $\GC$ constructed in
\cref{sec:GraphDefinition}. We use the shorthand notation
\eqref{eq:ShortNotation1} and denote as $\AC \equiv \aut{\GC}$ its group of
graph automorphisms. Note that the graph $\GC$ consists of vertices $\Vs^C$ and
$\Vi^C$ with different weights. Hence a graph automorphism $\Phi \in \AC$ only
satisfies the condition \eqref{eq:VertexWeights} if it has the form
\begin{align}
    \Phi = \phi_s \circ \varphi_l\,,
    \label{eq:GeneralAutomorphism}
\end{align}
with $\phi_s \in \text{Sym}(\Vs^C)$ and $\varphi_l \in
\text{Sym}(\,\bigcup_{l \in \ZD}\Vi^C\,)$.

As a starting point, notice that $\GC$ features the graph automorphism 
\begin{alignat}{5}
    \Phi_R :\; &\Vi^C &&\rightarrow V_{l+1}^C&&: \;\, (l, h_l) &&\mapsto (l+1, h_l)
    \nonumber\\
               &\Vs^C &&\rightarrow \Vs^C&&: \;\, (g_1, g_2, c) &&\mapsto (g_3, g_1, g_3cg_3^{-1})\,,
    \label{eq:RotSym}
\end{alignat}
with $g_3 \equiv (g_1g_2)^{-1}c$ as defined above; here addition of link
labels is meant modulo-3.
This automorphism reflects the rotation symmetry of the site since it permutes
the link vertices by increasing their indices $l \mapsto l + 1$ (mod $3$).
Note that in the current notation, it is not obvious that this is a graph
automorphism. To see this, we extend the shorthand notation
\eqref{eq:ShortNotation1} by
\begin{align}
    \begin{split}
    &g_3' \equiv (g_1'g_2')^{-1}c'
    \quad\text{for}\quad 
    w' \equiv (g_1', g_2', c') \equiv \Phi(w)\,, 
    \\
    &v_{l'}' \equiv (l', h_{l'}') \equiv \Phi(v_l)\,.
    \end{split}
    \label{eq:ShortNotation2}
\end{align}
First, note that $g_3cg_3^{-1} \in C$ so that the map \eqref{eq:RotSym} is
well-defined. With $\Phi_R^2(w) = (g_2, g_3, [g_2g_3]c[g_2g_3]^{-1})$, it is easy
to see that $\Phi_R^3 = \Id$; therefore the map is bijective. Second,
$\Phi_R$ acts bijectively on $\EC$ [recall~\cref{eq:Connectivity}] since
\begin{subequations}
    \begin{align}
        g_1' &= g_3 = h_3 = h_1'\,, \\
        g_2' &= g_1 = h_1 = h_2'\,, \\
        g_3' &= (g_3g_1)^{-1}g_3cg_3^{-1} = g_2 = h_2 = h_3'\,.
    \end{align}
\end{subequations}
Since $\Phi_R$ does not mix $\Vi^C$ and $\Vs^C$, \cref{eq:VertexWeights} is
trivially satisfied. This shows that $\Phi_R \in \AC $ is indeed a graph
automorphism of $\GC$.
This is also apparent in the rotational symmetry of \cref{fig:ClassSite}.

At this point the question arises whether $\GC$ also features a 
mirror symmetry of the form
\begin{alignat}{5}
    \Phi_M :\;  &V_1^C &&\rightarrow V_2^C&&: \;\, (1, h_1) &&\mapsto (2, h_1) \nonumber\\
                &V_2^C &&\rightarrow V_1^C&&: \;\, (2, h_2) &&\mapsto (1, h_2) \nonumber\\
                &\Vs^C &&\rightarrow \Vs^C&&: \;\, (g_1, g_2, c) &&\mapsto (g_2, g_1, c)\,,
    \label{eq:MirSym}
\end{alignat}
that exchanges only the vertices $V_1^C$ and $V_2^C$ leaving $V_3^C$ invariant.
This map is bijective as $\Phi_M^2 = \Id$,
and $\Phi_M$ satisfies \cref{eq:VertexWeights} as it does not mix $\Vi^C$ and $\Vs^C$.
Nevertheless, the map \eqref{eq:MirSym} is \emph{no} graph automorphism for a non-abelian group $G$ as 
$\Phi_M$ does not act bijectively on $\EC$ [recall~\cref{eq:Connectivity}]:
\begin{align}
    g_3' = (g_2g_1)^{-1}c \neq (g_1g_2)^{-1}c = g_3 = h_3 = h_3';
\end{align}
the site constraint \eqref{eq:VertexCond} that characterizes the edges $\EC$
inherently depends on the ordering of the links (up to cyclic permutations).
In fact, the (non-abelian) example constructed in \cref{fig:ClassSite} is actually not mirror symmetric.
Only for abelian groups the multiplication $g_1g_2 = g_2g_1$ is commutative for every group element $g_1,g_2 \in G$;
in this case \cref{eq:MirSym} defines a valid graph automorphism.

Although the rotation symmetry of $\GC$ is conceptually important, it is not
relevant for the following discussion. Henceforth we are interested in
\emph{local graph automorphisms} of the more restrictive form
[cf.~\cref{eq:GeneralAutomorphism}]
\begin{align}
    \Phi = \phi_s \circ \varphi_1 \circ \varphi_2 \circ \varphi_3\,,
    \label{eq:LocalAutomorphism}
\end{align}
where $\phi_s \in \text{Sym}(\Vs^C)$ and $\varphi_l \in
\text{Sym}(\Vi^C)$. This means that local graph automorphisms only permute
vertices \emph{within links}, but not between links. The local graph
automorphisms span a subgroup of the full automorphism group, we denote this
subgroup as $\AC^\text{loc}$.

Consider a local graph automorphism $\Phi \in \AC^\text{loc}$ of the form
\eqref{eq:LocalAutomorphism}. Using the shorthand notation of
\cref{eq:ShortNotation1,eq:ShortNotation2}, this implies $l' = l$.
Then we can define $\sigma_l \in \Sym{G}$ and $\varsigma_s: \Vs^C
\rightarrow C$ by $\sigma_l(h_l) = h_l'$ and $\varsigma_s(w) = c'$,
respectively. As a graph automorphism, $\Phi$ must fulfill the connectivity
condition \eqref{eq:Connectivity}, that means $g_l' = h_l'$ if and only if $g_l
= h_l$. With the newly defined $\sigma_l$ and $\varsigma_s$, this
implies 
\begin{subequations}
\begin{align}
    \varphi_l(v_l) &= (l, \sigma_l(h_l))\,,
    \label{eq:DefnitionOfPhi(l)}\\
    \phi_s(w) &= (\sigma_1(g_1), \sigma_2(g_2), \varsigma_s(w))\,,
    \label{eq:DefnitionOfPhi(s)}\\
    \varsigma_s(w) &= \sigma_1(g_1)\sigma_2(g_2)\sigma_3(g_3)\,,
    \label{eq:mastercondition}
\end{align}
\end{subequations}
for all $g_1, g_2 \in G$ and $c \in C$.  
\cref{eq:mastercondition} poses a constraint on the allowed choices for
$\sigma_l$ and $\varsigma_s$; if it is satisfied, then
\cref{eq:DefnitionOfPhi(l)} and \cref{eq:DefnitionOfPhi(s)} define
$\varphi_l$ and $\phi_s$ which, in turn, fully determine $\Phi \in
\AC^\text{loc}$ via \cref{eq:LocalAutomorphism}.

Note that $\varsigma_s(w)$ with $w \equiv (g_1, g_2, c)$ generally depends
on all three variables $g_1, g_2 \in G$ and $c \in C$. For a general group $G$,
we are not aware of a way to classify \emph{all possible} functions
$\sigma_l$ and $\varsigma_s$ that satisfy \cref{eq:mastercondition}.
However, this is not necessary for our purposes anyway. To solve
\cref{eq:mastercondition}, we make the ansatz $\varsigma_s(w) =
\sigma_s(c)$ which does not depend on the specific group elements $g_1, g_2
\in G$, such that $\sigma \in \text{Sym}(C)$ is well-defined. 
This leads to the simpler constraint
\begin{align}
    \sigma_s(c) = \sigma_1(g_1)\sigma_2(g_2)\sigma_3(g_3) \label{eq:AutomorphismCondition}
\end{align}
with $g_3 \equiv (g_1g_2)^{-1}c$ for every $(g_1, g_2, c) \in \Vs^C$.
Pictorially in \cref{fig:ClassSite}, this restricts the local graph automorphisms
to collective permutations of the patches of vertices with equal $c$
on the site.  Crucially, for the case $C = \{1\}$, there is only one patch and
the function $\sigma_s$ can only be constant and equal to $1$. Thus, for
this special case, the following characterization yields the \emph{full} local
automorphism group. 

In the next step, we derive a set of equivalent relations from
\cref{eq:AutomorphismCondition}.
For $c \in C$ and $g, h \in G$, consider the three vertices $(gh, 1, c)$, $(g,
h, c)$, and $(1, gh, c) \in \Vs^C$ for which $g_3 = (gh)^{-1}c$. Plugging this in \cref{eq:AutomorphismCondition}
and equating for $\sigma_s(c)[\sigma_3(g_3)]^{-1}$, we obtain
\begin{align}
    \begin{split}
        \sigma_1(gh)\sigma_2(1) 
        = \sigma_1(g) \sigma_2(h) 
        = \sigma_1(1) \sigma_2(gh)\,.
        \label{eq:s2}
    \end{split}
\end{align}
Specifically for $h = 1$, we can evaluate the right-hand side of \cref{eq:s2}
as
\begin{align}
    \sigma_2(g) 
    = [\sigma_1(1)]^{-1}\sigma_1(g) \sigma_2(1)\,,
    \label{eq:c1}
\end{align}
and plugging this in the left-hand side of \cref{eq:s2} yields
\begin{align}
    \sigma_1(gh) 
    = \sigma_1(g)[\sigma_1(1)]^{-1}\sigma_1(h)\,.
    \label{eq:c2}
\end{align}
Finally, for $c \in C$ and $g \in G$ consider $(cg^{-1}, 1, c) \in \Vs^C$ with
$g_3 = g$. Then \cref{eq:AutomorphismCondition} yields
\begin{align}
    \sigma_3(g) 
    = [\sigma_1(cg^{-1})\sigma_2(1)]^{-1}\sigma_s(c)\,.
    \label{eq:c3}
\end{align}
\cref{eq:c1,eq:c2,eq:c3} must be valid for all $g, h \in G$ and $c \in C$.

We now use \cref{eq:c2} to translate $\sigma_1$ into known algebraic
objects. To this end, we define the left ($\lambda)$ and right ($\rho$) translation
and the conjugation ($\chi$) on the group as follows:
\begin{subequations}
    \begin{alignat}{5}
        \lambda&: \; G &&\rightarrow \text{Sym}(G)&&: \; g &&\mapsto (\lambda_g: h \mapsto gh)\,,
        \label{eq:lambda}\\
        \rho&: \; G &&\rightarrow \text{Sym}(G)&&: \; g &&\mapsto (\rho_g: h \mapsto hg^{-1})\,,
        \label{eq:rho}\\
        \chi&: \; G &&\rightarrow \text{Sym}(G)&&: \; g &&\mapsto \chi_g = \lambda_g \circ \rho_g\,.
        \label{eq:conj}
    \end{alignat}
\end{subequations}
Setting $\tau := \lambda_{[\sigma_1(1)]^{-1}} \circ \sigma_1 \in
\text{Sym}(G)$, condition \eqref{eq:c2} becomes $\tau(gh) = \tau(g)\tau(h)$ for
all $g, h \in G$; this means that $\tau \in \text{Aut}(G)$ must be a
\emph{group automorphism}. 
If we select a group element $p_1 := \sigma_1(1) \in G$ and a group
automorphism $\tau \in \text{Aut}(G)$, this already fully defines $\sigma_1
= \lambda_{p_1} \circ \tau$. In addition, we can choose a group element $p_2 :=
[\sigma_2(1)]^{-1} \in G$. Then \cref{eq:c1} already fully determines
$\sigma_2 = \rho_{p_2} \circ \tau$.

Finally, we determine $\sigma_3$ and $\sigma_s$. To this end, we
define $p_3 := p_1^{-1}[\sigma_3(1)]^{-1}p_2$. Then for $g = 1$,
\cref{eq:c3} determines $\sigma_s = \chi_{p_1} \circ (\rho_{p_3} \circ
\tau)$.  But $\sigma_s$ can only take values in $C$, thus $\rho_{p_3} \circ
\tau$ must map to $C$.
We are not aware of a general criterion on $\tau$ and $p_3$ such that this is
satisfied; but again, this is not necessary for our purposes.
In the following, we restrict ourselves to the subgroup of $\AC^\text{loc}$
where $p_3 = 1$ and $\tau \in \text{Aut}_C(G)$. Here, $\text{Aut}_C(G)$
denotes the subgroup of $\text{Aut}(G)$ that preserves the conjugacy class $C$.
This includes at least the conjugations (= \emph{inner automorphisms})
$\chi_G$.
Then we obtain $\sigma_s = \chi_{p_1} \circ \tau$, and \cref{eq:c3} fully
defines $\sigma_3 = \lambda_{p_2} \circ \rho_{p_1} \circ \tau$. Crucially,
for $C = \{1\}$, the full automorphism group $\text{Aut}(G) =
\text{Aut}_{\{1\}}(G)$ is $C$-preserving, such that $p_3 = 1$ is the only
$C$-preserving choice.  Thus, for this case, this characterization yields the
full local automorphism group.

In summary, we parametrize the group permutations
\begin{subequations}
    \label{eq:AutomorphismGroupAction}
    \begin{align}
        \sigma_s^{p_1, p_2, \tau}(c) &= p_1\tau(c)p_1^{-1} 
        \label{eq:AutomorphismGroupAction1}\\
        \sigma_1^{p_1, p_2, \tau}(g) &= p_1\tau(g), 
        \label{eq:AutomorphismGroupAction2}\\
        \sigma_2^{p_1, p_2, \tau}(g) &= \tau(g)p_2^{-1}, 
        \label{eq:AutomorphismGroupAction3}\\
        \sigma_3^{p_1, p_2, \tau}(g) &= p_2\tau(g)p_1^{-1} 
        \label{eq:AutomorphismGroupAction4}
    \end{align}
\end{subequations}
on $g \in G$ and $c \in C$ by parameters $p_1, p_2 \in G$, and $\tau \in
\text{Aut}_C(G)$.
%
%
It is now easy to see that the group permutations
\eqref{eq:AutomorphismGroupAction} fulfill \cref{eq:AutomorphismCondition} for
any choice of parameters, and it is $\sigma_s(c) \in  C$ as required.

Now we derive the underlying structure of the parametrizing group.
Consider $p_1, p_2, p_1', p_2' \in G$ and $\tau, \tau' \in \text{Aut}_C(G)$,
then concatenation yields
\begin{align}
    \sigma_x^{p_1, p_2, \tau} \circ \sigma_x^{p_1', p_2', \tau'} 
    = \sigma_x^{p_1\tau(p_1'), p_2\tau(p_2'), \tau \circ \tau'} 
    \label{eq:GroupStructure}
\end{align}
for $x \in \ZD$ or $x = s$. 
Hence the parametrizing group is given by the semidirect product $G^2 \rtimes
\text{Aut}_C(G)$ with group product
\begin{align}
    (p_1, p_2, \tau) \cdot (p_1', p_2', \tau') 
    = (p_1\tau(p_1'), p_2\tau(p_2'), \tau \circ \tau')
    \label{eq:GroupProduct}
\end{align}
for $(p_1, p_2, \tau), (p_1', p_2', \tau') \in G^2 \rtimes \text{Aut}_C(G)$.

For each element $(p_1, p_2, \tau) \in G^2 \rtimes \text{Aut}_C(G)$, the
parametrization \eqref{eq:AutomorphismGroupAction}, together with
\cref{eq:DefnitionOfPhi(l),eq:DefnitionOfPhi(s)}, determines a distinct local
graph automorphism $\Phi^{p_1, p_2, \tau} \in \AC^\text{loc}$ of the form
\cref{eq:LocalAutomorphism}. These local graph automorphisms span a subgroup
$\AC^\text{par}$ of the full group of local graph automorphisms:
\begin{align}
    G^2\rtimes\text{Aut}_C(G) \cong \AC^\text{par} \le \AC^\text{loc}\,.
    \label{eq:GroupMonomorphism}
\end{align} 
Crucially, for the case $C = \{1\}$, all our choices were dictated by the
constraint \eqref{eq:mastercondition}. Hence, for this case, we have shown:
\begin{proposition}
  For $C = \{1\}$, the group of local graph automorphisms on $\GG$ ($\G_s$ in
  the main text) is isomorphic to $G^2 \rtimes \aut{G}$. 
  \label{eq:1Auts}
\end{proposition}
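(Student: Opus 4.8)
Since the preceding derivation already exhibits the injective homomorphism $G^2 \rtimes \text{Aut}_C(G) \cong \AC^{\text{par}} \le \AC^{\text{loc}}$ valid for every conjugacy class $C$, the plan is simply to verify two specializations for $C = \{1\}$: that $\text{Aut}_{\{1\}}(G) = \Aut{G}$, and that the inclusion $\AC^{\text{par}} \le \AC^{\text{loc}}$ is in fact an equality. The first is immediate: any $\tau \in \Aut{G}$ fixes the identity, so it preserves the singleton conjugacy class $C = \{1\}$, whence $\text{Aut}_{\{1\}}(G) = \Aut{G}$.

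For the equality $\AC^{\text{par}} = \AC^{\text{loc}}$ I would revisit the only two places where the general analysis genuinely narrowed $\AC^{\text{loc}}$ down to $\AC^{\text{par}}$, and show both become vacuous. The first is the ansatz $\varsigma_s(w) = \sigma_s(c)$: for a singleton $C = \{1\}$ the map $\varsigma_s : \Vs^C \to C$ has a one-point codomain and must be the constant map $w \mapsto 1$, which trivially factors through $c$, so the ansatz is satisfied automatically by \emph{every} $\Phi \in \AC^{\text{loc}}$ with $\sigma_s = \Id_{\{1\}}$. Hence for $C = \{1\}$ an arbitrary local automorphism already obeys the reduced constraint \eqref{eq:AutomorphismCondition} with $\sigma_s \equiv 1$, and relations \eqref{eq:c1}, \eqref{eq:c2} and \eqref{eq:c3} apply verbatim, yielding $\sigma_1 = \lambda_{p_1} \circ \tau$ and $\sigma_2 = \rho_{p_2} \circ \tau$ with $p_1 := \sigma_1(1)$, $p_2 := \sigma_2(1)^{-1}$ and $\tau := \lambda_{p_1^{-1}} \circ \sigma_1 \in \Aut{G}$.

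The second restriction was the choice $p_3 = 1$, which I would now derive rather than impose. From \eqref{eq:c3} one has $\sigma_s = \chi_{p_1} \circ \rho_{p_3} \circ \tau$; evaluating at the unique class element $c = 1$ and using $\tau(1) = 1$ gives $\sigma_s(1) = p_1 p_3^{-1} p_1^{-1}$. Since $\sigma_s$ must take values in $C = \{1\}$, this forces $p_3 = 1$. Therefore every $\Phi \in \AC^{\text{loc}}$ is of the parametrized form $\Phi^{p_1,p_2,\tau}$, giving $\AC^{\text{loc}} \subseteq \AC^{\text{par}}$ and hence equality; combined with the isomorphism $G^2 \rtimes \Aut{G} \cong \AC^{\text{par}}$ already established via the group law \eqref{eq:GroupProduct}, this proves $\AC^{\text{loc}} \cong G^2 \rtimes \Aut{G}$.

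I expect the single conceptually important step to be the losslessness of the ansatz $\varsigma_s(w) = \sigma_s(c)$: this is precisely where $|C| = 1$ is indispensable, and where the argument would break for a larger conjugacy class, since then $\varsigma_s$ could depend nontrivially on $g_1, g_2$ in ways the authors note they cannot classify in general. Everything else is routine bookkeeping with the translation maps $\lambda$, $\rho$ and $\chi$ that has already been carried out above.
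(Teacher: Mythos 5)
Your proof is correct and follows essentially the same route as the paper's: you observe that for $C=\{1\}$ the ansatz $\varsigma_s(w)=\sigma_s(c)$ is automatically satisfied (so \cref{eq:mastercondition} reduces to \cref{eq:AutomorphismCondition} with no loss of generality), and that $\sigma_s$ taking values in $\{1\}$ forces $p_3=1$, so the parametrized subgroup $\AC^\text{par}\cong G^2\rtimes\aut{G}$ exhausts $\AC^\text{loc}$. The only cosmetic difference is that you derive $p_3=1$ explicitly from $\sigma_s(1)=p_1p_3^{-1}p_1^{-1}\in\{1\}$, whereas the paper states it as ``the only $C$-preserving choice''; the substance is identical.
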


More generally, consider $C = \{c\}$ with only one element $c \in Z_G$ in the
center of $G$. Note that this is the general case if $G$ is Abelian. In this case,
$\varsigma_s(w) = c = \sigma_s(c)$ must be constant, hence condition
\eqref{eq:AutomorphismCondition} is equivalent to condition
\eqref{eq:mastercondition}.  Furthermore, the condition $\rho_{p_3} \circ \tau
\in C$ is uniquely solved by $p_3 \equiv c^{-1}\tau(c)$ for any $\tau \in
\text{Aut}(G)$.
This yields $\sigma_s(c) = c$ and $\sigma_3(g) =
p_2\tau(g)p_3^{-1}p_1^{-1}$ for $g \in G$. Together with
\cref{eq:AutomorphismGroupAction2,eq:AutomorphismGroupAction3}, it is easy to
see that \cref{eq:AutomorphismCondition} is fulfilled for any $p_1,p_2 \in G$
and $\tau \in \text{Aut}(G)$. Note that $p_3, p_3^{-1} \in Z_G$ are in the
center since $c^{-1}, \tau(c) \in Z_G$ for $c \in Z_G$.
%
%
Then it is easy to check that \cref{eq:GroupStructure} remains satisfied for
$\sigma_s$ and $\sigma_3$. Therefore, the parametrizing group is still
given by the semidirect product $G^2 \rtimes \text{Aut}(G)$ with group product
\eqref{eq:GroupProduct}. This construction therefore also yields the full group
of local graph automorphisms:
\begin{proposition}
  For an Abelian group $G$, the group of local graph automorphisms on $\GC$ is
  isomorphic to $G^2 \rtimes \aut{G}$. 
  \label{eq:AbGroupAuts}
\end{proposition}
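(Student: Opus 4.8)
The plan is to follow the same line of reasoning that established \cref{eq:1Auts} for $C=\{1\}$, but to exploit centrality in order to lift the restriction $\tau\in\text{Aut}_C(G)$ to the \emph{full} automorphism group. First I would record the structural fact that for abelian $G$ every conjugacy class is a singleton $C=\{c\}$ with $c\in Z_G=G$. The decisive consequence is that the function $\varsigma_s(w)$, which a priori could depend on all of $w=(g_1,g_2,c)$, is forced to be the constant $c$ because $C$ contains only one element. Hence the \emph{ansatz} $\varsigma_s(w)=\sigma_s(c)$ holds automatically rather than being a restriction, so that the master condition \cref{eq:mastercondition} is \emph{equivalent} to \cref{eq:AutomorphismCondition} (not merely implied by it). This equivalence is exactly what will guarantee that the parametrization below is exhaustive.

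Next I would rerun the derivation \cref{eq:c1,eq:c2,eq:c3} unchanged: evaluating \cref{eq:AutomorphismCondition} on the triples $(gh,1,c)$, $(g,h,c)$, $(1,gh,c)$ yields \cref{eq:s2}, and setting $\tau:=\lambda_{[\sigma_1(1)]^{-1}}\circ\sigma_1$ turns \cref{eq:c2} into the homomorphism identity $\tau(gh)=\tau(g)\tau(h)$, so $\tau\in\text{Aut}(G)$. Writing $p_1:=\sigma_1(1)$ and $p_2:=[\sigma_2(1)]^{-1}$ then fixes $\sigma_1=\lambda_{p_1}\circ\tau$ and $\sigma_2=\rho_{p_2}\circ\tau$, exactly as in \cref{eq:AutomorphismGroupAction2,eq:AutomorphismGroupAction3}. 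The place where the abelian case departs from the general one is the requirement that $\sigma_s$ take values in $C=\{c\}$: since $\sigma_s=\chi_{p_1}\circ\rho_{p_3}\circ\tau$ and conjugation $\chi_{p_1}$ is trivial in an abelian group, one gets $\sigma_s(c)=\tau(c)p_3^{-1}$, and the single equation $\sigma_s(c)=c$ is solved by $p_3:=c^{-1}\tau(c)$ for \emph{any} $\tau\in\text{Aut}(G)$. Because $c$ and $\tau(c)$ lie in $Z_G$, this $p_3$ is central, and \cref{eq:c3} then determines $\sigma_3(g)=p_2\tau(g)p_3^{-1}p_1^{-1}$.

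With the four maps in hand I would check two things. First, that the parametrization is both exhaustive and injective: exhaustiveness follows from the equivalence of the first step (every local automorphism satisfies \cref{eq:AutomorphismCondition} and is therefore of the above form, so $\AC^{\text{par}}=\AC^{\text{loc}}$), while injectivity follows by reading off $p_1=\sigma_1(1)$, $\tau=\lambda_{p_1^{-1}}\circ\sigma_1$, and $p_2=[\sigma_2(1)]^{-1}$ from the maps themselves. Second, that composition respects \cref{eq:GroupStructure}; here I would simply recompute $\sigma_x^{p_1,p_2,\tau}\circ\sigma_x^{p_1',p_2',\tau'}$ for $x\in\{1,2,3,s\}$ and verify it equals $\sigma_x^{p_1\tau(p_1'),\,p_2\tau(p_2'),\,\tau\circ\tau'}$, using centrality of $p_3$ so that the $\sigma_3$ and $\sigma_s$ cases close as well. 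This identifies the parametrizing group as $G^2\rtimes\text{Aut}(G)$ with the product \cref{eq:GroupProduct}, giving $\AC^{\text{loc}}\cong G^2\rtimes\text{Aut}(G)$.

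The main obstacle is conceptual rather than computational: one must see clearly \emph{why} the abelian case recovers the full $\text{Aut}(G)$ whereas the general $C$-preserving analysis only produced $\text{Aut}_C(G)$. The resolution is that the discrepancy $c^{-1}\tau(c)$ between $c$ and $\tau(c)$ can be absorbed into the central parameter $p_3$ without spoiling either the requirement $\sigma_s(C)\subseteq C$ or the group law \cref{eq:GroupStructure} — a move available only because $c$ is central. I would therefore spend most of the care on confirming that the nonzero $p_3$ does not leak into the composition rule (it does not, precisely by centrality), and on making explicit that constancy of $\varsigma_s$ is the exact hypothesis needed to exclude any exotic automorphism lying outside the $(p_1,p_2,\tau)$ family.
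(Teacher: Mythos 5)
Your proposal is correct and follows essentially the same route as the paper: you exploit that for abelian $G$ every class is a central singleton $C=\{c\}$, so $\varsigma_s$ is forced constant and \cref{eq:AutomorphismCondition} becomes equivalent to \cref{eq:mastercondition}, then absorb the discrepancy $p_3=c^{-1}\tau(c)$ into the (central) third parameter so that the full $\aut{G}$ survives, and finally verify the product law \cref{eq:GroupStructure}. The only cosmetic differences are that the paper phrases the argument for the slightly more general case $C=\{c\}$ with $c\in Z_G$ in a possibly non-abelian group, and that you make the injectivity of the parametrization explicit where the paper leaves it implicit.
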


By contrast, for a non-Abelian group $G$ with $C \nsubseteq Z_G$ not in the
center, the above parametrization only yields a subgroup
$\AC^\text{par}$ of $\AC^\text{loc}$.
For example, consider $G = S_3$ and $C = \{\sigma, \sigma R, \sigma R^2\}$ with
the notation used in \cref{sec:anyons}. Note that $\rho_{p_3} \circ \tau$ maps
to $C$ for $\tau = \Id$ and $p_3 = R$. This makes 
\begin{alignat}{5}
    \Phi:\;& V_3 &&\rightarrow V_3&&:\; (3, h_3) &&\mapsto (3, h_3R^{-1})\,,
    \nonumber\\
         &\Vs^C &&\rightarrow \Vs^C&&:\; (g_1, g_2, c) &&\mapsto (g_1, g_2, cR^{-1})
\end{alignat}
with $p_1, p_2 = 1$ a local graph automorphism on $\GC$ -- which we did not
capture with the above parametrization.

We conclude this section with some remarks on the rotation symmetry $\Phi_R$
given by \cref{eq:RotSym}.
The subgroup $\AC^\text{par}$ determined by \cref{eq:AutomorphismGroupAction}
is not manifestly rotationally symmetric. This is due to our asymmetric choice
of parametrization. For the concatenation with $\Phi_R$ we obtain
\begin{align}
    \Phi^{p_1, p_2, \tau} \circ \Phi_R 
    = \Phi_R \circ \Phi^{p_2^{-1},\, p_1p_2^{-1},\, \chi_{p_2} \circ \tau}\,.
\end{align}
That is, $\Phi_R$ permutes elements within $\AC^\text{par}$ such that $\tau
\mapsto \chi_{p_2} \circ \tau$ is modified by an additional conjugation.
However, the full group $\AC^\text{par}$ is rotationally symmetric:
\begin{align}
    \AC^\text{par} \circ \Phi_R 
    = \Phi_R \circ \AC^\text{par}\,.
\end{align}

\subsection{Maximum-weight independent sets of $\GC$}%
\label{sec:MIS}

Since the subgraph $K_{N^2M} = (\Vs^C, \Es^C, \Ds^C)$ of $\GC$ on the site is
fully connected, any independent set (IS) of $\GC$ can contain at most one
vertex from $\Vs^C$.
	
Consider a maximal IS (MIS) $M_0$ of $\GC$ which includes no vertex from
$\Vs^C$.  As the vertices $\Vi^C$ of link $l$ are only connected to vertices of
$\Vs^C$, for $M_0$ to be maximal, it must be $\Vi^C \subseteq M_0$. Therefore,
the MIS $M_0 = \bigcup_{l \in \ZD} \Vi^C$ consists of all $3N$ vertices on the
links.  Each such link-vertex has weight $1\Delta$, therefore $M_0$ possesses
the total weight $3N\Delta$.
	
Now we consider a MIS $M_w$ of $\GC$ which includes one vertex $w \equiv (g_1,
g_2, c) \in \Vs^C$ on the site. The vertex $w$ is connected to three
link-vertices $(1, g_1)$, $(2, g_2)$ and $(3, g_3)$, where we use the shorthand
notation $g_3 \equiv (g_1g_2)^{-1}c$ from \cref{eq:ShortNotation1}. Thus, for
$M_w$ to be maximal, it must include all but these three link-vertices:
\begin{align}
    M_w = \{w\} \cup \bigcup_{l \in \ZD} \,[\Vi^C \setminus (l, g_l)]\,.
\end{align}
Each such $M_w$ for $w \in \Vs^C$ includes $|M_w| = 3N - 2$ elements. The
vertex $w$ has weight $4\Delta$, hence $M_w$ possesses the total weight $(3N +
1)\Delta$.

This makes the MISs $M_w$ for each vertex $w \in \Vs^C$ the maximum-weight
independent sets (MWISs) of $\GC$. Thus, there are in total $N^2M$ MWISs. We
denote their set by $\mathcal{M}_C$.

Each MWIS $M_w$ is associated to a unique vertex $w$ on the site. The three
link-vertices $(1, g_1)$, $(2, g_2)$ and $(3, g_3)$ that are not part of $M_w$
satisfy the class-$C$ site constraint from \cref{eq:VertexCond}. Conversely, for each
such triple which satisfies the site constraint, there exists the vertex $w =
(g_1, g_2, g_1g_2g_3) \in \Vs^C$ with corresponding MWIS $M_w \in \mathcal{M}_C$.
That is, the MWISs of $\GC$ precisely encode all possible configurations that
satisfy the site constraint.

\subsection{Full symmetry of $\GC$}%
\label{sec:FullSymm}

A graph automorphism is a permutation on the vertex set, that conserves
connectivity [recall \cref{eq:Connectivity}] and the vertex weights [recall
\cref{eq:VertexWeights}]. That is, an MWIS must always be mapped to another
MWIS under element-wise application of a graph automorphism. For the blockade
graph $\GC$, from \cref{sec:MIS} we know that the MWISs are given by
$\mathcal{M}_C$. Therefore, for a given graph automorphism $\Phi \in \AC$, the MWIS
$M_w$ which includes the vertex $w \in \Vs^C$ must be mapped to the MWIS
$\Phi(M_w) = M_{\Phi(w)}$ which includes vertex $\Phi(w) \in \Vs^C$.  This
induced mapping $\Phi:\,\mathcal{M}_C \,\rightarrow\, \mathcal{M}_C$ on the MWISs is
bijective, as $\Phi$ acts bijectively on the vertex set.

For an MWIS $M_w \in \mathcal{M}_C$, we can define the set
\begin{align}
    \AC \cdot M_w := \{\Phi(M_w) \;\vert\; \Phi \in \AC\}
\end{align}
via element-wise application of all graph automorphisms. The graph $\GC$ is
\emph{fully symmetric} (in the sense of Ref.~\cite{Maier2025}) if $\AC \cdot
M_w = \mathcal{M}_C$ for some $M_w \in \mathcal{M}_C$, that is if $\mathcal{M}_C$ is an
orbit under the action of $\AC$.
	
In the following, we use the shorthand notation $w \equiv (g_1, g_2, c)$ with
$g_3 \equiv (g_1g_2)^{-1}c$ and $w' \equiv (g_1', g_2', c')$ with $g_3' \equiv
(g_1'g_2')^{-1}c'$ from \cref{eq:ShortNotation1} and \cref{eq:ShortNotation2},
respectively.  For two elements $c, c' \in C$ in the same conjugacy class, by
definition, there exists an element $r_{c \mapsto c'} \in G$ such that
$\chi_{r_{c \mapsto c'}}(c) = c'$.  If the centralizer $Z_G(c) \neq \{e\}$ is
non-trivial, this element is not unique, and we can choose some representative in
the set $r_{c \mapsto c'}Z_G(c)$ [defined by elementwise multiplication]. 
For any $w, w' \in \Vs^C$ and $\tau' \in
\text{Aut}_C(G)$, we can now define the graph automorphism $\Phi_{w \mapsto w'}
:= \Phi^{p_1, p_2, \tau}$ via the parameters
\begin{subequations}
    \label{eq:FSParametr}
    \begin{align}
        p_1 &:= g_1'[\chi_q \circ \tau'](g_1^{-1})\,,
        \label{eq:Defp_1}\\
        p_2 &:= g_2'^{-1}[\chi_q \circ \tau'](g_2)\,,
        \label{eq:Defp_2}\\
        \tau &:= [\chi_q \circ \tau']
    \end{align}
\end{subequations}
with $q := g_1'^{-1}r\tau'(g_1)$ and $r := r_{\tau'(c) \mapsto c'}$. Plugging
in $q$, \cref{eq:Defp_1} and \cref{eq:Defp_2} can be rewritten as
\begin{subequations}
    \begin{align}
        p_1 &= rq^{-1}\,,\\
        p_2 &= g_3'c'^{-1}r\tau'(cg_3^{-1})q^{-1} = g_3'r\tau'(g_3^{-1})q^{-1}\,,
    \end{align}
\end{subequations}
respectively. Then, by construction, the group permutations
\eqref{eq:AutomorphismGroupAction} become
\begin{subequations}
    \begin{alignat}{3}
        \sigma_1(g_1) &= p_1\tau(g_1) &&= g_1'\,,\\
        \sigma_2(g_2) &= \tau(g_2)p_2^{-1} &&= g_2'\,,\\
        \sigma_3(g_3) &= p_2\tau(g_3)p_1^{-1} &&= g_3'\,,\\
        \sigma_s(c) &= \chi_{p_1}(\tau(c)) = \chi_{r}(\tau'(c)) &&= c'\,.
    \end{alignat}
\end{subequations}
This is what is required such that
\cref{eq:DefnitionOfPhi(l),eq:DefnitionOfPhi(s)} yield $\Phi_{w \mapsto
w'}(w) = w'$ for $\Phi_{w \mapsto w'}$ of the form
\eqref{eq:LocalAutomorphism}.

It obviously is $\Phi_{w \mapsto w'} \in \AC$ because $(p_1, p_2, \tau) \in G^2
\rtimes \text{Aut}_C(G)$.  For a given $M_{w'} \in \mathcal{M}_C$, we can now
choose $\Phi = \Phi_{w \mapsto w'}$ for some $\tau' \in \text{Aut}_C(G)$ such
that $M_w \in \mathcal{M}_C$ is mapped to $\Phi(M_w) = M_{\Phi(w)} = M_{w'}$.
There always exists a $\tau' \in \text{Aut}_C(G)$ since conjugations (= inner
automorphisms) $\chi_G$ are always part of $\text{Aut}_C(G)$. In general, there
is some freedom in the choice of $\tau'$ so that the choice of $\Phi$ is not
unique. We conclude:
\begin{proposition}
    The generalized graph $\GC$ for a class-$C$ site is fully symmetric.
    \label{eq:GCFullySym}
\end{proposition}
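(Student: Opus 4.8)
The plan is to prove that $\GC$ is fully symmetric by showing that its maximum-weight independent sets form a single orbit under $\AC$. First I would invoke the characterization from \cref{sec:MIS}: every MWIS of $\GC$ is of the form $M_w$ for a unique site-vertex $w=(g_1,g_2,c)\in\Vs^C$, and since a graph automorphism maps MWISs bijectively to MWISs one has $\Phi(M_w)=M_{\Phi(w)}$. Consequently $\AC\cdot M_w=\M_C$ is equivalent to the statement that $\AC$ acts transitively on the site-vertices $\Vs^C$. The proof therefore reduces to exhibiting, for every ordered pair $w,w'\in\Vs^C$, a single automorphism $\Phi\in\AC$ with $\Phi(w)=w'$.

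Second, I would search for this automorphism inside the explicit parametrized subgroup $\AC^\text{par}\cong G^2\rtimes\text{Aut}_C(G)$ constructed in \cref{sec:locaut}, using \cref{eq:GroupMonomorphism} to guarantee that any such element genuinely lies in $\AC$. Under $\Phi^{p_1,p_2,\tau}$ the site-vertex transforms as $w\mapsto(\sigma_1(g_1),\sigma_2(g_2),\sigma_s(c))$ with the group permutations listed in \cref{eq:AutomorphismGroupAction}. Matching components, the target $w'=(g_1',g_2',c')$ imposes the three equations $\sigma_1(g_1)=g_1'$, $\sigma_2(g_2)=g_2'$, and $\sigma_s(c)=c'$. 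The first two are solved immediately, fixing $p_1=g_1'\tau(g_1)^{-1}$ and $p_2=g_2'^{-1}\tau(g_2)$ for any choice of $\tau$; moreover, since these parameters satisfy the master condition \cref{eq:AutomorphismCondition} by construction, the remaining constraint $\sigma_3(g_3)=g_3'$ is then automatic and need not be imposed separately.

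The step I expect to be the main obstacle is the class component $\sigma_s(c)=\chi_{p_1}(\tau(c))=c'$: once $p_1$ is fixed by the first equation, this condition couples $p_1$ and $\tau$, so they cannot be chosen independently. I would resolve this by using the inner-automorphism freedom inside $\text{Aut}_C(G)$. Pick any base automorphism $\tau'\in\text{Aut}_C(G)$ --- such automorphisms always exist because conjugations preserve every conjugacy class --- and, crucially, exploit that $c$ and $c'$ lie in the \emph{same} conjugacy class $C$ to choose an element $r$ with $\chi_r(\tau'(c))=c'$. Absorbing the first translation requirement into an auxiliary conjugation $q:=g_1'^{-1}r\tau'(g_1)$ and setting $\tau:=\chi_q\circ\tau'$ (with $p_1,p_2$ as in \cref{eq:FSParametr}) then satisfies the two translation equations and the conjugation equation simultaneously. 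A direct substitution into \cref{eq:AutomorphismGroupAction} verifies $\sigma_1(g_1)=g_1'$, $\sigma_2(g_2)=g_2'$, $\sigma_3(g_3)=g_3'$, and $\sigma_s(c)=c'$, so $\Phi^{p_1,p_2,\tau}(w)=w'$. Since $w,w'$ were arbitrary the action on $\Vs^C$ is transitive, the orbit of any $M_w$ is all of $\M_C$, and $\GC$ is fully symmetric. The only genuinely group-theoretic input is the conjugacy of $c$ and $c'$; everything else is routine algebra.
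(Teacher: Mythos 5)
Your proposal is correct and follows essentially the same route as the paper's proof: reducing full symmetry to transitivity of $\AC$ on $\Vs^C$, and constructing $\Phi^{p_1,p_2,\tau}\in\AC^\text{par}$ with exactly the parametrization of \cref{eq:FSParametr} (same auxiliary conjugation $q=g_1'^{-1}r\tau'(g_1)$ and conjugator $r$ taking $\tau'(c)$ to $c'$). The only cosmetic difference is that you observe $\sigma_3(g_3)=g_3'$ follows automatically from \cref{eq:AutomorphismCondition}, whereas the paper verifies it by direct substitution.
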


As a concluding remark, we specialize to $C = \{1\}$. In this case,
$c, c' = 1$ is fixed, and $\tau' \in \text{Aut}(G)$ can be any group
automorphism.  Without loss of generality, we can choose $r_{1 \,\mapsto\, 1} =
g_1'\tau'(g_1)^{-1}$ as representative, such that $q = 1$ and $\chi_q =
\Id$ become trivial. This simplifies the parameters to $p_1 =
g_1'\tau'(g_1)^{-1}$, $p_2 = g_3'g_1'\tau'(g_3g_1)^{-1}$, and $\tau = \tau'$.

\subsection{The graph $\GCl$ for all conjugacy classes}%
\label{sec:MultiClassSite}

The construction from \cref{sec:GraphDefinition} of the graph $\GC$ for a class
$C$-site can readily be generalized to incorporate all conjugacy classes $C \in
\text{Cl}(G)$ of the group $G$. This results in the graph $\GCl$ of the
multi-class site discussed in \cref{sec:fluxlattice} (and denoted there by
$\G_s[\bm\Delta]$). In this section, we rigorously construct this graph,
discuss its automorphisms and MWISs, and show that it is fully symmetric.

In the following, we denote the graph of a class-$C$ site as $\GC = (\VC, \EC,
\DC)$, and the graph of the multi-class site as $\GCl = (\VCl, \ECl, \DCl)$.
The construction of $\GCl$ from the $\GC$ works as follows. For each conjugacy
class $C \in \text{Cl}(G)$, consider the graph $\GC$ and increase the weight of
the vertices $w \in \Vs^C$ on the site uniformly to $\Delta_w^C = \Delta_C$. 
We assume $\Delta_C > 3\Delta$ for each conjugacy class.  
The maximal weight is denoted $\Delta_\text{max}
:= \max_{C \in \text{Cl}(G)} \Delta_C$. Crucially, we assume that the maximal
weight is unique for a conjugacy class $C_\text{max}$, i.e., 
\begin{align}
    \Delta_C 
    = \Delta_{C'} 
    = \Delta_\text{max}
    \quad\Leftrightarrow\quad
    C = C' = C_\text{max} 
    \label{eq:UniquenessOfMaxDet}
\end{align}
for $C, C' \in \text{Cl}(G)$. As the weights are uniform on the site,
they change neither the graph automorphisms $\text{Aut}(\GC)$ nor the MWISs
$\mathcal{M}_C$ of the graphs $\GC$. 

For each conjugacy class $C \in \text{Cl}(G)$, we can now ``stack'' the graphs
$\GC$ on top of each other by identifying their vertices $\Vi^\text{Cl} = \Vi^C
= \{l\} \times G$ on the links. Then we fully connect any pair of vertices from
different conjugacy classes on the site via an additional edge in $\ECl$. Hence
we end up with a fully-connected (= complete) subgraph $K_{N^3} =
(\Vs^\text{Cl}, \Es^\text{Cl}, \Ds^\text{Cl})$ with vertices 
\begin{align}
    \Vs^\text{Cl} 
    = \bigcup_{C \in \text{Cl}(G)} \Vs^C 
    = G^3 
\end{align}
on the site. Each vertex $w \equiv (g_1, g_2, c) \in \Vs^\text{Cl}$ is still
connected to three link-vertices $(1, g_1)$, $(2, g_2)$ and $(3, g_3)$, where
we use the shorthand notation $g_3 \equiv (g_1g_2)^{-1}c$ from
\cref{eq:ShortNotation1}.
The total vertex set is then
$\VCl = \Vs^\text{Cl} \cup \bigcup_{l \in \ZD} \Vi^\text{Cl}$, 
and the total set of edges is $\ECl = \Es^\text{Cl} \cup \bigcup_{l \in \ZD}
\Ei^\text{Cl}$ with edges
\begin{align}
	\Ei^\text{Cl} 
  = \{\, \{w, v_l\} \,\vert\, w \in \Vs^\text{Cl}, \, v_l \in \Vi^\text{Cl}, \; g_l = h_l \,\}
\end{align}
between the site vertices and vertices of link $l$.

We now discuss the parametrization of local graph automorphisms.
$\varsigma_s:\, \Vs^\text{Cl} \rightarrow G$ must fulfill the condition
\begin{align}
    \varsigma_s(w) 
    = \sigma_1(g_1) \sigma_2(g_2) \sigma_3(g_3) 
    \label{eq:Nicolaissondervertex}
\end{align}
for all $g_1, g_2, c \in G$ with the shorthand notation
\cref{eq:ShortNotation1}, in analogy to \cref{eq:mastercondition}. 
In the generic case, the weights $\Delta_C$ are different, thus we must
require that $\varsigma_s(w) \in C_c \equiv \text{Cl}(c)$
preserves the conjugacy class. This makes the classification problem of local
graph automorphisms similar to the one solved in \cref{sec:locaut}. The
condition $\varsigma_s(w) \in C_c$ is now slightly stricter, as it
requires that $\tau \in \text{Aut}_{\text{Cl}(G)}(G)$ is $C$-preserving for
\emph{every} conjugacy class $C \in \text{Cl}(G)$. 
Hence we obtain the parametrized group $\AC^\text{par} \cong G^2 \rtimes
\text{Aut}_{\text{Cl}(G)}(G)$ as subgroup of $\AC^\text{loc}$.

The subgraph $K_{N^3}$ on the site is again fully-connected. This excludes ISs
with multiple vertices on the site. By the same reasoning as in \cref{sec:MIS},
we can argue that the MISs are given by $M_0$ and $M_w$
for each $w \in \Vs^\text{Cl}$. Again, $M_0$ possesses the weight $3N\Delta$,
and the $\mathcal{M}_C$ possess the weights $(3N - 3 + \Delta_C)\Delta$. Assuming
condition \eqref{eq:UniquenessOfMaxDet} for our weights, the MWISs are
given by $\mathcal{M}_{C_\text{max}}$ with weight $(3N - 3 +
\Delta_\text{max})\Delta$.

Finally, we show full symmetry of $\GCl$ in the sense of Ref.~\cite{Maier2025},
that is, $\AC \cdot M_w = \mathcal{M}_{C_\text{max}}$ for some $M_w \in
\mathcal{M}_{C_\text{max}}$. Fortunately, we can simply construct $\Phi_{w \mapsto
w'} = \Phi^{p_1,p_2,\tau}$ via the parametrization \eqref{eq:FSParametr} for
some $\tau' \in \text{Aut}_{\text{Cl}(G)}(G)$, similar to the previous
\cref{sec:FullSymm}. This is possible since we guaranteed that $w,w' \in \Vs^{C_\text{max}}$
correspond to the same conjugacy class $C_\text{max}$ by assumption \eqref{eq:UniquenessOfMaxDet}. 
We can conclude that: 
\begin{proposition}
    The generalized graph $\GCl$ for a multi-class site is fully symmetric.
    \label{eq:GClFullySym}
\end{proposition}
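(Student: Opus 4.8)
The plan is to reduce the statement to transitivity of the automorphism group on the site vertices of the maximal conjugacy class, and then recycle the explicit construction already carried out for a single class in \cref{sec:FullSymm}. By the MWIS analysis preceding the proposition, the maximum-weight independent sets of $\GCl$ are exactly $\mathcal{M}_{C_\text{max}}$, and each such $M_w$ is uniquely labelled by its site vertex $w\in\Vs^{C_\text{max}}$. Since every graph automorphism preserves weights and connectivity, it sends MWISs to MWISs with $\Phi(M_w)=M_{\Phi(w)}$; hence full symmetry, i.e.\ $\AC\cdot M_w=\mathcal{M}_{C_\text{max}}$, is equivalent to the claim that $\text{Aut}(\GCl)$ acts transitively on $\Vs^{C_\text{max}}$. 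It then suffices to exhibit, for arbitrary $w,w'\in\Vs^{C_\text{max}}$, a single automorphism carrying $w$ to $w'$, and I would look for it inside the parametrized subgroup $\AC^\text{par}$.

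First I would check that the parametrized local automorphisms $\Phi^{p_1,p_2,\tau}$ with $\tau\in\text{Aut}_{\text{Cl}(G)}(G)$ preserve the set $\Vs^{C_\text{max}}$. This follows from \eqref{eq:AutomorphismGroupAction1}: a site vertex with class label $c$ is sent to one with label $\sigma_s(c)=p_1\tau(c)p_1^{-1}$, which lies in the conjugacy class of $c$ because $\tau$ is class-preserving and conjugation by $p_1$ does not change the class. Together with the uniqueness assumption \eqref{eq:UniquenessOfMaxDet}, which singles out $\Vs^{C_\text{max}}$ as the unique set of maximal-weight site vertices, this guarantees that automorphisms cannot leak MWISs into other classes. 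Next, for $w=(g_1,g_2,c)$ and $w'=(g_1',g_2',c')$ in $\Vs^{C_\text{max}}$ I would take $\Phi_{w\mapsto w'}:=\Phi^{p_1,p_2,\tau}$ with parameters chosen exactly as in \eqref{eq:FSParametr} for some $\tau'\in\text{Aut}_{\text{Cl}(G)}(G)$. Because $c$ and $c'$ both lie in $C_\text{max}$, the element $r=r_{\tau'(c)\mapsto c'}$ used there exists, so the identical computation of \cref{sec:FullSymm} yields $\sigma_1(g_1)=g_1'$, $\sigma_2(g_2)=g_2'$, $\sigma_3(g_3)=g_3'$ and $\sigma_s(c)=c'$, hence $\Phi_{w\mapsto w'}(w)=w'$. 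A suitable $\tau'$ always exists since the inner automorphisms $\chi_G$ are contained in $\text{Aut}_{\text{Cl}(G)}(G)$; one may even take $\tau'=\Id$.

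The only point requiring genuine care is that the admissibility condition for automorphisms of $\GCl$ is \emph{stricter} than for a single class: here $\tau$ must preserve \emph{every} conjugacy class simultaneously, i.e.\ $\tau\in\text{Aut}_{\text{Cl}(G)}(G)$ rather than merely $\text{Aut}_{C_\text{max}}(G)$. I would therefore confirm that the parameters from \eqref{eq:FSParametr} define an automorphism of the \emph{full} graph $\GCl$ and not only of the single-class graph $\GC$, which reduces to checking that the map respects the cross-class site edges of $K_{N^3}$ and the shared link vertices. This is immediate because all site vertices are mutually adjacent and the link edges are governed by the same constraint $g_l=h_l$, so a permutation that is simultaneously an automorphism of each $\GC$ and setwise preserves $\Vs^\text{Cl}$ is automatically an automorphism of $\GCl$. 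With this in place, transitivity on $\Vs^{C_\text{max}}$ follows and I would conclude $\AC\cdot M_w=\mathcal{M}_{C_\text{max}}$. I expect the verification of this stricter class-preservation condition, ensuring the reused construction never relies on an automorphism that fails to fix some non-maximal class, to be the main obstacle, though it is disposed of cleanly by restricting to $\text{Aut}_{\text{Cl}(G)}(G)$ and using that inner automorphisms already suffice.
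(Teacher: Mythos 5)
Your proposal is correct and takes essentially the same route as the paper: it reuses the single-class parametrization \eqref{eq:FSParametr} with $\tau'\in\text{Aut}_{\text{Cl}(G)}(G)$ (inner automorphisms, or even $\tau'=\Id$, suffice), relying on the fact that the uniqueness assumption \eqref{eq:UniquenessOfMaxDet} forces $w,w'\in\Vs^{C_\text{max}}$ to lie in the same conjugacy class so that the conjugating element $r$ exists. The only difference is presentational — you make explicit the reduction of full symmetry to transitivity of $\AC$ on $\Vs^{C_\text{max}}$ and the check that the parametrized maps are automorphisms of the full graph $\GCl$, both of which the paper establishes in the text preceding the proposition rather than inside its proof.
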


\section{The tessellated blockade structure $\G$ on the torus}%
\label{sec:torus}

In this appendix we discuss the automorphism group of the tessellated graph
$\G$ on the honeycomb lattice with periodic boundaries. Note that
without modifications, the honeycomb lattice can only be embedded on a torus
or, as a finite patch with appropriate boundary conditions, in the plane. Here
we discuss the situation on the torus. In \cref{sec:boundary} we discuss how
boundary conditions affect our model.


\subsection{Construction and maximum-weight independent sets}%
\label{sec:MWIS_tess}

\begin{figure}[tbp]
    \begin{center}
        \includegraphics*[width = 0.95\linewidth]{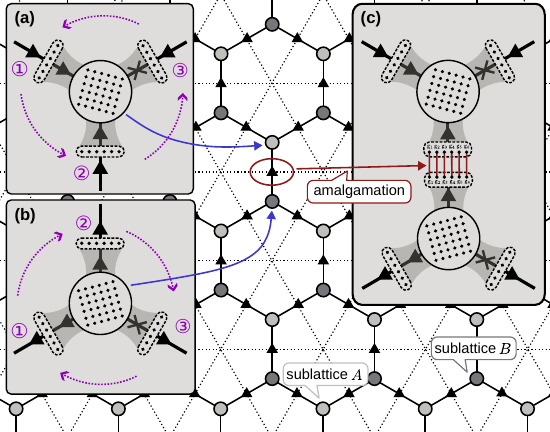}
    \end{center}    
    \caption{\textbf{Construction of the tessellated graph $\G$.}
        The honeycomb lattice is bipartite and can be partitioned into two
        sublattices $A$ and $B$, indicated by light and dark gray circles.
        \textbf{(a)} On every site in sublattice $A$, we place the graph
        constructed in \cref{sec:aut_single}, such that the links numbered $1$,
        $2$ and $3$ are ordered \emph{counterclockwise} around the site. 
        \textbf{(b)} On every site in sublattice $B$, we place the graph
        constructed in \cref{sec:aut_single}, such that the links numbered $1$,
        $2$ and $3$ are ordered \emph{clockwise} around the site. 
        \textbf{(c)} We amalgamate these graphs by identifying (on every link
        of the honeycomb lattice) the vertices that correspond to the same
        group element.
    }
    \label{fig:amalg}
\end{figure}

To characterize the maximum-weight independent sets (MWISs) of the
tessellation, it is convenient to phrase the construction differently than in
the main text. We can view the tessellated complex as an \emph{amalgamation}
(see Ref.~\cite{Stastny2023a}) of the building blocks introduced in
\cref{sec:aut_single}, see \cref{fig:amalg}.
Let $\Lambda_\hex = (S_\hex, L_\hex, P_\hex)$ denote the honeycomb lattice,
where $S_\hex$ denotes the sites, $L_\hex$ the set of links and $P_\hex$ the
set of plaquettes. As the honeycomb lattice is bipartite, we can partition it
into two subsets $S_\hex = A \cup B$ such that sites from $A$ are only
connected to sites from $B$ and vice versa. We place the graph $\G_C$, as
constructed in \cref{sec:aut_single}, on the sites in $A$, and a mirrored
version of this graph on every site in $B$. We denote these graphs associated
to specific sites as $\G_{C_s}^s$ for $s \in S_\hex$. 

On each link $l \in L_\hex$, there are now two sets of $|G|$ vertices (from the
two graphs placed on the endpoints of $l$). The vertices in both sets are in
one-to-one correspondence with group elements from $G$. To amalgamate the
graphs $\G_{C_s}^s$, we first identify the vertices on the same link which are
associated to the same group element, and then add up their detunings.
Hence we obtain the detuning $2\Delta$ for the vertices on the links.  In the
following, we label the vertices on links by $(l,g)$ for $l \in L_\hex$, $g \in
G$, and the vertices on the sites as $(s, g_1, g_2, r)$ for $s \in S_\hex$
and $g_1, g_2, r \in G$. If
$\G_{C_s}^s$ is the graph for the conjugacy class $C_s = \{1\}$, then we omit
the last argument and write $(s, g_1, g_2)$. This construction is shown in
\cref{fig:amalg}.

Next, we characterize the MWISs of the tessellated graph $\G$. It is
straightforward to see that if there exists an \emph{independent set} $M$ of
$\G$, such that its restriction on $\G_{C_s}^s$ is a MWIS of $\G_{C_s}^s$ for
all $s \in S_\hex$, then $M$ is a MWIS of $\G$. We refer to an independent set
with this property as a \emph{globally consistent independent set}. It
is then easy to check that \emph{if} a globally consistent IS exists, then
\emph{all} MWIS of $\G$ have the property that their restriction on
$\G_{C_s}^s$ is a MWIS of $\G_{C_s}^s$ for every $s \in S_\hex$. 
Note that these observations are true for arbitrary weighted graphs. 
The existence of a globally consistent independent set is equivalent
to the condition given in Ref.~\cite[Section V.B]{Stastny2023a} (namely that
the ``$\gamma$-intersection'' of the languages associated to the structures
$\G_{C_s}^s$ is nonempty).

Weather a globally consistent IS exists must be checked for each choice of
conjugacy classes $\{C_s\}$. For the important case $C_s \equiv \{1\}$ on all
sites $s \in S_\hex$ (flux-free vacuum), the set
\begin{align}
    M := &\{(l, g)\, |\, l \in L_\hex, g\in G\setminus\{1\}\}
    \nonumber\\
    &\cup \{(s, 1,1)\, |\, s \in S_\hex\}
    \label{eq:globcons}
\end{align}
is independent since $(s, 1, 1)$ is adjacent to $(l, 1)$ on the links $l$
emanating from $s$, and to $(s, g_1, g_2)$ for all $(g_1, g_2) \in
G^2\setminus\{(1,1)\}$. 
Furthermore, the restriction of $M$ on each $\G_s \equiv \G_{\{1\}}^s$
is a MWIS (see \cref{sec:MIS}). This makes $M$ globally consistent. So in this
case, the restriction of every MWIS $M$ of $\G$ on $\G_s$ is a MWIS of $\G_s$
for all $s \in S_\hex$. Therefore all MWIS of $\G$ are uniquely characterized
by group elements $g_l \in G$ for each link $l \in L_\hex$ which satisfy the
zero-flux condition. 
This means that for the emanating links $l_1, l_2, l_3$ of $s$, listed in
counterclockwise order, the group elements satisfy
\begin{subequations}
    \begin{align}
        \label{eq:zf1}
        g_{l_1} g_{l_2} g_{l_3} = 1
    \end{align}
    if the links at $s$ are pointing inwards, and
    \begin{align}
        \label{eq:zf2}
        g_{l_3} g_{l_2} g_{l_1} = 1
    \end{align}
    if the links at $s$ are pointing outwards. 
    The form of condition~\eqref{eq:zf2} stems from the fact that the graph on
    these sites is mirrored.
\end{subequations}

In conclusion, the MWIS of $\G$ are in one-to-one correspondence with the
(product) ground states of Kitaev's quantum double model $H_G$ for
$J_p = 0$ and group $G$. 
In the following, we define $\mathcal{L}_G$ as the \emph{subset} of
$G^{|L_\hex|}$ that contains configurations $\vec{g} = (g_l)_{l \in L_\hex} \in
G^{|L_\hex|}$ which satisfy \cref{eq:zf1,eq:zf2}. Note that for non-abelian
$G$, $\mathcal{L}_G$ is not a group as it is not closed under (link-wise) multiplication.

\subsection{Definitions and preliminaries}
\label{sec:prelim}

The goal of the following sections is to characterize the automorphism group of
$\G$. In particular, we are interested in the question if (and if so, which)
automorphisms exist besides products of plaquette automorphisms. Since
we lack an exhaustive characterization of the automorphism group of the
single-vertex graph $\G_C$ for arbitrary conjugacy classes $C$, we focus here
on the flux-free case $C_s \equiv \{1\}$. We comment on the general case in
\cref{sec:gen}. We start in this section with definitions used throughout this
appendix.

We denote the set of vertices associated to link $l \in L_\hex$ as $V_l$, and
the set of vertices associated to site $s \in S_\hex$ as $V_s$. We are
primarily interested in a subgroup of the full automorphism group $\Aut{\G}
\equiv \A_\G$, consisting of automorphisms that map the sets $V_l$ to
themselves. This immediately implies that the same must be true for the vertex
sets $V_s$. We denote this subgroup of the automorphism group as
$\A_\G^\text{loc}$. It is easy to see that every automorphism in $\Aut{\G}$ can
be expressed as an automorphism in $\A_\G^\text{loc}$ followed by a symmetry of the 
underlying decorated Honeycomb lattice.

Every automorphism in $\A_\G^\text{loc}$ can be written in the form
\begin{align}
	\Phi = \prod_{l\in L_\hex} \varphi_l \circ \prod_{s \in S_\hex}\phi_s\,,
\end{align}
where $\varphi_l \in \Sym{V_l}$ and $\phi_s \in \Sym{V_s}$. So in particular
the restriction $\Phi_s: \G_s \rightarrow \G_s$ of $\Phi$ onto $\G_s$ is
well-defined and an automorphism of $\G_s$. Conversely, if $\Phi$ is a
permutation of $V_\G$ such that its restriction $\Phi_s$ onto $\G_s$ is an
automorphism of $\G_s$ for every $s \in S_\hex$, then $\Phi$ is an automorphism
of $\G$. This is so, because the amalgamation does not introduce additional edges.
Thus, to characterize the automorphisms $\Phi \in \A_\G^\text{loc}$, we can
apply our results from \cref{sec:locaut} to the restrictions $\Phi_s$ for every
$s \in S_\hex$.

This implies that the permutations $\varphi_s$ are uniquely
determined by the permutations $\varphi_l$. By identifying the vertices with
group elements (see \cref{sec:MWIS_tess}), the permutations $\varphi_l$ are
uniquely represented by functions $\sigma_l \in \Sym{G}$. 
For every site $s$ with emanating links $l_1, l_2, l_3$, listed in
counterclockwise order, condition~\eqref{eq:AutomorphismCondition} for $C =
\{1\}$ leads to the constraints
\begin{subequations}
    \begin{align}
	    \sigma_{l_3}((gh)^{-1}) 
      = (\sigma_{l_1}(g)\sigma_{l_2}(h))^{-1}\quad\forall g,h \in G
      \label{eq:cons1}
    \end{align}
    if the edges are directed inwards at $s$, and 
    \begin{align}
	    \sigma_{l_3}((gh)^{-1}) 
      = (\sigma_{l_2}(g)\sigma_{l_1}(h))^{-1}\quad\forall g,h \in G
      \label{eq:cons2} 
    \end{align}
    if the edges are directed outwards at $s$ (recall \cref{fig:amalg}).
\end{subequations}

We denote the subgroup of $\Sym{G}^{|L_\hex|}$, consisting of elements that
satisfy \cref{eq:cons1} and \cref{eq:cons2} at every site $s$, as
$\mathcal{L}_{\Sym{G}}$. Moreover, every $\vec{\sigma} \in
\mathcal{L}_{\Sym{G}}$ can be translated back to a set of permutations
$\varphi_l$. For these, there exist unique permutations $\phi_s$, such that
their composition is an automorphism in $\A_\G^\text{loc}$.  
Overall, this shows that there is a bijection 
\begin{align}
    \label{eq:bij}
    \Lambda: \mathcal{L}_{\Sym{G}} \rightarrow \A_\G^\text{loc}
    \,.
\end{align}
For convenience, we still refer to the elements of $\mathcal{L}_{\Sym{G}}$ as
\emph{automorphisms}. In the following, we characterize the group
$\mathcal{L}_{\Sym{G}}$.  

The classification in \cref{sec:locaut} shows that for every element
$(\sigma_{l})_{l \in L_\hex} \in \mathcal{L}_{\Sym{G}}$, the functions
$\sigma_{l}$ have the form 
\begin{align} 
    \label{eq:sigmae}
    \sigma_{l} = \lambda_{g_l} \circ \rho_{h_l} \circ \tau_l
\end{align}
on every link $l \in L_\hex$, for some $g_l, h_l \in G$ and $\tau_l \in
\aut{G}$.  
This expression does not use the full strength of our result in
\cref{sec:locaut}. However, the
form~\eqref{eq:sigmae} has the advantage that there is no distinguished link.
Later, we factor out global group automorphisms such that only left- and
right multiplications remain. 
This step anyway breaks the form of our classification in \cref{sec:locaut}, as
the group automorphism $\tau_l$ can ``leave behind'' a conjugation, so
that keeping track of the explicit forms from \cref{sec:locaut} would only
complicate the proof without much benefit.

Note that in the case $\tau_l = \Id$, the permutation $\sigma_l =
\lambda_{g_l}\circ \rho_{h_l}$ exactly corresponds to the permutation
$\varphi_l(g_l, h_l)$ defined in the main text.

In \cref{sec:model,sec:gsprop} we constructed the plaquette automorphisms
$\Theta_p(g) \in \A^\text{loc}_\G$. Via the bijection $\Lambda$, these can be
translated to elements of $\mathcal{L}_{\Sym{G}}$. We refer to these maps as
$\boldsymbol{\Theta}_p(g) := \Lambda^{-1}(\Theta_p(g))$. 
Pictorially they are represented as follows:
\begin{center}
    \includegraphics[width = 0.5\linewidth]{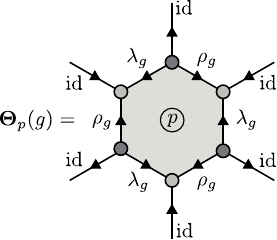}
\end{center}
It is straightforward to verify that the maps $\boldsymbol{\Theta}_p(g)$
satisfy \cref{eq:cons1} and \cref{eq:cons2}. We denote the subgroup of
$\mathcal{L}_{\Sym{G}}$ that is generated by plaquette automorphisms as
$\mathcal{P}_G$.

Lastly, we define the natural group action of $\mathcal{L}_{\Sym{G}}$ on
$\mathcal{L}_G$ by
\begin{align}
    \mathcal{L}_{\Sym{G}} \times \mathcal{L}_G &\rightarrow \mathcal{L}_G
    \,,\nonumber\\
    \boldsymbol{\sigma} \cdot (g_l)_{l\in L_\hex} &\mapsto (\sigma_l(g_l))_{l \in L_\hex}
    \,.
    \label{eq:grpact}
\end{align}
In the following, we omit the dot and write the group action just as
juxtaposition. Every subgroup of $\mathcal{L}_{\Sym{G}}$ then also induces a
natural group action on $\mathcal{L}_G$ by \cref{eq:grpact}.

\subsection{Orbits of $\mathcal{L}_G$ under plaquette automorphisms}%

In this section, we review a result by Cui~\etal~\cite[Theorem 2.4]{Cui_2020}
concerning the number of orbits in $\mathcal{L}_G$ under the group action of
plaquette automorphisms $\mathcal{P}_G$.

To characterize the orbits, it is useful to introduce another group action
\begin{align}
    & G \times \Hom{\pi_1(\TT,p_0), G} \rightarrow \Hom{\pi_1(\TT,p_0), G}
    \,,\nonumber\\
	& g \cdot\ \psi \mapsto (\gamma \mapsto g \psi(\gamma) g^{-1})\,.\label{eq:grpactconj}
\end{align}
Here, $\pi_1(\TT,p_0)$ denotes the first homotopy group of the torus with base
point $p_0$, i.e., $\pi_1(\TT,p_0) \simeq \Z^2$. We use a plaquette
$p_0$ as base point, since we consider paths on the dual lattice
$\tilde{\Lambda}_\hex = (\tilde{S}_\hex, \tilde{L}_\hex, \tilde{P}_\hex)$ with
$\tilde{S}_\hex = P_\hex$, $\tilde{P}_\hex =V_\hex$ and $\tilde{L}_\hex =
\{\{p,p'\}\, |\, p,p' \in \tilde{S}_\hex\} \simeq L_\hex$. That is, the base
point $p_0$ is a plaquette of the original lattice $\Lambda_\hex$. For two
groups $G, H$, $\Hom{G, H}$ denotes the set of group homomorphisms from $G$
to $H$. 
We denote the set of orbits under this group action by $\Hom{\pi_1(\TT,p_0),
G}/G$. In the following, $X = \{\psi_1,\ldots,\psi_m\}$ denotes a set of
representatives of these orbits.

As introduced in \cref{sec:fluxlattice}, we endow the dual lattice with an
orientation in the following way. Let $\gamma = (\ldots, p_1, p_2, \ldots)$ be
an oriented path on the dual lattice. This path contains the dual
edge $\{p_1,p_2\}$, which corresponds uniquely to an edge $l$ of the original
lattice. We say that $\gamma$ crosses the edge $l$. We set
$\text{sign}(\{p_1,p_2\}, \gamma) = 1$ if the crossed edge $l$ points to the
left of $\gamma$, and we set $\text{sign}(\{p_1,p_2\}, \gamma) = -1$ if the
crossed edge $l$ points to the right of $\gamma$. [So $\text{sign}(\{p_1,p_2\},
\gamma) = \sigma_{\{p_1,p_2\}}$ in the notation of the main text, recall
\cref{fig:fig4}.]

For every path $\gamma$ on the dual lattice, we define the \emph{directed
product} as $g_\gamma := \prod_{l \in \gamma} g_l^{\text{sign}(l, \gamma)}$.
Let $s \in \tilde{P}_\hex = S_\hex$ be a dual plaquette (= a site of
the honeycomb lattice). Let $\gamma_s$ be a path that traverses the boundary
of this dual plaquette in counterclockwise direction; we denote the crossed
links as $l_1, l_2, l_3 \in L_\hex$. 
If the links are directed inwards at $s$, then $g_{\gamma_s} =
g_{l_1}g_{l_2}g_{l_3}$.  If the links are directed outwards at $s$, then
$g_{\gamma_s} = g_{l_1}^{-1}g_{l_2}^{-1}g_{l_3}^{-1}$. In both cases, and for
configurations in $\mathcal{L}_G$, \cref{eq:zf1,eq:zf2} imply that
$g_{\gamma_s} = 1$. It is easy to see that this implies $g_\gamma = 1$ for
every contractible loop $\gamma$. This observation is crucial for the proof by
Cui~\etal~\cite{Cui_2020}.

Let $T$ be a spanning tree of the dual lattice (i.e., a subgraph
without cycles that connects to every site), and $p_1, p_2 \in P_\hex$
some plaquettes such that $l := \{p_1,p_2\} \notin T$. Then there exist unique
paths $\gamma_1, \gamma_2 \subseteq T$ that connect $p_0$ to $p_1$ and $p_2$
and do not contain duplicate edges. 
For each $\psi \in X$, we define the group elements
\begin{align}
    g_{\{p_1,p_2\}}^\psi 
    := 
    \left[ 
        \psi(\gamma_1 \circ (p_1,p_2) \circ \gamma_2^{-1})
    \right]^{-\text{sign}(\{p_1,p_2\},(p_1,p_2))}
    \,.
\end{align}
In this equation, $(p_1, p_2)$ is interpreted as the path on the dual lattice
that starts at $p_1$ and ends at $p_2$; ``$\circ$'' denotes the
concatenation of paths on the dual lattice.
With these group elements, we define a configuration $(g_l^\psi)_{l \in L_\hex}
\in \mathcal{L}_G$ by
\begin{align}
    g_l^\psi := 
    \begin{cases}
        1, & l \in T\,,\\
        g_{\{p_1, p_2\}}^\psi, & l = \{p_1,p_2\} \notin T \,.
    \end{cases}
\end{align}
This definition ensures that for every closed path $\gamma$ on the dual lattice
which contains $p_0$, the directed product satisfies
\begin{align}
    \psi(\gamma) 
    = \prod_{l \in \gamma} (g_l^\psi)^{\textup{sign}(l,\gamma)}
    \,.
\end{align}
Finally, we can formulate the theorem from Ref.~\cite{Cui_2020}:
\begin{theorem}[Cui~\etal~\cite{Cui_2020}]
    \label{thm:Cui}
    The elements $(g_l^\psi)_{l \in L_\hex} \in \mathcal{L}_G$ for $\psi \in X$
    form a set of representatives of the orbits $\mathcal{L}_G/\mathcal{P}_G$.
    A given element $\boldsymbol{h} := (h_l)_{l \in L_\hex} \in \mathcal{L}_G$
    is in the same orbit as $(g^\psi_l)_{l \in L_\hex}$ if and only if the map
    \begin{align}
        \label{eq:defpsi}
        \Psi_{\boldsymbol{h}}: 
        \pi_1(\TT,p_0) \rightarrow G\,,\quad
        \gamma \mapsto \prod_{l \in \gamma} h_l^{\textup{sign}(l,\gamma)} 
    \end{align}
    lies in the same orbit as $\psi$ in $\Hom{\pi_1(\TT,p_0),G}$. 
\end{theorem}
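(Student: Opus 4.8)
The plan is to recognize $\mathcal{L}_G$ as the set of \emph{flat $G$-connections} on the dual lattice and $\mathcal{P}_G$ as the corresponding group of \emph{gauge transformations}, so that the assignment $\boldsymbol{h}\mapsto\Psi_{\boldsymbol{h}}$ of \cref{eq:defpsi} is precisely the holonomy representation. First I would verify that $\Psi_{\boldsymbol{h}}$ is a well-defined element of $\Hom{\pi_1(\TT,p_0),G}$: multiplicativity $\Psi_{\boldsymbol{h}}(\gamma\circ\gamma')=\Psi_{\boldsymbol{h}}(\gamma)\Psi_{\boldsymbol{h}}(\gamma')$ is immediate from the definition of the directed product, and independence of the homotopy class follows from $g_\gamma=1$ for every contractible loop, which was already deduced from the zero-flux constraints \cref{eq:zf1,eq:zf2}. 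This produces a map $\mathcal{L}_G\to\Hom{\pi_1(\TT,p_0),G}$.

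Next I would pin down the effect of a plaquette automorphism $\boldsymbol{\Theta}_p(g)$ on the holonomy. Since $\boldsymbol{\Theta}_p(g)$ acts by left/right multiplication on exactly the links bounding $p$, it is the gauge transformation supported at the dual vertex $p$. In the based holonomy of a loop $\gamma$ through $p_0$, all gauge factors telescope except the one at the base point: a transformation at any $p\neq p_0$ leaves $\Psi_{\boldsymbol{h}}$ invariant, whereas the one at $p_0$ conjugates it by $g$. Hence the $\mathcal{P}_G$-action descends to the conjugation action \cref{eq:grpactconj} on the target, and $\boldsymbol{h}\mapsto\Psi_{\boldsymbol{h}}$ induces a well-defined map $\mathcal{L}_G/\mathcal{P}_G\to\Hom{\pi_1(\TT,p_0),G}/G$. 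The theorem then amounts to showing this induced map is a bijection, with $(g_l^\psi)_{l}$ a distinguished preimage of $\psi$.

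Surjectivity is immediate from the spanning-tree construction preceding the theorem, which yields for each $\psi$ a configuration $(g_l^\psi)\in\mathcal{L}_G$ with $\Psi_{(g_l^\psi)}=\psi$. For injectivity I would gauge-fix to the tree $T$: rooting $T$ at $p_0$ and processing its vertices (plaquettes) outward, apply at each non-root vertex $p$ the unique $\boldsymbol{\Theta}_p(\cdot)$ that trivializes the tree edge joining $p$ to its parent. Because the root is never touched this preserves $\Psi_{\boldsymbol{h}}$, and because each vertex is processed once the earlier trivializations are not undone. In the resulting tree gauge the connection is completely determined by the holonomies of the fundamental cycles, i.e.\ by $\Psi_{\boldsymbol{h}}$, so two tree-gauge configurations with equal holonomy coincide. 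To absorb the remaining conjugation ambiguity I would apply the global gauge transformation $\prod_{p}\boldsymbol{\Theta}_p(g)\in\mathcal{P}_G$, which preserves the tree gauge (it fixes every trivial tree link) while conjugating the holonomy by $g$; choosing $g$ with $g\,\Psi_{\boldsymbol{h}}\,g^{-1}=\psi$ --- possible exactly when $\Psi_{\boldsymbol{h}}$ and $\psi$ share a conjugation orbit --- lands the gauge-fixed $\boldsymbol{h}$ on $(g_l^\psi)$. This shows $\boldsymbol{h}$ lies in the orbit of $(g_l^\psi)$ iff $\Psi_{\boldsymbol{h}}$ lies in the orbit of $\psi$, which gives both assertions at once.

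I expect the main obstacle to be the second step: establishing rigorously, with the paper's orientation and $\text{sign}(l,\gamma)$ conventions, that $\boldsymbol{\Theta}_p(g)$ is exactly the dual-vertex gauge transformation and therefore modifies $\Psi_{\boldsymbol{h}}$ only through a conjugation at the base point. Once this is in place the gauge-fixing argument is routine, but care is needed to confirm that the telescoping of gauge factors is compatible with the chosen sign convention and that $\prod_p\boldsymbol{\Theta}_p(g)$ genuinely realizes a constant gauge transformation, leaving no residual twist on the non-tree links.
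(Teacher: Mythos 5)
Your proposal is correct, but there is nothing in the paper to compare it against: \cref{thm:Cui} is imported verbatim as Theorem~2.4 of Cui~\etal~\cite{Cui_2020}, and the paper only supplies the setup (dual lattice, sign convention, spanning tree, the configurations $(g_l^\psi)$) plus the single observation that contractible loops carry trivial directed product, which it notes is ``crucial for the proof by Cui~\etal''. Your gauge-theoretic argument supplies the missing proof, and it is the natural one (and, in spirit, the one in the cited reference). All three pillars hold with the paper's conventions. First, $\Psi_{\boldsymbol{h}}$ is well defined: multiplicativity is immediate, and homotopy invariance follows because homotopic based loops differ by backtracks and by face moves across dual plaquettes (= honeycomb sites), each of which changes the directed product by a flux that vanishes by \cref{eq:zf1,eq:zf2}. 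Second, the identification of $\boldsymbol{\Theta}_p(g)$ with the gauge transformation at the dual vertex $p$ does check out: when a dual path runs from $p_1$ across link $l$ into $p_2$, the factor is $h_l^{+1}$ precisely when $p_1$ lies to the left of the link's arrow, and $\boldsymbol{\Theta}_{p_1}(g)$ left-multiplies exactly those links (right-multiplying by $g^{-1}$ the others), so every crossed factor transforms as $x_{p_1}(\cdot)\,x_{p_2}^{-1}$ and the product telescopes to a conjugation at the base point only. Third, the tree-gauge fixing is well posed, since a tree edge is incident only to its two endpoints, so processing vertices outward from the root $p_0$ never undoes an earlier trivialization, and in tree gauge the non-tree entries are precisely the fundamental-cycle holonomies; the residual conjugation is realized by $\prod_p \boldsymbol{\Theta}_p(g)$, which preserves the tree gauge.

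One point confirming that the convention check you flag as the ``main obstacle'' is genuinely needed: with the paper's left-to-right multiplication and the exponent $-\textup{sign}(\{p_1,p_2\},(p_1,p_2))$ in the definition of $g^\psi_{\{p_1,p_2\}}$, one finds $\Psi_{(g_l^\psi)}=\psi^{-1}$ pointwise rather than $\psi$. This is harmless for the orbit statement, because $\pi_1(\TT,p_0)\cong\Z^2$ is abelian, so pointwise inversion is again a homomorphism and is a conjugation-equivariant bijection of $\Hom{\pi_1(\TT,p_0),G}$; but to land your gauge-fixed configuration exactly on $(g_l^\psi)$ you must either drop that minus sign or build the same inversion into your holonomy convention.
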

We later use this theorem to obtain a characterization of the automorphisms 
that have the form $\sigma_l = \lambda_{z_l}$ for $z_l \in Z(G)$ (see \cref{sec:class_center}).

\subsection{Global automorphisms and loop automorphisms}%
\label{sec:class_center}

In this section, we introduce global automorphisms and general loop
automorphisms. Together with plaquette automorphisms, these form the building
blocks of the automorphism group $\mathcal{L}_{\Sym{G}}$ (recall \cref{eq:bij}).

Let $\tau \in \Aut{G}$ be a group automorphism, then we define the \emph{global
automorphism} $\boldsymbol{\tau} \in \mathcal{L}_{\Sym{G}}$ by
\begin{align}
	\label{eq:tauglob}
   	\boldsymbol{\tau} := (\tau)_{l \in L_\hex}
    \,,
\end{align}
i.e., the map $\tau$ is associated to every link. Conditions \eqref{eq:cons1}
and \eqref{eq:cons2} follow directly from $\tau$ being a group automorphism.

To define general \emph{loop automorphisms}, consider $(z_l)_{l \in L_\hex} \in
\mathcal{L}_{Z(G)}$, i.e., a configuration of elements from the center $Z(G)$
of $G$, such that for every site with links $l_1,l_2,l_3$ (listed in
counterclockwise order) the constraint
\begin{align}
    \label{eq:zcons1}
    z_{l_1}z_{l_2}z_{l_3} = 1
\end{align}
is satisfied. Note that as $Z(G)$ is abelian no case distinction is needed.
Also note that in contrast to $\mathcal{L}_G$, $\mathcal{L}_{Z(G)}$ forms a group
under component-wise multiplication because $Z(G)$ is abelian. Such a
configuration defines an element in $\mathcal{L}_{\Sym{G}}$ by left
multiplication with $z_l$ on every link $l \in L_\hex$. 
Hence, we have an injective group homomorphism
\begin{align}
    \label{eq:center_aut}
    \Gamma: \mathcal{L}_{Z(G)} \rightarrow \mathcal{L}_{\Sym{G}}
    \,,\quad
    (z_l)_{l\in L_\hex} \mapsto (\lambda_{z_l})_{l\in L_\hex}\,.
\end{align}
The map $\lambda_g$ for $g \in G$ is defined in \cref{eq:lambda}.
The fact that $\Gamma(\vec{z}) \in \mathcal{L}_{\Sym{G}}$ for $\vec{z} \in
\mathcal{L}_{Z(G)}$ follows from \cref{eq:zcons1} and because
elements of the center commute with all group elements. 
$\Gamma$ is a group homomorphism since for all $x \in G$
\begin{align}
	\lambda_{z z'}(x) 
  = z z' x 
  = z \lambda_{z'}(x) 
  = (\lambda_z \circ \lambda_{z'})(x)
  \,.
\end{align}
Finally, if $\Gamma(\vec{z}) = (\Id)_{l \in L_\hex}$, then it follows that for
all $x \in G$ it is $x = \lambda_{z_l}(x) = z_l x$ and thus $z_l = 1$. Hence
$\text{ker}(\Gamma) = \vec{1}$, which shows injectivity.

The definition of $\Gamma$ immediately shows that for $\vec{y}, \vec{z}\in
\mathcal{L}_{Z(G)}$ it holds that $\Gamma(\vec{y})\vec{z} = \vec{y}\vec{z}$.
Here, the right-hand side is to be understood as component-wise multiplication
in $\mathcal{L}_{Z(G)}$. 
This allows us to define a subgroup of $\mathcal{L}_{\Sym{G}}$ as
$\mathcal{Z}_{\Sym{G}} := \Gamma(\mathcal{L}_{Z(G)})$. The previous comment and
the homomorphism property show that for $\vec{z} \in \mathcal{L}_{Z(G)}$ and
$\boldsymbol{\sigma} \in \mathcal{Z}_{\Sym{G}}$, it holds that
$\Gamma(\boldsymbol{\sigma} \vec{z}) = \boldsymbol{\sigma} \Gamma(\vec{z})$.
Hence, we find that $\Gamma(\boldsymbol{\sigma}\vec{1}) = \boldsymbol{\sigma}$.

A special class of automorphisms in
$\mathcal{Z}_{\Sym{G}}$ are \emph{loop automorphisms}. 
Let $\ell = (l_1,l_2,\ldots)$ be an arbitrary directed, closed loop on the lattice $\Lambda_\hex$.
For $l \in \ell$ we write $\ell \uparrow\uparrow l$ if the direction of $\ell$ 
matches the direction of the link $l$ and otherwise we write $\ell \uparrow\downarrow l$.
Then we define $\boldsymbol{\Theta}_\ell(z) \in \mathcal{Z}_{\Sym{G}}$ by
\begin{align}
    \label{eq:loop}
    [\boldsymbol{\Theta}_\ell(z)]_l := 
    \begin{cases} 
	\lambda_{z} & \text{for $l \uparrow\uparrow \ell$}\,,\\
	\lambda_{z^{-1}} & \text{for $l \uparrow\downarrow \ell$}\,,\\
        \operatorname{id} & \text{for $l \notin \ell$}\,,
    \end{cases}
\end{align}
for every $z \in Z(G)$. It is easy to check, that $\boldsymbol{\Theta}_\ell(z)$ 
satisfies \cref{eq:cons1,eq:cons2} and thus defines an automorphism in 
$\mathcal{L}_{\Sym{G}}$. 
Note that if $\ell$ is the boundary of a plaquette, we recover the plaquette
automorphisms, in this case the definition~\eqref{eq:loop} defines an 
automorphism for every $z \in G$.
However, in the general case, \cref{eq:loop} only defines an automorphism if $z
\in Z(G)$. This is consistent with Ref.~\cite{Maier2025} where some of us found
loop automorphisms for every closed loop when working with the abelian group
$\Z_2$.

We now show that $\mathcal{Z}_{\Sym{G}}$ is generated by plaquette
automorphisms and loop automorphisms along non-contractible loops.

\begin{figure}[tbp]
    \centering
    \includegraphics[width = 1.0\linewidth]{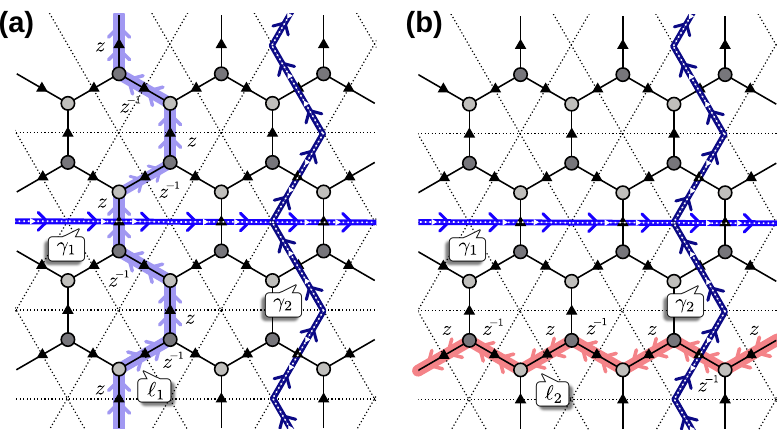}
    \caption{%
        \textbf{Generators of $\pi_1(\TT,p_0)$}
        \textbf{(a)} The dual paths $\gamma_1$ and $\gamma_2$ generate the
        group $\pi_1(\TT,p_0)$. Applying the loop automorphism
        $\boldsymbol{\Theta}_{\ell_1}(z)$ on an arbitrary configuration in
        $\mathcal{L}_{Z(G)}$ multiplies every edge along $\ell_1$ with $z$
        ($z^{-1}$) if the edge points in the same (opposite) direction as
        $\ell_1$. The directed product of group elements along $\gamma_2$
        remains unchanged, whereas the directed product along $\gamma_1$ is
        multiplied by $z$.
        \textbf{(b)} Applying $\boldsymbol{\Theta}_{\ell_2}(z)$ on any
        configuration in $\mathcal{L}_{Z(G)}$ leaves the directed product along
        $\gamma_1$ unchanged, the directed product along $\gamma_2$ is
        multiplied by $z$.
    }
    \label{fig:loops}
\end{figure}

To achieve this we use the one-to-one correspondence of configurations $\mathcal{L}_{Z(G)}$ and
automorphisms in $\mathcal{Z}_{\Sym{G}}$, i.e., we must classify configurations
in $\mathcal{L}_{Z(G)}$; to this end, we invoke \cref{thm:Cui}.
As $Z(G)$ is abelian, the group action of $Z(G)$ by conjugation [see
\cref{eq:grpactconj}] is trivial and therefore $X =
\Hom{\pi_1(\TT,p_0),Z(G)}$. Now choose two loops
$\gamma_1$ and $\gamma_2$ on the dual lattice that generate the group
$\pi_1(\TT, p_0)$, and two loops $\ell_1$ and $\ell_2$ on the lattice that are
non-contractible and not homotopic, as shown in \cref{fig:loops}. 
Consider the configurations
\begin{align}
    \label{eq:ref_conf}
    \boldsymbol{g}^{z_1,z_2} 
    := \boldsymbol{\Theta}_{\ell_1}(z_1) \boldsymbol{\Theta}_{\ell_2}(z_2) \boldsymbol{1}
    \,,
\end{align}
for arbitrary $z_1, z_2 \in Z(G)$. 
These configurations satisfy 
\begin{subequations}
    \begin{alignat}{3}
        \Psi_{\boldsymbol{g}^{z_1,z_2}}(\gamma_1) 
        &= \prod_{l \in \ell_1} (g^{z_1,z_2}_l)^{\text{sign}(l,\gamma_1)} 
        &&= z_1
        \,,\\
        \Psi_{\boldsymbol{g}^{z_1,z_2}}(\gamma_2) 
        &=\prod_{l \in \ell_2} (g^{z_1,z_2}_l)^{\text{sign}(l,\gamma_2)} 
        &&= z_2
        \,.
    \end{alignat}
\end{subequations}
This shows that all configurations $\boldsymbol{g}^{z_1,z_2}$ belong to
mutually distinct orbits. 
Furthermore, every orbit is reached by this construction because every
homomorphism in $\Hom{\pi_1(\TT,p_0),Z(G)}$ is uniquely defined by the image of
the generators $\gamma_1$ and $\gamma_2$. 
Thus we have shown that every element $\boldsymbol{h} \in \mathcal{L}_{Z(G)}$
can be written as
\begin{align}
    \label{eq:char_center_conf}
    \boldsymbol{h} 
    = \left(
        \boldsymbol{\Theta}_{\ell_1}(z_1) 
        \circ \boldsymbol{\Theta}_{\ell_2}(z_2) 
        \circ \prod_{p \in P_\hex} \boldsymbol{\Theta}_p(z_p) 
    \right)\boldsymbol{1}
    \,,
\end{align}
for suitable $z_1, z_2 \in Z(G)$ and $z_p \in Z(G)$ for $p \in P_\hex$.
The characterization of $\mathcal{Z}_{\Sym{G}}$ follows almost trivially:
\begin{proposition}
    \label{prop:center}
    Every element $\boldsymbol{\sigma} \in \mathcal{Z}_{\Sym{G}}$ has the form
    \begin{align}
        \boldsymbol{\sigma} 
        = \boldsymbol{\Theta}_{\ell_1}(z_1)
        \circ \boldsymbol{\Theta}_{\ell_2}(z_2) 
        \circ \prod_{p\in P_\hex} \boldsymbol{\Theta}_p(z_p) 
    \end{align}
    for some $z_1,z_2 \in Z(G)$ and $z_p \in Z(G)$ for $p \in P_\hex$. 
\end{proposition}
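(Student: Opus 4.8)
The plan is to deduce the proposition directly from the characterization \eqref{eq:char_center_conf} of central configurations together with the bijective correspondence $\Gamma$ between $\mathcal{L}_{Z(G)}$ and $\mathcal{Z}_{\Sym{G}}$. The conceptual work has effectively already been carried out upstream: \eqref{eq:char_center_conf} expresses every configuration $\boldsymbol{h}\in\mathcal{L}_{Z(G)}$ as the image of $\boldsymbol{1}$ under a product of loop and plaquette automorphisms, and we have established the key identity $\Gamma(\boldsymbol{\tau}\boldsymbol{1})=\boldsymbol{\tau}$ for every $\boldsymbol{\tau}\in\mathcal{Z}_{\Sym{G}}$. What remains is to transport the statement from configurations to automorphisms.

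Concretely, given $\boldsymbol{\sigma}\in\mathcal{Z}_{\Sym{G}}$, first I would form the associated central configuration $\boldsymbol{h}:=\boldsymbol{\sigma}\boldsymbol{1}\in\mathcal{L}_{Z(G)}$ obtained by acting with $\boldsymbol{\sigma}$ on the all-identity configuration via \eqref{eq:grpact}. Applying \eqref{eq:char_center_conf} yields central elements $z_1,z_2\in Z(G)$ and $z_p\in Z(G)$ for $p\in P_\hex$ such that $\boldsymbol{h}=\boldsymbol{\sigma}'\boldsymbol{1}$, where $\boldsymbol{\sigma}':=\boldsymbol{\Theta}_{\ell_1}(z_1)\circ\boldsymbol{\Theta}_{\ell_2}(z_2)\circ\prod_{p\in P_\hex}\boldsymbol{\Theta}_p(z_p)$. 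Crucially each factor $\boldsymbol{\Theta}_{\ell_i}(z_i)$ and $\boldsymbol{\Theta}_p(z_p)$ lies in $\mathcal{Z}_{\Sym{G}}$ (they are loop automorphisms, resp.\ plaquette automorphisms, with central argument), so the composite $\boldsymbol{\sigma}'$ also lies in $\mathcal{Z}_{\Sym{G}}$. Then I would apply $\Gamma$ to the single configuration $\boldsymbol{h}$ and read it two ways: on one hand $\Gamma(\boldsymbol{h})=\Gamma(\boldsymbol{\sigma}\boldsymbol{1})=\boldsymbol{\sigma}$, and on the other $\Gamma(\boldsymbol{h})=\Gamma(\boldsymbol{\sigma}'\boldsymbol{1})=\boldsymbol{\sigma}'$, both by the identity above. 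Equating the two gives $\boldsymbol{\sigma}=\boldsymbol{\sigma}'$, which is exactly the claimed decomposition.

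The main obstacle here is essentially already resolved: the substantive step is the classification of orbits in $\mathcal{L}_{Z(G)}$ via Cui's theorem (\cref{thm:Cui}), which—because $Z(G)$ is abelian and conjugation acts trivially—reduces $X$ to $\Hom{\pi_1(\TT,p_0),Z(G)}$ and lets every homomorphism be matched by the two non-contractible loop automorphisms $\boldsymbol{\Theta}_{\ell_1},\boldsymbol{\Theta}_{\ell_2}$ and the contractible (plaquette) ones. Given that input, the only point requiring care in the present argument is that the decomposition \eqref{eq:char_center_conf} of $\boldsymbol{h}$ is an identity of \emph{configurations} (an equality in $\mathcal{L}_{Z(G)}$), whereas the proposition is an identity of \emph{automorphisms}; the bridge is precisely the relation $\Gamma(\boldsymbol{\tau}\boldsymbol{1})=\boldsymbol{\tau}$, so I would verify that each factor genuinely sits inside $\mathcal{Z}_{\Sym{G}}$ before invoking it for the composite $\boldsymbol{\sigma}'$.
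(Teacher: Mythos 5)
Your proposal is correct and follows essentially the same route as the paper: the paper likewise writes $\boldsymbol{\sigma}=\Gamma(\vec{h})$ for a configuration $\vec{h}\in\mathcal{L}_{Z(G)}$, invokes the characterization \eqref{eq:char_center_conf}, and pushes it through $\Gamma$ using the identities $\Gamma(\boldsymbol{\tau}\vec{z})=\boldsymbol{\tau}\,\Gamma(\vec{z})$ and $\Gamma(\vec{1})=\operatorname{id}$, which is exactly your double application of $\Gamma(\boldsymbol{\tau}\boldsymbol{1})=\boldsymbol{\tau}$. Your explicit check that each factor $\boldsymbol{\Theta}_{\ell_i}(z_i)$, $\boldsymbol{\Theta}_p(z_p)$ lies in $\mathcal{Z}_{\Sym{G}}$ is a point the paper leaves implicit, but the argument is otherwise the same.
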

\begin{proof}
    By definition, every element $\boldsymbol{\sigma} \in
    \mathcal{Z}_{\Sym{G}}$ is generated by a configuration $\vec{h} =
    (h_l)_{l\in L_\hex} \in \mathcal{L}_{Z(G)}$, i.e., $\boldsymbol{\sigma} =
    \Gamma(\vec{h})$. As shown, $\vec{h}$ has the form
    \cref{eq:char_center_conf}, thus we obtain
    \begin{subequations}
        \begin{align}
            \Gamma(\vec{h})
            &= \boldsymbol{\Theta}_{\ell_1}(z_1)
            \circ \boldsymbol{\Theta}_{\ell_2}(z_2)
            \circ \prod_{p \in P_\hex} \boldsymbol{\Theta}_p(z_p) 
            \circ \Gamma(\vec{1})
            \\
            &= \boldsymbol{\Theta}_{\ell_1}(z_1)
            \circ \boldsymbol{\Theta}_{\ell_2}(z_2)
            \circ \prod_{p \in P_\hex} \boldsymbol{\Theta}_p(z_p)
        \end{align}
    \end{subequations}
    which concludes the proof.
\end{proof}

\subsection{Characterization of $\mathcal{L}_{\Sym{G}}$}
\label{sec:caracterization}

We are now equipped to characterize the automorphism group
$\mathcal{L}_{\Sym{G}}$. As a first step, we show that the group automorphism
in \cref{eq:sigmae} can by removed by pre-composing a suitable group
automorphism on each link.

\begin{lemma}
    \label{lm:reduc}
    Let $\vec{\sigma} \in \mathcal{L}_{\Sym{G}}$, then there exists $\tau \in
    \Aut{G}$ such that for each link $l \in L_\hex$, $\sigma_l \circ \tau =
    \lambda_{g_l} \circ \rho_{h_l}$ for some $g_l,h_l \in G$.
\end{lemma}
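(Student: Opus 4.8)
The plan is to use the explicit form \cref{eq:sigmae}, $\sigma_l = \lambda_{g_l}\circ\rho_{h_l}\circ\tau_l$, and to show that a single group automorphism can cancel all the $\tau_l$ simultaneously once one recognizes that only their classes in $\operatorname{Out}(G) = \Aut{G}/\Inn{G}$ matter. The starting observation is that inner automorphisms can be absorbed into the bi-translation part: by definition \cref{eq:conj} an inner automorphism is $\chi_c = \lambda_c\circ\rho_c$, and since left and right translations commute and compose as $\lambda_a\circ\lambda_b = \lambda_{ab}$, $\rho_a\circ\rho_b = \rho_{ab}$ [recall \cref{eq:lambda,eq:rho}], one has $\lambda_{g}\circ\rho_{h}\circ\chi_c = \lambda_{gc}\circ\rho_{hc}$. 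Hence if $\tau_l\circ\tau$ happens to be inner for some fixed $\tau\in\Aut{G}$ and every link, then $\sigma_l\circ\tau = \lambda_{g_l}\circ\rho_{h_l}\circ(\tau_l\circ\tau)$ is already of the desired form $\lambda_{g_l'}\circ\rho_{h_l'}$. It therefore suffices to prove that all outer classes $[\tau_l]\in\operatorname{Out}(G)$ coincide; then $\tau$ can be taken as the inverse of a representative of that common class.

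The central step is a purely local computation at a single site that forces $[\tau_{l_1}] = [\tau_{l_2}] = [\tau_{l_3}]$ for the three emanating links. Writing $\sigma_l(x) = g_l\,\tau_l(x)\,h_l^{-1}$ and, at an inward site, setting $f_1 := \sigma_{l_1}$, $f_2 := \sigma_{l_2}$, and $f_3(k) := \sigma_{l_3}(k^{-1})^{-1}$, the constraint \cref{eq:cons1} becomes the clean twisted-homomorphism identity $f_3(gh) = f_1(g)\,f_2(h)$ for all $g,h\in G$. Specializing $h=1$ and $g=1$ gives $f_1(g) = f_3(g)f_2(1)^{-1}$ and $f_2(h) = f_1(1)^{-1}f_3(h)$, and substituting back shows that $F(g) := f_3(1)^{-1}f_3(g)$ satisfies $F(gh) = F(g)F(h)$; since $\sigma_{l_3}$ is a bijection, $F$ is a genuine group automorphism. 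Tracing the definitions back one finds $F = \chi_{g_{l_3}}\circ\tau_{l_3}$ and, analogously, that $\tau_{l_1}$ and $\tau_{l_2}$ each equal $F$ up to an inner automorphism and left/right translations; hence all three share the outer class $[F]$. The outward case \cref{eq:cons2} is identical after exchanging $l_1\leftrightarrow l_2$.

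From here the global conclusion is formal: two links meeting at a common site have equal outer class, and since the adjacency ``shares a site'' connects all links of the honeycomb lattice (its line graph is connected), propagating the equality over the connected lattice yields a single common class $[\tau_0]$ for every $l\in L_\hex$. Fixing any representative $\tau_0\in\Aut{G}$ of this class and setting $\tau := \tau_0^{-1}$, each $\tau_l\circ\tau$ lies in $\Inn{G}$, so the absorption identity of the first paragraph gives $\sigma_l\circ\tau = \lambda_{g_l}\circ\rho_{h_l}\circ\chi_{c_l} = \lambda_{g_l c_l}\circ\rho_{h_l c_l}$ for suitable $c_l\in G$, which is exactly the asserted form.

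The main obstacle I anticipate is the bookkeeping in the local step: one must separate the $g$- and $h$-dependence in the functional equation cleanly enough to exhibit the single automorphism $F$ and to verify that each of $\tau_{l_1},\tau_{l_2},\tau_{l_3}$ differs from $F$ only by an inner automorphism, so that no genuine outer-automorphism discrepancy survives. Everything downstream — the absorption of inner automorphisms into bi-translations and the connectivity argument — is routine.
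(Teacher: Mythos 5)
Your proof is correct, and its skeleton matches the paper's: start from the decomposition \cref{eq:sigmae}, extract a site-local functional equation by specializing \cref{eq:cons1,eq:cons2} at the identity, and propagate over the lattice by connectivity. The difference lies in what is propagated. The paper's local step is sharper: fixing $\sigma_{l_1} = \lambda_{\tilde g}\circ\rho_{\tilde h}\circ\tau$ at one link, the constraint with $h=1$ yields $\sigma_{l_3}(x) = \left[\sigma_{l_2}(1)^{-1}\tilde h\right]\tau(x)\,\tilde g^{-1}$, so $\sigma_{l_3}$ (and likewise $\sigma_{l_2}$) is a bi-translation composed with \emph{exactly the same} $\tau$ --- no inner twist ever appears, and the outer-class machinery is never needed. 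You instead track the classes $[\tau_l]\in\operatorname{Out}(G)$, prove site by site that they coincide (your computation $F=\chi_{g_{l_3}}\circ\tau_{l_3}$ and $\tau_{l_1},\tau_{l_2}\in\Inn{G}\circ F$ checks out), and then absorb the residual inner automorphisms into the translations via $\lambda_g\circ\rho_h\circ\chi_c = \lambda_{gc}\circ\rho_{hc}$, using normality of $\Inn{G}$. Your route costs an extra layer of bookkeeping but is manifestly insensitive to the non-uniqueness of the decomposition in \cref{eq:sigmae} (which is only determined up to replacing $\tau_l$ by $\chi_c\circ\tau_l$); the paper's route buys brevity by making one choice at a single link and showing that this specific $\tau$ propagates verbatim to all links.
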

\begin{proof}
    Let $l \in L_\hex$ be an arbitrary link; by \cref{eq:sigmae} we have
    $\sigma_l = \lambda_{\tilde{g}_l}\circ \rho_{\tilde{h}_l}\circ \tau$ with
    $\tau \in \Aut{G}$ and some $\tilde g_l,\tilde h_l\in G$. 
    Then
    \begin{align}
	    \sigma_l \circ \tau^{-1}
        = \lambda_{\tilde{g}_l}
        \circ \rho_{\tilde{h}_l}
        \,.
    \end{align}
    Suppose that $l, l_2, l_3$ are the emanating links of site $s$ and all
    links are directed inwards. Then \cref{eq:cons1} implies that for all $x
    \in G$
    \begin{subequations}
        \begin{align}
            \sigma_{l_3}(x) 
            &= \sigma_{l_2}(1)^{-1} \sigma_{l}(x^{-1})^{-1}\\
            &= \sigma_{l_2}(1)^{-1} [\tilde{g}_l \tau(x^{-1}) \tilde{h}_l^{-1}]^{-1}\\
            &= [\sigma_{l_2}(1)^{-1}\tilde{h}_l] \tau(x) \tilde{g}_l^{-1}\,,
        \end{align}
    \end{subequations}  
    and therefore
    \begin{align}
	    \sigma_{l_3} \circ \tau
      = 
      \lambda_{\sigma_{l_2}(1)^{-1}\tilde{h}_l} 
      \circ \rho_{\tilde{g}_l}
      \,.
    \end{align}
    This argument also holds for $l_2$ and for sites with outward directed
    links.
    As the honeycomb lattice is a connected graph, every link $l' \in L_\hex$
    can be connected to $l$ by a sequence of links and sites. For every site in
    this sequence, the argument above applies. This concludes the proof.
\end{proof}

\cref{lm:reduc} shows that we only have to characterize the graph automorphisms
that arise from pure multiplications. This motivates the definition of the
subgroup
\begin{align}
    \mathcal{L}_{\Sym{G}}^\mathrm{red} :=\,
       &\{\boldsymbol{\sigma} \in \mathcal{L}_{\Sym{G}}\, |\, \forall l \in L_\hex:
        \nonumber\\
       &\qquad\sigma_l = \lambda_{g_l} \circ \rho_{h_l}\;\text{for $g_l, h_l \in G$}\}\,.
    \label{eq:Lsymred}
\end{align}
Before we proceed to the main part of the characterization of 
$\mathcal{L}_{\Sym{G}}^\text{red}$, we prove a technical lemma:
\begin{lemma}
    \label{lm:center}
    Let $g, h \in G$. If for all $x \in G$ it holds $x = gxh$, then $g =
    h^{-1}$ and $g, h \in Z(G)$.
\end{lemma}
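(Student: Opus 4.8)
The plan is to extract both conclusions from two well-chosen evaluations of the defining identity $x = gxh$, without any machinery beyond elementary group axioms. First I would substitute the neutral element $x = 1$ into the hypothesis. This gives $1 = g\cdot 1\cdot h = gh$, so that $h = g^{-1}$, or equivalently $g = h^{-1}$; this already settles the first assertion of the lemma.

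With $h = g^{-1}$ established, I would substitute back into the hypothesis, which now reads $x = gxg^{-1}$ for every $x \in G$. Rearranging yields $xg = gx$ for all $x \in G$, which is exactly the statement that $g$ lies in the center, $g \in Z(G)$. Since $Z(G)$ is a subgroup and is therefore closed under inversion, it follows immediately that $h = g^{-1} \in Z(G)$ as well, completing the proof.

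I do not expect any genuine obstacle here, as the claim is a direct algebraic consequence of the conjugation-type relation $x = gxh$. The only point requiring a modicum of care is the order of the two substitutions: one must first specialize to $x = 1$ to pin down the relation $h = g^{-1}$, and only then back-substitute to read off centrality; attempting to argue centrality before identifying $h$ would leave two coupled unknowns rather than one.
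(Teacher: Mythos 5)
Your proof is correct and follows essentially the same route as the paper's: set $x=1$ to obtain $gh=1$ (hence $g=h^{-1}$), then back-substitute to get $x=gxg^{-1}$ for all $x$, which gives $g\in Z(G)$. The paper leaves the final step $h=g^{-1}\in Z(G)$ implicit, whereas you spell it out via closure of the center under inversion; otherwise the arguments are identical.
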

\begin{proof}
    For $x = 1$ we obtain $1 = gh$, which implies $g = h^{-1}$. Then it follows
    that for all $x \in G$:
    \begin{align}
        x = gxh = gxg^{-1} 
        \quad\Rightarrow\quad
        g \in Z(G)\,.
    \end{align}
\end{proof}

\cref{lm:center} allows us to prove the first part of the characterization of $\mathcal{L}_{\Sym{G}}^\text{red}$:

\begin{lemma}
	\label{lm:equality}
	Let $\boldsymbol{\sigma} = (\sigma_l)_{l \in L_\hex} \in \mathcal{L}_{\Sym{G}}^\text{red}$.
	Then for every site $s \in S_\hex$ with inwards pointig links 
	and $l_1,l_2,l_3 \in L_\hex$ its emanating links 
	listed in counterclockwise order, the permutations $\sigma_{l_i}$ 
	for $i \in \{1,2,3\}$ have the form
	\begin{align}
		(\sigma_{l_1}, \sigma_{l_2},\sigma_{l_3}) = (\lambda_{g_1}\circ\rho_{g_2}, \lambda_{g_2}\circ\rho_{g_3}, \lambda_{g_3}\circ\rho_{g_1} )\,.
	\end{align}
	for some $g_1,g_2,g_3 \in G$.
\end{lemma}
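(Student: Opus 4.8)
The plan is to work entirely within the reduced group $\mathcal{L}_{\Sym{G}}^{\mathrm{red}}$, where each link permutation is a pure two-sided multiplication. By the definition in \cref{eq:Lsymred} I write $\sigma_{l_i}=\lambda_{a_i}\circ\rho_{b_i}$, so that $\sigma_{l_i}(x)=a_i\,x\,b_i^{-1}$ for $i\in\{1,2,3\}$ and some $a_i,b_i\in G$. The crucial bookkeeping fact to record first is that this representation is \emph{not} unique: since $\lambda_{az}\circ\rho_{bz}=\lambda_a\circ\rho_b$ for any $z\in Z(G)$, I may replace $(a_i,b_i)$ by $(a_iz_i,b_iz_i)$ with $z_i\in Z(G)$ without changing $\sigma_{l_i}$ (that these are the only ambiguities is exactly the content of \cref{lm:center}). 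The entire argument then reduces to pinning down the $a_i,b_i$ up to this central freedom and choosing the $z_i$ to bring the triple into the cyclic form.

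The key manipulation is to rewrite the inward-site constraint \cref{eq:cons1} in a homomorphism-like shape. Taking inverses on both sides and using $\sigma_{l_3}((gh)^{-1})^{-1}=b_3\,(gh)\,a_3^{-1}$, the constraint becomes
\[
    b_3\,(gh)\,a_3^{-1} \;=\; a_1\,g\,(b_1^{-1}a_2)\,h\,b_2^{-1}
    \qquad\text{for all }g,h\in G\,.
\]
Writing $c:=b_1^{-1}a_2$, I first show $c\in Z(G)$: the left-hand side is invariant under $g\mapsto gt$, $h\mapsto t^{-1}h$, whereas the right-hand side becomes $a_1\,g\,(tct^{-1})\,h\,b_2^{-1}$; comparing with the original identity (e.g.\ at $g=h=1$) forces $tct^{-1}=c$ for all $t$, i.e.\ $c\in Z(G)$. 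With $c$ central the displayed identity collapses, upon setting $x=gh$, to $b_3\,x\,a_3^{-1}=(a_1c)\,x\,b_2^{-1}$ for every $x\in G$. Rearranging this into the form $x=[(a_1c)^{-1}b_3]^{-1}\,x\,[b_2^{-1}a_3]$ and invoking \cref{lm:center} yields a single central element $z\in Z(G)$ with $(a_1c)^{-1}b_3=b_2^{-1}a_3=z$, that is $b_3=a_1cz$ and $a_3=b_2z$, together with $a_2=b_1c$.

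It then remains to exhibit $g_1,g_2,g_3$. I set $g_1:=a_1cz$, $g_2:=a_2z\,(=b_1cz)$, and $g_3:=a_3\,(=b_2z)$. Using that $c$ and $z$ are central, a direct check gives $g_1\,x\,g_2^{-1}=a_1xb_1^{-1}=\sigma_{l_1}(x)$, $g_2\,x\,g_3^{-1}=a_2xb_2^{-1}=\sigma_{l_2}(x)$, and $g_3\,x\,g_1^{-1}=a_3xb_3^{-1}=\sigma_{l_3}(x)$ (the last because $g_1=b_3$). Equivalently, I have reabsorbed the central factors via the allowed shift $(a_i,b_i)\mapsto(a_iz_i,b_iz_i)$ with $z_1=cz$, $z_2=z$, $z_3=1$. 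This is precisely the claimed cyclic form $(\sigma_{l_1},\sigma_{l_2},\sigma_{l_3})=(\lambda_{g_1}\circ\rho_{g_2},\,\lambda_{g_2}\circ\rho_{g_3},\,\lambda_{g_3}\circ\rho_{g_1})$.

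The only real subtlety — which I expect to be error-prone rather than deep — is tracking the central ambiguity of the $\lambda$–$\rho$ decomposition: the constraint determines $a_i,b_i$ only up to the central factors $c$ and $z$, and one must check that these can be \emph{simultaneously} reabsorbed into a consistent choice of representatives. The two applications of \cref{lm:center} (once to extract $c\in Z(G)$, once to extract $z$) carry out all the group-theoretic work, while everything else is just commuting central elements past arbitrary $x$. The analogous computation starting from \cref{eq:cons2} would yield the mirrored cyclic form at outward-pointing sites, but that is not needed for the present statement.
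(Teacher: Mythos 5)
Your proposal is correct and takes essentially the same route as the paper's proof: both write $\sigma_{l_i}=\lambda_{a_i}\circ\rho_{b_i}$, feed this into the inward-site constraint \cref{eq:cons1}, use \cref{lm:center} to identify the combinations $b_1^{-1}a_2$ and $b_2^{-1}a_3$ as central elements, and then reabsorb these central factors (via $\lambda_{az}\circ\rho_{bz}=\lambda_a\circ\rho_b$ for $z\in Z(G)$) to reach the cyclic form. The only tactical difference is that the paper obtains centrality by specializing $h=1$ and $g=1$ and invoking \cref{lm:center} twice, while you get the centrality of $c=b_1^{-1}a_2$ from invariance of the left-hand side under $(g,h)\mapsto(gt,t^{-1}h)$ and then apply \cref{lm:center} once --- an equivalent and equally valid derivation of the same relations.
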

\begin{proof}
	Let $s \in S_\hex$ be a site with inwards pointing links and $l_1,l_2,l_3$ its emanating 
	links listed in counterclockwise order.
	By definition, $\boldsymbol{\sigma}$ satisfies \cref{eq:cons1}, hence the group elements 
	$g_l, h_l \in G$ [see \cref{eq:Lsymred}] satisfy
	\begin{align}
		h_{l_3} gh  g_{l_3}^{-1} = g_{l_1} g h_{l_1}^{-1} g_{l_2} h h_{l_2}^{-1}
	\end{align}
	for all $g, h \in G$. Setting $h = 1$ and invoking \cref{lm:center}, we obtain 
	\begin{align}
		\label{eq:mult1}
		h_{l_3}^{-1}g_{l_1} = (h_{l_1}^{-1}g_{l_2} h_{l_2}^{-1}g_{l_3})^{-1} \in Z(G).
	\end{align}
	Analogously, for $g = 1$ we obtain 
	\begin{align}
		h_{l_3}^{-1}g_{l_1} h_{l_1}^{-1}g_{l_2} = (h_{l_2}^{-1}g_{l_3})^{-1} \in Z(G).
	\end{align}
	This shows that $z_{12} := h_{l_1}^{-1}g_{l_2} \in Z(G)$, $z_{13} := h_{l_3}^{-1}g_{l_1} \in Z(G)$
	and $z_{23} := h_{l_2}^{-1}g_{l_3} \in Z(G)$. Furthermore, from \cref{eq:mult1} it follows that these 
	group elements satisfy $z_{13} = (z_{12}z_{23})^{-1}$.
	Thus, we find that 
	\begin{align}
		\sigma_{l_2}& = \lambda_{g_{l_2}}\circ \rho_{h_{l_2}} = \lambda_{g_{l_2} z_{12}^{-1}}\circ \rho_{h_{l_2} z_{12}^{-1}}\nonumber\\
		&= \lambda_{h_{l_1}} \circ \rho_{h_{l_2} z_{12}^{-1}}
	\end{align}
	and
	\begin{align}
		\sigma_{l_3} &= \lambda_{g_{l_3}} \circ \rho_{h_{l_3}} = \lambda_{g_{l_3}z_{23}^{-1}z_{12}^{-1}} \circ \rho_{h_{l_3}z_{23}^{-1}z_{12}^{-1}}\nonumber\\
		&= \lambda_{h_{l_2}z_{12}} \circ \rho_{h_{l_3} z_{13}} = \lambda_{h_{l_2}z_{12}} \circ \rho_{g_{l_1}}.
	\end{align}
	In conclusion we have shown that 
	\begin{align}
		&(\sigma_{l_1}, \sigma_{l_2},\sigma_{l_3})\nonumber \\
		& \quad = (\lambda_{g_{l_1}} \circ \rho_{h_{l_1}},\, \lambda_{h_{l_1}} \circ \rho_{h_{l_2} z_{12}^{-1}},\, \lambda_{h_{l_2}z_{12}} \circ \rho_{g_{l_1}})\, ,
	\end{align}
	as desired. 
\end{proof}
Now we can prove the main result:
\begin{proposition}
    Every automorphism $\boldsymbol{\sigma} \in
    \mathcal{L}^\mathrm{red}_{\Sym{G}}$ has the form 
    \begin{align}
        \boldsymbol{\sigma} 
        = \boldsymbol{\Theta}_{\ell_1}(z_1) 
        \circ \boldsymbol{\Theta}_{\ell_2}(z_2) 
        \circ \prod_{p\in P_\hex} \boldsymbol{\Theta}_p(g_p)
        \,,
    \end{align}
    for some group elements $g_p \in G$ and elements from the center $z_1, z_2
    \in Z(G)$.
\end{proposition}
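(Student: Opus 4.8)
The plan is to show that, modulo the subgroup $\mathcal{P}_G$ of plaquette automorphisms, every $\boldsymbol{\sigma}\in\mathcal{L}^\mathrm{red}_{\Sym{G}}$ already lies in $\mathcal{Z}_{\Sym{G}}$, and then to invoke \cref{prop:center}. The central plaquette automorphisms appearing there are in particular of the form $\boldsymbol{\Theta}_p(g_p)$, and the two central loop automorphisms $\boldsymbol{\Theta}_{\ell_i}(z_i)$ commute with every plaquette automorphism (left and right multiplications commute, and central elements commute with all $\lambda_g$ and $\rho_g$). Hence all plaquette factors can be collected into a single product $\prod_p\boldsymbol{\Theta}_p(g_p)$, giving the stated normal form.

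First I would pass to a connection picture. Writing $\sigma_l=\lambda_{a_l}\circ\rho_{b_l}$ for $a_l,b_l\in G$, set $c_l:=\sigma_l(1)=a_lb_l^{-1}$. Using \cref{lm:equality} one checks at every inward site that $c_{l_1}c_{l_2}c_{l_3}=g_1g_2^{-1}\,g_2g_3^{-1}\,g_3g_1^{-1}=1$, and analogously at outward sites, so $\vec{c}=(c_l)_{l\in L_\hex}\in\mathcal{L}_G$. A plaquette automorphism $\boldsymbol{\Theta}_p(g)$ sends $\sigma_l\mapsto\lambda_g\circ\sigma_l$ on links with $p$ to the left and $\sigma_l\mapsto\rho_g\circ\sigma_l$ on links with $p$ to the right, hence acts on $\vec{c}$ exactly by the group action \cref{eq:grpact}. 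Therefore \cref{thm:Cui} applies: the $\mathcal{P}_G$-orbit of $\vec{c}$ is classified by the class of $\Psi_{\vec{c}}$ in $\Hom{\pi_1(\TT,p_0),G}/G$.

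The key step is to prove that this holonomy is central. Contractible holonomies vanish automatically, since $\vec{c}\in\mathcal{L}_G$ forces its directed product around every site to be $1$ (the zero-flux observation used for \cref{thm:Cui}). For the two non-contractible generators I would gauge-fix with plaquette automorphisms so that $c_l=1$ along a spanning tree of the dual lattice, and then propagate \cref{lm:equality} across sites: whenever two of the three links at a site carry $c=1$, the cyclic form together with \cref{lm:center} forces $g_1\equiv g_2\equiv g_3\pmod{Z(G)}$, so the remaining link satisfies $c_l=g_3g_1^{-1}\in Z(G)$. Choosing the dual tree (equivalently, routing the two essential loops) so that this propagation reaches every link along each generator shows that each essential holonomy is a product of central elements, hence central; thus $\Psi_{\vec{c}}$ takes values in $Z(G)$. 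Extracting this centrality cleanly from \cref{lm:equality}, while correctly tracking link orientations and the cyclic ordering around sites of both sublattices, is the part I expect to be the main obstacle.

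Finally I would assemble. Compose with a product $\boldsymbol{\pi}\in\mathcal{P}_G$ so that $\boldsymbol{\sigma}':=\boldsymbol{\pi}\circ\boldsymbol{\sigma}$ has $c'_l\in Z(G)$ for every link. Then $\sigma'_l=\lambda_{c'_l}\circ\chi_{b'_l}$ with $c'_l$ central (using $\chi_b=\lambda_b\circ\rho_b$, \cref{eq:conj}), and \cref{lm:equality} with central $c'_l$ forces all $b'_l$ to agree modulo $Z(G)$, say $b'_l\equiv b_0$. Hence $\boldsymbol{\sigma}'=\boldsymbol{\chi}_{b_0}\circ\Gamma(\vec{c}\,')$, where $\boldsymbol{\chi}_{b_0}=\prod_p\boldsymbol{\Theta}_p(b_0)\in\mathcal{P}_G$ is the global conjugation and $\Gamma(\vec{c}\,')\in\mathcal{Z}_{\Sym{G}}$ via \cref{eq:center_aut} (note $\vec{c}\,'\in\mathcal{L}_{Z(G)}$). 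Applying \cref{prop:center} to $\Gamma(\vec{c}\,')$ and moving the two central loop automorphisms past all plaquette automorphisms then produces $\boldsymbol{\sigma}=\boldsymbol{\Theta}_{\ell_1}(z_1)\circ\boldsymbol{\Theta}_{\ell_2}(z_2)\circ\prod_p\boldsymbol{\Theta}_p(g_p)$, as claimed.
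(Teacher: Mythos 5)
Your reformulation via the ``connection'' $c_l := \sigma_l(1)$ has several sound ingredients: post-composition by a plaquette automorphism $\boldsymbol{\Theta}_p(g)$ does transform $\vec{c}$ exactly by the group action \cref{eq:grpact}, so \cref{thm:Cui} controls the $\mathcal{P}_G$-orbit of $\vec{c}$; and your final assembly (central $c'_l$ forces the $b'_l$ to agree modulo $Z(G)$, the global conjugation $\boldsymbol{\chi}_{b_0}=\prod_p\boldsymbol{\Theta}_p(b_0)$ lies in $\mathcal{P}_G$, and \cref{prop:center} finishes) is correct. The gap is the key step, and it is not merely technical: the propagation rule you propose cannot prove centrality of the holonomy, even in principle. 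The rule ``if two of the three links at a site carry central (or trivial) $c$, so does the third'' is a consequence of the zero-flux constraints \cref{eq:zf1,eq:zf2} alone, i.e.\ of the mere fact $\vec{c}\in\mathcal{L}_G$. But $\mathcal{L}_G$ contains configurations with non-central holonomy: for $G=S_3$ take $\psi\in\Hom{\pi_1(\TT,p_0),G}$ with $\psi(\gamma_1)=R$, $\psi(\gamma_2)=1$ (these commute, so $\psi$ is a homomorphism); the standard representative $(g_l^\psi)$ of \cref{thm:Cui} equals $1$ on a dual spanning tree, satisfies every flux constraint, yet carries the non-central value $R$ on some links. If your tree-plus-propagation scheme worked for some routing of the tree and the essential loops, applied to this configuration it would ``show'' that $R\in Z(S_3)=\{1\}$. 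Hence no choice of tree can make the propagation reach every link; any rule derived purely from the multiplicative constraints on the $c_l$ is blind to the center.

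What is missing is exactly where the paper's proof does its real work: centrality must be extracted from the automorphism conditions \cref{eq:cons1,eq:cons2} as functional equations valid for \emph{all} $x,y\in G$ --- evaluating them at $x=y=1$ only reproduces the flux constraint on $\vec{c}$. Concretely, the paper first uses plaquette automorphisms to bring the triples at all inward sites to the reduced form \cref{eq:red_triple}, and then evaluates \cref{eq:cons2} at an \emph{outward} site surrounded by reduced inward sites; comparing the resulting identity at $x=1$ and at $y=1$ and invoking \cref{lm:center} forces the remaining multipliers into $Z(G)$. You would need to supply an argument of exactly this type (chaining the full strength of \cref{lm:equality} across both sublattices), at which point your route essentially collapses into the paper's proof with the holonomy bookkeeping of \cref{thm:Cui} layered on top. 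A secondary, fixable point: \cref{lm:equality} is stated only for sites with inward-pointing links, so both your claim $\vec{c}\in\mathcal{L}_G$ and your use of site-by-site gluing (which is also needed to make the $b'_l$ agree globally, since links meeting only at inward sites form disconnected triples) require the mirrored analogue at outward sites, proved with the reversed multiplication order of \cref{eq:cons2}.
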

\begin{proof}
Let $S^\text{i}_\hex$ be the set of all sites with links directed
inwards. The links of the honeycomb lattice (disregarding 
their direction) are parallel to one of
three possible directions, we denote them $\hat{e}_1, \hat{e}_2, \hat{e}_3$.
The three links emanating from one site are in one-to-one correspondence with
the three directions. We now imagine deleting all links parallel to 
$\hat{e}_1$. We denote the resulting set of links as $L'_\hex$. Then,
for each site $s \in S^\text{i}_\hex$, we define $p_s$ to be the unique
plaquette that contains the two links in $L'_\hex$ that emanate from $s$. 

Now consider an arbitrary element $\boldsymbol{\sigma} \in
\mathcal{L}^\mathrm{red}_{\Sym{G}}$ and some site $s \in S^\text{i}_\hex$. Let
$l_1,l_2,l_3$ denote the links emanating from $s$ (listed in counterclockwise
order) such that $l_1 \notin L'_\hex$.
By \cref{lm:equality} we know that  
\begin{align}
    (\sigma_{l_1}, \sigma_{l_2}, \sigma_{l_3}) 
    = (\lambda_{g_1}\circ\rho_{g_2}, \lambda_{g_2}\circ\rho_{g_3}, \lambda_{g_3}\circ\rho_{g_1} )\,.
\end{align}
Now we post-compose $\boldsymbol{\sigma}$ with the plaquette automorphism
$\boldsymbol{\Theta}_{p_s}(g_3^{-1})$, i.e., we define the new automorphism
$\boldsymbol{\sigma}' := \boldsymbol{\Theta}_{p_s}(g_3^{-1}) \circ
\boldsymbol{\sigma}$. 
Then the maps associated to $l_1$, $l_2$ and $l_3$ are given by 
\begin{align}
    \label{eq:red_triple}
    (\sigma'_{l_1}, \sigma'_{l_2}, \sigma'_{l_3}) 
    = (\lambda_{g_1}\circ\rho_{g_2}, \lambda_{g_2}, \rho_{g_1} )
    \,.
\end{align}
This construction can be illustrated as follows:
\begin{center}
    \includegraphics[width = 0.8\linewidth]{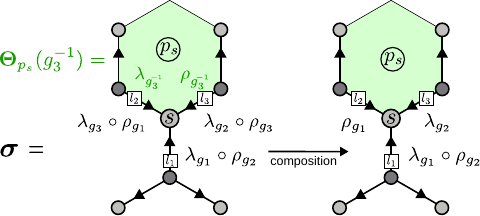}
\end{center}
Note that pre-composing another plaquette automorphism on some of the
neighboring plaquettes changes only the group elements $g_1$ and $g_2$ in
\cref{eq:red_triple} (but not the overall structure) so that this procedure can
be repeated for all sites $s \in S^\text{i}_\hex$. 
Thus we have shown that every $\boldsymbol{\sigma} \in
\mathcal{L}^\text{red}_{\Sym{G}}$ has the form 
\begin{align}
    \label{eq:sigma_tilde}
    \boldsymbol{\sigma} 
    = \left(
        \prod_{p\in P_\hex} \boldsymbol{\Theta}_{p}(g_p) 
    \right)
    \circ \boldsymbol{\sigma}'
\end{align}
for suitable $g_p \in G$ and $\boldsymbol{\sigma}' \in
\mathcal{L}^\text{red}_{\Sym{G}}$ which satisfies \cref{eq:red_triple} for
every site with inward pointing links. It remains to characterize the latter.

Consider a site $s' \notin S^\text{i}_\hex$; it has three neighboring
sites in $S^\text{i}_\hex$. On the links connected to these neighboring sites,
the associated maps satisfy \cref{eq:red_triple}. 
This can be illustrated as follows:
\begin{align}
    \label{eq:neigh}
    \begin{aligned}
   	 \includegraphics[width = 0.4\linewidth]{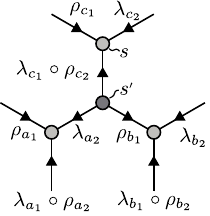}
    \end{aligned}
\end{align}
Note that here the vertical links are the ones that are parallel to
$\hat{e}_1$. For the three links emanating from
$s'$, \cref{eq:cons2} must be satisfied and we find for all $x, y \in G$
\begin{subequations}
    \begin{align}
        \lambda_{a_2}((xy)^{-1}) 
        & = (\lambda_{c_1}\circ\rho_{c_2}(x) \rho_{b_1}(y))^{-1}\\
        \Rightarrow\quad a_2 (xy)^{-1} 
        & = (c_1 x c_2^{-1} y b_1^{-1})^{-1}\\
        &= b_1 y^{-1} c_2 x^{-1} c_1^{-1}
        \,.
    \end{align}
\end{subequations}
For $x = 1$ and $y = 1$, this implies
\begin{subequations}
\begin{align}
    a_2 y^{-1}&= b_1 y^{-1} c_2 c_1^{-1}\\
    \text{and}\quad
    a_2 x^{-1}&= b_1c_2 x^{-1} c_1^{-1}\,.
\end{align}
\end{subequations}
Hence, by \cref{lm:center} it follows 
\begin{align}
    c_2 c_1^{-1} \in Z(G)
    \quad\text{and}\quad
    c_1^{-1} \in Z(G)
\end{align}
and therefore $c_1, c_2 \in Z(G)$. 
This argument applies to every site $s \in S^\text{i}_\hex$, as we can
always consider the site $s'\notin S^\text{i}_\hex$ that is adjacent to
$s$ such that the link connecting $s$ and $s'$ points is parallel to $\hat{e}_1$.
Then, for site $s'$ we recover the situation in Fig.~\eqref{eq:neigh}, with $s$
the site on the top. 
This shows that $\boldsymbol{\sigma}'$ only contains multiplications by
elements in the center. Therefore all these multiplications can be reordered to be
left-multiplications, i.e., $\boldsymbol{\sigma}' = (\lambda_{z_l})_{l \in L_\hex}$ 
for some $z_l \in Z(G)$. As $\boldsymbol{\sigma}'$ satisfies 
\cref{eq:cons1,eq:cons2}, it directly follows that $(z_l)_{l \in L_\hex}$ must satisfy \cref{eq:zcons1},
which shows that $\boldsymbol{\sigma}' \in \mathcal{Z}_{\Sym{G}}$.
Now we invoke \cref{prop:center} and conclude that
$\boldsymbol{\sigma}'$ has the form
\begin{align}
    \boldsymbol{\sigma}' 
    = \boldsymbol{\Theta}_{\ell_1}(z_1) 
    \circ \boldsymbol{\Theta}_{\ell_2}(z_2) 
    \circ \prod_{p\in P_\hex} \boldsymbol{\Theta}_p(z_p)
\end{align}
for suitable $z_1,z_2,z_p \in Z(G)$. 
Together with \cref{eq:sigma_tilde}, we conclude that 
\begin{align}
    \label{eq:emb}
    \boldsymbol{\sigma} 
    = \boldsymbol{\Theta}_{\ell_1}(z_1) 
    \circ \boldsymbol{\Theta}_{\ell_2}(z_2) 
    \circ \prod_{p\in P_\hex} \boldsymbol{\Theta}_p(g_p)
\end{align}
for some $z_1,z_2 \in Z(G)$ and $g_p \in G$, as desired.
\end{proof}

Combining \cref{eq:emb} with \cref{lm:reduc} we obtain the surjective group homomorphism
\begin{align}
    \label{eq:everything}
    &\iota: (G^{|P_\hex|} \times Z(G)^2)\rtimes \aut{G} \rightarrow \mathcal{L}_{\Sym{G}}
    \,,\nonumber\\
    &\begin{aligned}
        &((g_p)_{p\in P_\hex}, z_1, z_2, \tau) \mapsto 
        \\
        &\qquad\boldsymbol{\Theta}_{\ell_1}(z_1) 
        \circ \boldsymbol{\Theta}_{\ell_2}(z_2)
        \circ \prod_{p\in P_\hex} \boldsymbol{\Theta}_p(g_p) 
        \circ \boldsymbol{\tau}\,.
    \end{aligned}
\end{align}
The group structure on the domain of $\iota$ is defined as
\begin{align}
    &((g_p)_{p\in P_\hex}, z_1, z_2, \tau) \cdot ((g'_p)_{p\in P_\hex}, z'_1, z'_2, \tau')
    \nonumber\\
    =\,&((g_p \tau(g'_p))_{p\in P_\hex}, z_1\tau(z'_1), z_2\tau(z'_2), \tau\tau')
    \,.
\end{align}
Unfortunately, this group homomorphism is not injective (i.e., not an isomorphism),
e.g., for every $z \in Z(G)$
\begin{align}
    \iota((z)_{p\in P_\hex}, 1, 1, \Id) &= (\Id)_{l\in L_\hex}
    \nonumber\\
		& = \iota((1)_{p\in P_\hex}, 1, 1, \Id)\,,
\end{align}
as each link is multiplied from the left with $z$ and from the right with
$z^{-1}$. 
However, this characterization is sufficient, e.g., to determine which
topological sectors are connected by graph automorphisms.

\subsection{Automorphisms of $\tilde\G$ with generalized site graphs}
\label{sec:gen}

In this section we comment on the automorphism group of $\tilde\G$ if the
latter contains site graphs $\G_C^s$ for arbitrary conjugacy classes $C$ or
universal site graphs $\G_\text{Cl}^s$. In particular, we show 
that the automorphisms $\boldsymbol{\Theta}_p(h)$ and
$\boldsymbol{\Theta}_\ell(z)$ defined in
\cref{sec:caracterization} carry over to $\tilde\G$.

Recall [\cref{eq:AutomorphismCondition}] that a set of maps $\sigma_1,
\sigma_2, \sigma_3$ and $\sigma_c$ that satisfy the condition
\begin{align}
	\label{eq:cpy_condition}
	\sigma_s(c) = \sigma_1(g_1)\sigma_2(g_2)\sigma_3((g_1g_2)^{-1}c)\,,
\end{align}
for all $g_1, g_2 \in G$ and $c \in C$, describes an automorphism of the site
graph $\G_C^s$. 
Thus to define an automorphism 
of $\tilde{G}$ we must not only specify the group permutations on the links ($\sigma_1,\sigma_2,\sigma_3$) 
but also permutations of conjugacy classes on the sites ($\sigma_s$). We say that an automorphism 
$\tilde{\boldsymbol{\sigma}} = ((\tilde{\sigma}_l)_{l \in L_\hex}, 
(\tilde{\sigma}_s)_{s \in S_\hex})$ of $\tilde{G}$
\textit{extends} the automorphism $\boldsymbol{\sigma} = (\sigma_l)_{l \in L_\hex}\in \mathcal{L}_{\Sym{G}}$, 
if $\tilde{\sigma}_l = \sigma_l$ for all $l \in L_\hex$.

We first construct an extension of $\boldsymbol{\Theta}_p(h)$ to $\tilde{G}$. 
Let $s \in S_\hex$ be an arbitrary site such that $\boldsymbol{\Theta}_p(h)$ acts nontrivially 
on $\G_C^s$. Then there are three possibilities:
\begin{subequations}
    \begin{alignat}{6}
        \sigma_1 &= \rho_g\,,&\quad\sigma_2 &= \lambda_g\,,&\quad\sigma_3 &= \Id\,, 
        \label{eq:p1}\\
        \text{or}\quad
        \sigma_1 &=\Id\,,&\quad\sigma_2 &= \rho_g\,,&\quad\sigma_3 &= \lambda_g\,, 
        \label{eq:p2}\\
        \text{or}\quad
        \sigma_1 &= \lambda_g\,,&\quad\sigma_2 &= \Id\,,&\quad\sigma_3 &= \rho_g\,. 
        \label{eq:p3}
    \end{alignat}
\end{subequations}
For all of these there exists a suitable permutation $\sigma_s \in \Sym{C}$, 
such that \cref{eq:cpy_condition} is satisfied.
For \cref{eq:p1,eq:p2} we can choose $\sigma_s = \Id$, for \cref{eq:p3} we can
choose $\sigma_s = \chi_{g^{-1}}$ (which obviously preserves the
conjugacy class $C$). This shows that the plaquette automorphisms 
$\boldsymbol{\Theta}_p(h)$ can be extended to an automorphism of $\tilde{G}$.
The argument is analogous for $\tilde\G$ that contain site graphs
$\G_\text{Cl}^s$.

The situation for loop automorphisms $\boldsymbol{\Theta}_\ell(z)$
for $z \in Z(G)$ is even simpler. In this case, one of the maps on the right-hand side of
\cref{eq:cpy_condition} is $\lambda_z$, one is $\rho_z$, and the remaining one
is $\Id$. As $z$ commutes with all group elements, the left-multiplication by
$z$ cancels with the right-multiplication by $z^{-1}$, so that we can satisfy
\cref{eq:cpy_condition} with $\sigma_s = \Id$. This shows that
$\boldsymbol{\Theta}_\ell(z)$ can be extended to an automorphism of $\tilde\G$.

For global automorphisms that arise from group automorphisms [recall
\cref{eq:tauglob}] the situation is more complicated. 
Consider a graph $\tilde\G$ that contains exactly one site graph $\G^s_C$ with
$C \neq \{1\}$. \cref{eq:cpy_condition} then shows that the global permutation
$\boldsymbol{\tau}$ for $\tau \in \aut{G}$ can be extended to an automorphism
of $\tilde{G}$ if any only if
$\tau$ preserves the conjugacy class $C$. For graphs $\tilde\G$ with more then
one class-$C$ site, $\tau$ must preserve all present conjugacy classes.

\subsection{Notes on loop automorphisms}
\label{sec:notes}

In \cref{sec:gsprop} of the main text, we mentioned the loop permutations
$\Theta_{\ell}(h)$ for $h \in G$ and a generic loop $\ell$ on the honeycomb
lattice. In this appendix, we define these permutations and show that they are
graph automorphisms if $h \in Z(G)$.

We first define $\Theta_\ell(h)$ on the links of the lattice, in analogy with
plaquette automorphisms. For the directed links of the honeycomb lattice
$\Lambda_\hex$, we write $l\!\uparrow\uparrow\!\ell$ if the direction of $\ell$
coincides with the direction of $l$ and $l\!\uparrow\downarrow\!\ell$ otherwise. 
On the former links, $\Theta_\ell(h)$ acts by left-multiplication with $h$,
which corresponds to the permutation $\varphi_l(h, 1)$ as defined in
\cref{eq:varphi}. On the latter links, $\Theta_\ell(h)$ acts by
right-multiplication with $h^{-1}$, which corresponds to the permutation
$\varphi_l(1, h)$. On all links that are not part of the path $\ell$, the
permutation $\Theta_\ell(h)$ acts as the identity. 

Next, we define $\Theta_\ell(h)$ on the sites of the lattice. To this end, we
partition the sites into two subsets $S_1(\ell)$ and $S_2(\ell)$: Let $s\in
S_\hex$ be a site and $l_1, l_2, l_3\in L_\hex$ its emanating links such that
two of these links are part of $\ell$. Without loss of generality, we define
$l_3$ as the link that is \emph{not} part of $\ell$. The links are ordered
clockwise if the links are outward directed at $s$ and anticlockwise otherwise.
Then we define $s\in S_1(\ell)$ if $l_1\!\uparrow\uparrow\!\ell$ and $s \in
S_2(\ell)$ if $l_1\!\uparrow\downarrow\!\ell$. 
This allows us to define 
\begin{align}
    \Theta_\ell(h) =\,&
    \prod_{l\uparrow\uparrow \ell} \varphi_l(h, 1) 
    \prod_{l\uparrow\downarrow \ell} \varphi_l(1, h)
    \nonumber\\
    &\quad \times\prod_{s\in S_1(\ell)} \phi_s(1, h, 1) \prod_{s\in S_2(\ell)} \phi_s(h, 1, h)
    \,.
\end{align}
Note that if $\ell$ is the counterclockwise oriented boundary of a plaquette,
then $S_2(\ell)$ is empty and we recover \cref{eq:plaquette}. 

If $z \in Z(G)$, then for every link $l \in L_\hex$ it holds that $\varphi_l(z,
z) = \varphi_l(1,1) = \Id$. In this case, $\Theta_\ell(z)$ has the form
\eqref{eq:Phi} when restricted to a site $s \in S_\hex$ and its emanating links
$l_1, l_2, l_3$. This shows that $\Theta_\ell(z)$ it is an automorphism of
$\G$. 
As $z$ is in the center of $G$, $\Theta_\ell(z)$ can be viewed as acting by
left-multiplication on every link. Hence we find
\begin{align}
	\Theta_\ell(z) 
  = \Lambda(\boldsymbol{\Theta}_\ell(z))
  \,,
\end{align}
where $\boldsymbol{\Theta}_\ell(z)$ is the loop automorphism defined in
\cref{sec:class_center}.

Conversely, one can show that an automorphism that acts on the links like
$\Theta_\ell(h)$ does not exist if $h \notin Z(G)$. 

\section{The tessellated blockade structure $\G$ with open boundaries}%
\label{sec:boundary}

In \cref{sec:torus} we discussed the graph $\G$ with periodic boundary
conditions (= on the torus). Another important -- and experimentally more
realistic -- setup are open boundary conditions. In this section, we discuss
modifications of our results for periodic boundaries when $\G$ is defined on a
finite, open patch with \emph{rough} boundary conditions. 

``Rough'' means that we consider a finite patch of the honeycomb lattice
$\Lambda_\hex$ such that each vertex remains trivalent. That is, the graph has
``dangling'' edges on the boundaries. This implies that there are
``incomplete'' plaquettes on the boundary.
For these incomplete plaquettes, we can still define \emph{reduced plaquette
automorphisms} via \cref{eq:plaquette}, by restricting the products to sites
and links that are part of the lattice. 

It is easy to see that these are still graph automorphism in
$\A_\G^\text{loc}$: In \cref{sec:prelim} we established that permutations
$\Phi$ of $V_\G$ (which map the vertex sets $V_l$ and $V_s$ to themselves) are
automorphisms of $\G$ if and only if their restriction $\Phi_s$ to $\G_s$ is an
automorphism of $\G_s$ for all sites $s \in S_\hex$. Since this statement is
independent of boundary conditions, it still applies here. 
The action of a reduced plaquette automorphism on $\G_s$ for some $s \in
S_\hex$ is identical to the action of a full plaquette automorphism. This shows
that the reduced plaquette automorphisms are part of $\A^\text{loc}_\G$. Loop
automorphisms and global automorphisms (derived from group automorphisms) can
be adapted analogously.

The first homotopy group of a finite patch of the plane without holes is
trivial. Thus, in view of our
classification in \cref{sec:torus}, we expect that for rough boundary
conditions, all automorphisms in $\A_\G^\text{loc}$ can be written as a product
of plaquette automorphisms, followed by the global application of a group
automorphism. However, we did not rigorously prove this. (Note that 
\cref{thm:Cui} does not consider manifolds with boundary, however for 
rough boundary conditions the proof of Cui \etal goes through unchanged.)

Lastly, we consider the maximum-weight independent sets of $\G$. The set
defined by restricting \cref{eq:globcons} to the finite lattice is a globally
consistent independent set of $\G$. Hence the MWISs of $\G$ are described by
configurations of group elements $(g_l)_{l \in L_\hex} \in \mathcal{L}_G$ which
satisfy \cref{eq:zf1,eq:zf2} for each site that is part of the lattice. As $\G$
can be embedded on a surface with trivial first homotopy group, invoking
Theorem 2.4 from Ref.~\cite{Cui_2020} (adapted for rough boundary 
conditions) shows that $\mathcal{L}_G$ is a single
orbit under the action of plaquette automorphisms. Thus the blockade structure
$\G$ is fully-symmetric as defined in Ref.~\cite{Maier2025}.

\section{Proof of topological order}
\label{sec:prooftop}

In this appendix, we give the detailed proof that the ground state of $H_\G$ is
topologically ordered for finite $\Omega \neq 0$ (see \cref{sec:gsprop} of the
main text).  

For technical reasons (\cref{sec:technical}) we work with periodic boundary
conditions. Throughout this appendix, we use the following notation. Excitation
patterns of two-level-systems are described by $\vec{n} \in \Z_2^n$ with $n$
the number of two-level-systems in $\G$. An excitation pattern corresponds to a
state $\ket{\vec{n}} \in \H_\G$, these states form a basis of $\H_\G$. The
Hamiltonian $H_\G^0$ is diagonal in this basis. We denote the set of excitation
patterns that correspond to ground states of $H_\G^0$ as $\L_\G$. The ground
state manifold is then given by $\H_\G^0 =
\spn{\ket{\vec{n}}\,|\,\vec{n}\in \L_\G}$.

\subsection{Overview}
\label{sec:overview}

\begin{figure}[tb]
    \centering
    \includegraphics[width=1.0\linewidth]{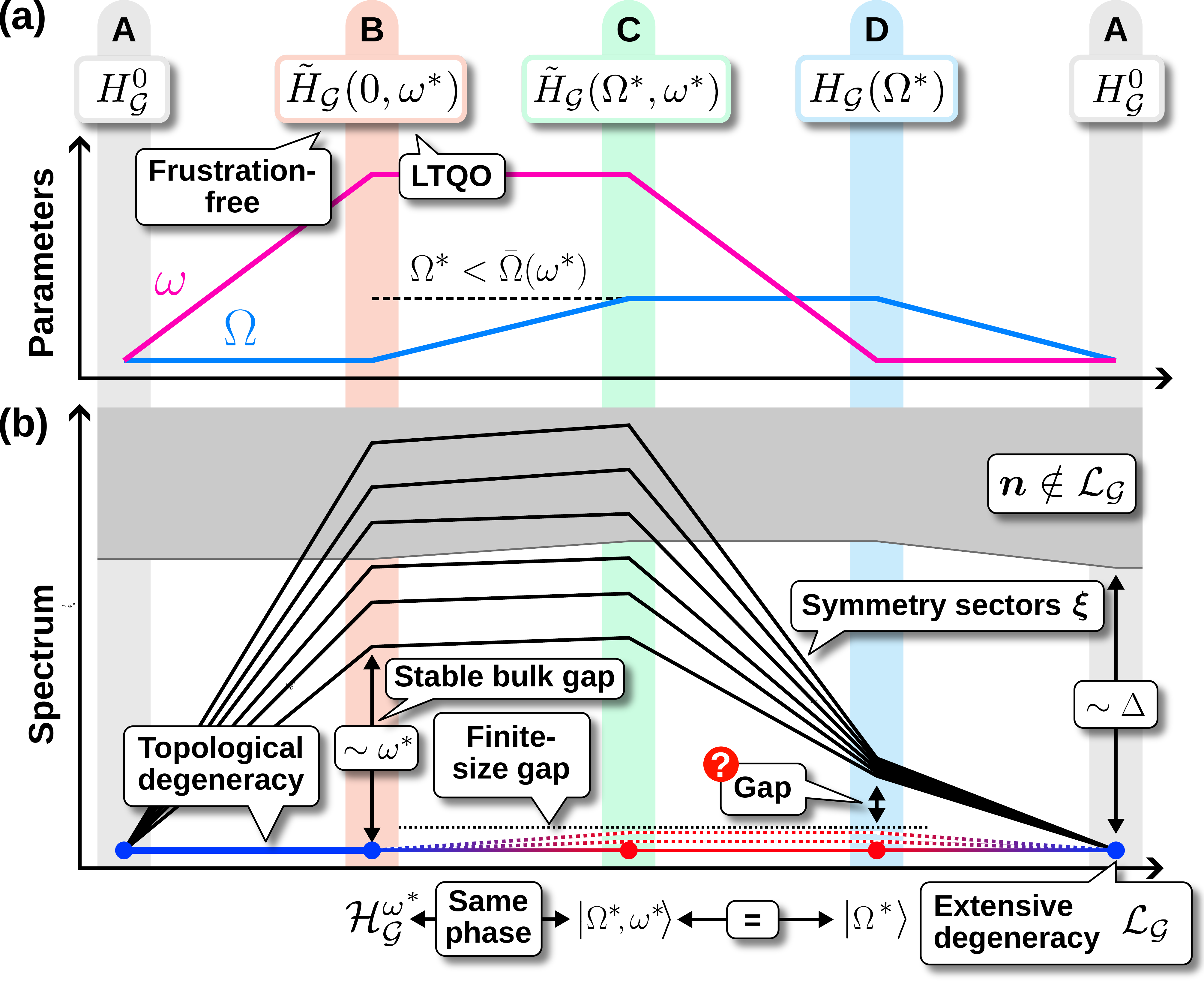}
    \caption{%
        \CaptionMark{Proof of topological order.}
        (a) Path that parameterizes the family of Hamiltonians
        $\tilde{H}_\G(\Omega,\omega)$. Ultimately, we are interested in the
        quantum phase of the ground state in \textbf{D}. Following the path
        from \textbf{A} to \textbf{D} allows us to rigorously characterize the
        quantum phase of the ground state of $H_{\G}(\Omega)$.
        (b) Schematic spectrum of $\tilde{H}_\G(\Omega,\omega)$ along the
        parametric path shown in (a). For $\Omega = \omega = 0$ (\textbf{A}),
        the Hamiltonian $H_{\G}(0) = \tilde H_\G(0,0)$ is classical with an
        exponentially large, degenerate ground state manifold $\H_\G^0$ spanned
        by configurations in $\L_{\G}$. (\textbf{B}) For $\Omega = 0 <
        \omega^{*}$, the Hamiltonian $\tilde H_{\G}(0,\omega^{*})$ is
        frustration-free and satisfies a condition called \emph{local
        topological quantum order} (local-TQO). 
        Moreover, its ground state manifold is separated by a gap of order
        $\omega^*$ from the rest of the spectrum. This ground state
        manifold $\H_\G^{\omega*}$ can be mapped to the ground state manifold
	of Kitaev's quantum double model by a generalized local unitary transformation;
	in particular these states are topologically ordered. 
        Because of frustration-freeness and local-TQO, the bulk gap of the
        Hamiltonian $\tilde H_{\G}(0,\omega^{*})$ is stable against weak, local
        perturbations. Thus, for $\Omega^{*} < \bar{\Omega}(\omega^*)$, the
        Hamiltonian $\tilde{H}_\G(\Omega^{*},\omega^{*})$ (\textbf{C}) remains
        gapped. Here $\bar{\Omega}(\omega^*)$ denotes an upper bound on the
        perturbation strength that guarantees gap stability. 
        As the gap remains open when ramping up $\Omega$, the new (unique)
        ground state $\ket{\Omega^{*},\omega^{*}}$ remains topologically 
	ordered. 
        Lastly, switching off $\omega \searrow 0$ leads to the target
        Hamiltonian $H_{\G}(\Omega^{*})=\tilde H_\G(\Omega^*,0)$, the ground
        state $\ket{\Omega^{*}}$ of which we want to characterize. Due to the
        local symmetries it is $\ket{\Omega^{*}} = \ket{\Omega^{*},
        \omega^{*}}$. 
        Note that this construction does not prove the existence of a gap for
        $H_{\G}(\Omega^{*})$.
    }
    \label{fig:path}
\end{figure}

As discussed in \cref{sec:gsprop}, the graph $\G$ on a torus is not fully
symmetric, i.e., the set of ground state configurations $\L_{\G}$ splits into
multiple orbits $Q_1,\ldots, Q_O$. Thus, as shown in Ref.~\cite{Maier2025}, the
unique ground state for $\Omega \neq 0$ has the form 
\begin{align}
    \ket{\Omega} 
    &= \sum_{k = 1}^O \lambda_k(\Omega) \sum_{\vec{n} \in Q_k} \ket{\vec{n}} 
    + \sum_{\vec{n} \notin \L_\G} \eta_{\vec{n}}(\Omega) \ket{\vec{n}}
    \,.
\end{align}
Note that we have no control over the coefficients $\lambda_k(\Omega)$.

Despite the lack of full symmetry (and the resulting uncontrolled superposition
of topological sectors) we can nevertheless establish that the ground state of
$H_\G$ is topologically ordered. To this end, we generalize the technique
introduced in Ref.~\cite{Maier2025}. In this section, we summarize the main
argument; technical details for some of the steps are provided in subsequent
sections.

As already stated in \cref{sec:gsprop}, we introduce the auxiliary
Hamiltonian~\eqref{eq:exH}
\begin{align}
    \label{eq:H_tilde}
    \begin{aligned}
    \tilde{H}_\G(\Omega,\omega) 
    :=\,&H_{\G}(\Omega) \\
     &+ \omega\sum_{p} 
     \Big(\id-\underbrace{\frac{1}{|G|}\sum_{h} \Theta_p(h)}_{=: \Theta_p}\Big)
     \,.
    \end{aligned}
\end{align}
We say that the term proportional to $\omega$ introduces \emph{artificial
plaquette fluctuations}. This term is chosen to resemble the plaquette term of
the quantum double Hamiltonian~\eqref{eq:QD}.

We establish topological order in three steps (\cref{fig:path}):
\begin{itemize}

    \item \textbf{Step 1: A~\textrightarrow~B}
    
        At \textbf{A} we have $\Omega = \omega = 0$. Thus the ground state
        manifold is extensively degenerate and spanned by the classical ground
        state configurations $\L_{\G}$. The ground state manifold $\H_\G^0$ is
        separated by a gap of size $\min_i(\Delta_i)$ from the rest of the
        spectrum.

        To reach \textbf{B}, we ramp up the artificial plaquette fluctuations
        to some value $0 < \omega^{*} < \Delta$, keeping $\Omega = 0$. As 
        \begin{subequations}
            \begin{align}
                [\tilde{H}_\G(0,\omega^{*}),\Theta_p] &= 0
                \intertext{and}
                [\Theta_p, \Theta_{p'}] &= 0
            \end{align}
        \end{subequations}
        for all plaquettes $p,p'$, the spectrum of $\tilde{H}_\G(0,\omega^{*})$
        decomposes into sectors labeled by eigenvalues of $\Theta_p$ (=
        symmetry sectors). 
        Since $\Theta_p$ is a projector, it has eigenvalues $0$ and $1$. States
        with eigenvalue $0$ are energetically punished by the
        Hamiltonian~\eqref{eq:H_tilde} with energy $\omega^*$. Thus the ground
        states of $\tilde{H}_\G(0,\omega^{*})$ are in the sector characterized
        by $\Theta_p = 1$ for all plaquettes $p$. 
        The ground state manifold $\H_\G^{\omega^*}$ of
        $\tilde{H}_\G(0,\omega^*)$ consists of topologically degenerate ground
        states that are separated by a gap $\omega^*$ from the rest of the
        spectrum. Moreover, it can be shown (see \cref{sec:gs}) that the ground
        state manifold can be mapped to the degenerate ground state manifold
        $\H_G^{J_p}$ of the quantum double model~\eqref{eq:QD} for $J_p>0$ by a
        generalized local unitary transformation. 
	Thus the Hamiltonians $H_G$ and $\tilde{H}_\G(0,\omega^*)$ describe 
	the same quantum phase, in particular the states in $\H_\G^{\omega^*}$ are
        topologically ordered.
    
    \item \textbf{Step 2: B~\textrightarrow~C}
        
        To reach \textbf{C}, we switch on quantum fluctuations with some finite
        value $\Omega^{*} \neq 0$. Note that the term $\Omega^{*} \sum_i
        \sigma_i^x$ in $H_\G(\Omega^*)$ couples sectors with different
        excitation numbers. As a consequence, the Hamiltonian
        $\tilde{H}_\G(\Omega^{*},\omega^{*})$ can no longer be diagonalized
        exactly; in particular the ground state changes in an unknown way. 
        
        However, as long as the excitation gap above the ground state manifold
	does not close when $\Omega$ is ramped up, the Hamiltonian 
	$\tilde{H}_\G(\Omega^*,\omega^*)$ describes the same quantum phase as 
	$\tilde{H}_\G(0,\omega^*)$ and consequantly also the same quantum phase 
	as $H_G$. In particular it follows that the new (now unique)
        ground state $\ket{\Omega^{*},\omega^{*}} \in \H_\G^{\omega^*,
        \Omega^*}$ is topologically ordered.

        Hence the remaining problem is to establish gap stability. We remark
        that this is a priori not obvious, as in general, arbitrarily weak
        perturbations can close spectral gaps in the thermodynamic limit
        \cite{Bravyi2010}. Fortunately, the Hamiltonian
        $\tilde{H}_\G(0,\omega^*)$ is \emph{frustration-free}, \emph{locally
        gapped}, and satisfies a condition called \emph{local topological
        quantum order} (local-TQO). Under these assumptions, it can be shown
        rigorously that the gap is stable under sufficiently weak, local
        perturbations \cite{Michalakis2013}.
        
        Thus we can conclude that $\ket{\Omega^{*},\omega^{*}}$ is 
	topologically ordered, as long as $\Omega^{*} < \bar{\Omega}(\omega^{*})$,
        where $\bar{\Omega}(\omega^{*})$ denotes an upper bound on the
	perturbation strength that guarantees gap stability. 
        For a detailed proof of the gap stability, see \cref{sec:gap}. Note
        that the uniqueness of the ground state $\ket{\Omega^*, \omega^*}$ does
        not contradict the stability of the phase, since ground state
        degeneracies can be lifted by finite-size effects. In particular,
        Ref.~\cite{Michalakis2013} shows that such splittings of the ground
        state degeneracy are exponentially suppressed with system size.
    
    \item \textbf{Step 3: C~\textrightarrow~D}
    
        In the last step, we switch off the artificial plaquette fluctuations
        to reach \textbf{D} with the desired Hamiltonian
        $\tilde{H}_\G(\Omega^{*},0) = H_{\G}(\Omega^{*})$. 
        In contrast to step~2 above, this change of the Hamiltonian cannot be
        treated as a small perturbation since $\omega^{*} \gg \Omega^{*}$. In
        addition, $\tilde{H}_\G(\Omega^{*},\omega^{*})$ is \emph{not}
        frustration-free, so the result of Ref.~\cite{Michalakis2013} is not
        applicable. (We are not aware of any gap stability results for
        frustrated Hamiltonians.)

        Now we utilize the local symmetry projectors $\Theta_p$. By
        construction, these commute with the full Hamiltonian, i.e.,
        $[\tilde{H}(\Omega,\omega),\Theta_p] = 0$. Thus there exists a basis of
        eigenstates of both $\tilde{H}_\G(\Omega,\omega)$ and $\Theta_p$ for
        all plaquettes $p$ and arbitrary $\Omega$ and $\omega$.  By
        \cref{eq:H_tilde}, these states are also eigenstates of
        $H_{\G}(\Omega)$. 
        Thus we can label these eigenstates as $\ket{E^{\vec{\xi}}_\Omega,\vec\xi}$
        with 
        \begin{subequations}
            \begin{align}
                H_{\G}(\Omega) \ket{E^{\vec{\xi}}_\Omega,\vec{\xi}} 
                &= E^{\vec{\xi}}_\Omega \ket{E^{\vec{\xi}}_\Omega,\vec{\xi}}
                \label{eq:eb1}\\
                \Theta_p \ket{E^{\vec{\xi}}_\Omega,\vec{\xi}} 
                &= \xi_p \ket{E^{\vec{\xi}}_\Omega,\vec{\xi}}
                \,,\label{eq:eb2}
            \end{align}
        \end{subequations}
        and $\xi_p \in \{0,1\}$. 
        \cref{eq:eb1} and \cref{eq:eb2} yield an expression for the energy of
        the states $\ket{E^{\vec{\xi}}_\Omega,\vec{\xi}}$:
        \begin{align}
            \label{eq:eigval}
            \begin{aligned}
               &\tilde{H}_\G(\Omega, \omega) \ket{E^{\vec{\xi}}_\Omega,\vec{\xi}} 
               \\
               & = \left[
               E^{\vec{\xi}}_\Omega + \omega\sum_{p}(1-\xi_p)
                \right]\ket{E^{\vec{\xi}}_\Omega,\vec{\xi}}
               \,.
            \end{aligned}
        \end{align}
        
        Consider the ground state $\ket{\Omega^{*}}$ of
        $\tilde{H}_\G(\Omega^{*},0) = H_{\G}(\Omega^{*})$. We cannot directly
        apply Proposition~1 from Ref.~\cite{Maier2025}
        to this Hamiltonian since $\G$ is not necessarily
        fully-symmetric.  However, the proof of this proposition shows that
        $\ket{\Omega^{*}}$ nevertheless satisfies $\Theta_p(h) \ket{\Omega^{*}}
        = \ket{\Omega^{*}}$ for all $h \in G$ and all plaquettes $p$. This
        shows that $\Theta_p \ket{\Omega^*} = \ket{\Omega^*}$ for all
        plaquettes $p$, i.e., $\ket{\Omega^{*}}$ is labeled by $\vec{\xi} =
        \vec{1}$. Moreover, $\ket{\Omega^{*}}$ is defined as eigenvector with
        smallest eigenvalue of $H_{\G}(\Omega^{*})$. Hence \cref{eq:eigval}
        shows that $\ket{\Omega^{*}}$ is \emph{a} ground state of
        $\tilde{H}_\G(\Omega^{*}, \omega)$ for all $\omega$.

        Let $\ket{E^{\vec{\xi}}_\Omega,\vec{\xi}}$ be a ground state of
        $\tilde{H}_\G(\Omega^*, \omega)$. As $\ket{\Omega^*}$ minimizes both
        terms in \cref{eq:eigval} simultaneously, the same must be true for
        $\ket{E^{\vec{\xi}}_\Omega,\vec{\xi}}$. This implies in particular that
        $\ket{E^{\vec{\xi}}_\Omega,\vec{\xi}}$ is a ground state of
        $H_\G(\Omega^*)$. Since the ground state of $H_\G(\Omega^*)$ is unique,
        it follows that $\ket{E^{\vec{\xi}}_\Omega,\vec{\xi}} \propto
        \ket{\Omega^*}$. Hence $\ket{\Omega^*}$ is the unique ground state of
        $\tilde{H}_\G(\Omega^*, \omega)$ for all $0 \leq \omega \leq \omega^*$.
        This implies in particular that $\ket{\Omega^*,\omega^*} =
        \ket{\Omega^*}$, hence $\ket{\Omega^*}$ is topologically ordered.

\end{itemize}

\subsection{Technical details}
\label{sec:technical}

\subsubsection{Ground state manifold of $\tilde{H}_\G(0,\omega)$}
\label{sec:gs}

In this section, we discuss the degenerate ground state manifold of 
\begin{align}
    \label{eq:Homega}
    \tilde{H}_\G(0, \omega) 
    = H_\G^0 + \omega \sum_{p} (\mathds{1} - \Theta_p)
    \,,
\end{align}
defined on a torus. 
In particular, we show that the ground state manifold $\H_\G^{\omega}$ of
the Hamiltonian~\eqref{eq:Homega} can be mapped to the ground state
manifold $\H_G^{J_p}$ of the quantum double model for $J_p > 0$ by a generalized local
unitary (LU) transformation. This shows that both Hamiltonians
represent the same quantum phase~\cite{Chen2010}. Since the
models are defined on different Hilbert spaces, we first must embed them in a
common Hilbert space.

Let $\Lambda_\hex = (S_\hex, L_\hex, P_\hex)$ denote a honeycomb lattice with
sites $S_\hex$, links $L_\hex$ and plaquettes $P_\hex$. Both the quantum double
model and our blockade structure realization are defined on this lattice.

The Hilbert space of the quantum double model has the natural basis
$\ket{\vec{g}} = \ket{(g_l)_{l \in L_\hex}} \in \H_G$ for $g_l\in G$. The
quantum double Hamiltonian~\eqref{eq:QD} for $J_p = 0$ is diagonal in this
basis. Thus we can define the set of ground state configurations as 
\begin{align}
	\L_G := \{(g_l)_{l \in L_\hex}\in G^{L_\hex}\, |\, \ket{\vec{g}} \in \H_G^0 \}\,,
\end{align}
such that the ground state manifold is given by $\H_G^0 = \spn{\ket{\vec
g}\,|\,\vec{g}\in \L_G}$. In this basis, the plaquette operators $A_p(h)$ act as
permutations, thus they define a group action of $G$ on $\L_G$ in the natural
way. By abuse of notation, we denote this group action as $A_p(h)\cdot
\vec{g}$.

By Ref.~\cite[Theorem 2.4]{Cui_2020}, the ground state manifold $\H^{J_p}_G$ of
$H_G$ for $J_p > 0$ is degenerate and its dimension is given by
$|\Hom{\pi_1(\TT, p_0), G}/G| = |\Hom{\Z^2, G}/G|$. Here $/$ denotes the set of
orbits under the group action of $G$ on $\Hom{\pi_1(\TT, p_0)}$ by $g\cdot\pi
\mapsto g^{-1} \pi(\cdot) g$ and $\TT$ refers to the torus on which
the lattice is embedded. $p_0$ denotes an arbitrary plaquette that is used as
the base point for the homotopy group. 
The ground states $\ket{\psi} \in \H_G^{J_p}$ are characterized by
\begin{align}
    A_p \ket{\psi} = B_s \ket{\psi} = \ket{\psi}
\end{align}
for all sites $s \in S_\hex$ and plaquettes $p \in P_\hex$. 
To give an explicit form of the ground states, we define an equivalence
relation $\sim$ on $\L_G$ by $\vec{g} \sim \vec{g}'$ if and only if $\vec{g}$
can be transformed by plaquette operators $A_p(h)$ into $\vec{g}'$. Then a
basis of $\H^{J_p}_G$ is given by 
\begin{align}
    \label{gs:QD}
    \ket{[\vec{g}]} 
    := \frac{1}{|[\vec{g}]|}\sum_{\vec{g}' \sim \vec{g}} \ket{\vec{g}'}
    \,,
\end{align} 
where $[\vec{g}]$ denotes the equivalence class of $\vec{g}$ and $|[\vec{g}]|$
its cardinality.

Next, we construct the common Hilbert space to connect both models. The Hilbert
space of our blockade Hamiltonian $H_\G$ is given by
\begin{align}
	\H_\G 
  := \bigotimes_{l \in L_\hex} \H_l^\G \otimes 
	    \bigotimes_{s \in S_\hex} \H_s^\G
      \,.
\end{align}
where $\H_l^\G \simeq \mathbb{C}^{2^{|G|}}$ and $\H_s^\G \simeq
\mathbb{C}^{2^{|G|^2}}$ denote the Hilbert spaces associated to a link and a
site of the blockade model. By contrast, the Hilbert space associated to one
link of the quantum double model is given by $\H_l^G = \{\ket{g}\,|\, g \in
G\}$. 
We extend the Hilbert space $\H_\G$ by adding $\H_l^G$ to every link. Thus we
obtain the enlarged Hilbert space $\H_\GTG := \H_\G \otimes \bigotimes_{l \in
L_\hex} \H_l^G$. 
We can embed the states $\ket{\vec{g}} \in \H_G$ of the quantum double model
into this larger Hilbert space as
\begin{align}
    \ket{\vec{0}} \ket{\vec{g}} 
    := \bigotimes_{l\in L_\hex} \ket{\vec{0}}_l \ket{g_l}_l 
    \bigotimes_{s\in S_\hex} \ket{\vec{0}}_s
    \,.
\end{align}
In particular, $\ket{\vec{0}} \ket{[\vec{g}]}$ is in the same quantum phase as
$\ket{[\vec{g}]}$ \cite{Chen2010}.  
Let us define the following subspaces of embedded states:
\begin{subequations}
    \begin{align}
    \H_\GTG^{\vec{0}}
      &:= \spn{\ket{\vec{0}} \ket{\vec{g}}\, |\, \vec{g}\in \L_G}
      \,,\\
      \H_\GTG^{\vec{1}}
      &:= \spn{\ket{\vec{n}}\ket{\vec{1}} \, | \, \vec{n} \in \L_{\G}}
      \,.
    \end{align}
\end{subequations}
We now construct an LU quantum circuit that maps $\H_\GTG^\vec{0}$ to
$\H_\GTG^\vec{1}$. 
Let $\L_l^\G$ ($\L_s^\G$) denote the set of ground state excitation patterns of
$H^0_\G$ restricted to the link $l \in L_\hex$ (the site $s \in S_\hex$). By
construction of $\G$, there exists a bijective map $\eta_l: G \rightarrow
\L_l^\G$ which maps each group element to (the restriction of) a ground state
excitation pattern on the Link $l$. 
Moreover, the concatenation of all these excitation patterns $\bigoplus_{l \in
L_\hex} \eta_l(g_l)$ is part of a ground state pattern of $H^0_\G$ if and only
if $\vec{g} = (g_l)_{l \in L_\hex} \in \L_G$. In particular, for every site $s
\in S_\hex$ with emanating links $l_1, l_2, l_3 \in L_\hex$, there exists a map
$\eta_s: (\L_l^\G)^3 \rightarrow \L_s^\G$ that maps $\vec{n}_{l_i} :=
\eta_{l_i}(g_{l_i})$ (for $i = 1,2,3$) to the unique pattern
$\eta_s(\vec{n}_{l_1},\vec{n}_{l_2},\vec{n}_{l_3}) \in \L_s^\G$.

Thus, for some $l \in L_\hex$, we can define the unitary $U^{(1)}_l$ that acts
on the link $l$ as
\begin{align}
	U^{(1)}_l \ket{\vec{0}}_l \ket{g}_l 
  = \ket{\eta_l(g)}_l \ket{\vec{1}}_l\,,
\end{align}
and trivially on all other links. 
Note that this does not define $U^{(1)}_l$ on all of $\H_\GTG$; it can be
extended in an arbitrary way as we are only interested in its application to
the subspace $\H_\GTG^\vec{0}$.

Similarly, for a site $s \in S_\hex$, we define the unitary $U^{(2)}_s$ that
acts on $s$ and its emanating links $l_1, l_2, l_3$ as
\begin{align}
    U^{(2)}_s&\ket{\vec{0}}_s \ket{\vec{n}_{1}}_{l_1}\ket{\vec{n}_{2}}_{l_2}\ket{\vec{n}_{3}}_{l_3}
    \nonumber\\
	  &= \ket{\eta_s(\vec{n}_{1}, \vec{n}_{2}, \vec{n}_{3})}_s 
    \ket{\vec{n}_{1}}_{l_1}\ket{\vec{n}_{2}}_{l_2}\ket{\vec{n}_{3}}_{l_3} 
  \,,
\end{align}
and trivially on all other parts of the tensor product. As before, $U^{(2)}_s$
can be extended arbitrarily to a unitary on $\H_\GTG$. 

As the honeycomb lattice $\Lambda_\hex$ is bipartite, we can partition its
sites $S_\hex$ into two sublattices $A$ and $B$, such that no two sites from
$A$ and $B$ are connected by a link. Then, $\{U^{(1)}_l\}_{l \in L_\hex}$,
$\{U^{(2)}_s\}_{s\in A}$ and $\{U^{(2)}_s\}_{s\in B}$ are sets of unitary
operators that act on non-overlapping regions of finite size. 
Thus $U := \prod_{s \in B} U^{(2)}_s\prod_{s \in A} U^{(2)}_s \prod_l
U^{(1)}_l$ defines a local unitary quantum circuit of constant depth (it has
three layers). By construction, it constitutes the desired map from
$\H_\GTG^\vec{0}$ to $\H_\GTG^\vec{1}$. We define $\eta: \L_G \rightarrow \L_\G$
as the bijective map defined by $U \ket{\vec{0}}\ket{\vec{g}} =
\ket{\eta(\vec{g})}\ket{\vec{1}}$.

Finally, we show that $U$ maps the embedded ground states of the full quantum
double Hamiltonian $H_G$ to the embedded ground states of $\tilde{H}_\G(0,
\omega)$. 
To this end, we define the subspaces
\begin{subequations}
    \begin{align}
    	\H_\GTG^{\vec{0},J_p} 
      &:= \spn{\ket{\vec{0}}\ket{\psi}\, |\, \ket{\psi} \in \H_G^{J_p}}
      \,,\\
    	\H_\GTG^{\vec{1},\omega} 
      &:= \spn{\ket{\omega}\ket{\vec{1}}\, |\, \ket{\omega} \in \H_\G^\omega}
      \,.
    \end{align}
\end{subequations}
Note that by construction the plaquette operators satisfy
\begin{align}
    [\Theta_p(h)\ket{\eta(\vec{g})}]\otimes\ket{\vec{1}} 
    = U\left\{\ket{\vec{0}}\otimes [A_p(h) \ket{\vec{g}}]\right\}
\end{align}
for all plaquettes $p$ and group elements $h \in G$.
Let $\ket{\psi} \in \H_G^{J_p}$ be a quantum double ground state. By
construction of $U$, there exists a state $\ket{\omega(\psi)} \in \H_\G^0$ such
that $U \ket{\vec{0}}\ket{\psi} = \ket{\omega(\psi)} \ket{\vec{1}}$.  
By linearity, it follows that 
\begin{subequations}
    \begin{align}
        [\Theta_p(h)\ket{\omega(\psi)}] \ket{\vec{1}} 
        &= U \ket{\vec{0}}  [A_p(h) \ket{\psi}]\\
        &= U \ket{\vec{0}} \ket{\psi}\\
        &= \ket{\omega(\psi)} \ket{\vec{1}}
    \end{align}
\end{subequations}
which shows that $\ket{\omega(\psi)} \in \H_\G^\omega$. The proof for the
converse direction is analogous.

In summary, this shows that $U$ maps $\H_\GTG^{\vec{0},J_p} =
\spn{\ket{\vec{0}}} \otimes \H_G^{J_p}$ unitarily to $\H_\GTG^{\vec{1},\omega}
= \H_\G^\omega \otimes \spn{\ket{\vec{1}}}$. As $U$ is a LU quantum
circuit with finite depth, this transformation does not change the quantum
phase represented by the states \cite{Chen2010}. Moreover, adding and
removing local degrees of freedom in form of tensor products also does not
alter the topological order \cite{Chen2010}. Thus we have constructed the desired 
generalized local unitary transformation. This shows that the Hamiltonians $H_G$ 
and $\tilde{H}_\G(0,\omega)$ describe the same quantum phase and thus that
the states in $\H_G^{J_p}$ are topologically ordered. 
 
Finally, we give a concrete basis of $\H^{\omega}_{\G}$. 
Let $\sim_\Theta$ denote the equivalence relation on $\L_{\G}$ defined by
$\vec{n} \sim_\Theta \vec{n}'$ if and only if $\vec{n}$ can be transformed into
$\vec{n}'$ by plaquette operators $\Theta_p(h)$. Let $[\vec{n}]_\Theta$ denote
the equivalence class of $\vec{n}$ under this equivalence relation and consider
a set $\{\vec{n}_k\}$ of representatives of all classes. 
Then from \cref{gs:QD} we obtain that 
\begin{align}
    \ket{\omega_k} 
    := \frac{1}{|[\vec{n}_k]_\Theta|} \sum_{\vec{n}'\sim_\Theta\vec{n}_k} \ket{\vec{n}'}
\end{align}
is a basis of $\H^{\omega}_{\G}$.

\subsubsection{Gap stability of $\tilde H_\G(0,\omega)$}
\label{sec:gap}

\paragraph{Conditions.}

To establish the gap stability necessary for step 2
(\textbf{B~\textrightarrow~C}), we utilize a result by Michalakis and Zwolak
\cite[Theorem~1]{Michalakis2013}. A summary of these results and a detailed
explanation of their application to the case $G = \Z_2$ can be found in
Ref.~\cite{Maier2025}. We start with a brief overview of the conditions that
must be verified.

The systems considered by Michalakis and Zwolak are defined on a square lattice
$\Lambda_\sq = (S_\sq, L_\sq, P_\sq)$ with sites $S_\sq = [0,L]^2 \subseteq
\Z^2$, where $L$ denotes the system size. This lattice is endowed with an
arbitrary norm $\|\cdot\|$ (here we choose the $\ell^\infty$-norm). This norm
defines balls centered at $I \in \Lambda_\sq$ of radius $r$ by
\begin{align}
    B_r(I) := \{J \in S_\sq\, |\, \|I - J\| \leq r\}\,.
\end{align}
Note that for the $\ell^{\infty}$-Norm, $B_r(I)$ is a rectangular region.
For each site $I \in S_\sq$, there is an associated Hilbert space $\H_I$. The
complete system Hilbert space is given by the tensor product $\H_{\Lambda_\sq}
= \bigotimes_{I \in S_\sq} \H_I$. 

The Hamiltonian of interest is of the form $H = H_0 + V$, where $H_0$ denotes
the unperturbed Hamiltonian and $V$ the perturbation. The Hamiltonian $H_0$ has
to satisfy the following properties:
\begin{enumerate}

    \item $H_0$ has the form 
        \begin{align}
            \label{eq:decomp}
            H_0 = \sum_{I \in S_\sq} Q_I,
        \end{align}
        such that $Q_I$ has a constant range of support.

    \item The Hamiltonian $H_0$ satisfies periodic boundary conditions.
    
    \item The Hamiltonian $H_0$ is \emph{frustration-free}, i.e., if $P_0$ 
        denotes the projector onto the ground state subspace of $H_0$ and
        $q_{I,0}$ denotes the minimal eigenvalue of $Q_I$, then 
        \begin{align}
            P_0 Q_I = q_{0,I} Q_I\,.
        \end{align}

    \item For $L \geq 2$, the Hamiltonian $H_0$ has a spectral gap that is
        independent of the system size.

\end{enumerate}
In addition, $H_0$ must satisfy the conditions \emph{local-gap} and
\emph{local-TQO}. For a set $A \subseteq S_\sq$, define the localized
Hamiltonian by
\begin{align}
    H_0^A := \sum_{\text{supp}(Q_I) \subseteq A} Q_I
    \,,
\end{align}
where $\text{supp}(Q_I)$ denotes the support of the operator $Q_I$. 
Let $E_0^A$ denote the ground state energy of $H_0^A$. For $\epsilon \geq 0$,
let $P_A(\epsilon)$ denote the projector onto the eigenstates of $H_0^A$
with energy less or equal to $E_0^A + \epsilon$. 
\begin{enumerate}

    \setcounter{enumi}{4}

    \item The \emph{local-gap condition} then states that there exists a
        function $\gamma(r) > 0$, which decays at most polynomially, such that
        for all $I_0 \in S_\sq$, it is $P_{B_r(I_0)}(\gamma(r)) =
        P_{B_r(I_0)}(0)$. 

\end{enumerate}

To define \emph{local topological quantum order (local-TQO)}, let $I_0 \in
S_\sq$ and define the two regions $A = B_r(I_0)$ and $A(l) = B_{r+l}(I_0)$ for
some $r \leq L^{*} < L$ and $l \leq L - r$. The parameter $L^{*}$ is a cutoff
of order $L$. For any two ground states $\ket{\psi_1}$ and $\ket{\psi_2}$ of
$H_0^{A(l)}$, define $\rho_i(A) := \Tr{\ket{\psi_i}\bra{\psi_i}}{\Ac}$
(for $i = 1,2$) as their reduced density matrices when the
complement of $A$ (denoted as $\Ac$) is traced out.
\begin{enumerate}

    \setcounter{enumi}{5}

    \item Then $H_0$ satisfies local-TQO if and only if
        \begin{align}
            \label{eq:rdm}
            \|\rho_1(A) - \rho_2(A)\|_1 \leq 2F(l)\,,
        \end{align}
        where $F$ is a decaying function and $\|\cdot\|_1$ denotes the
        Schatten-1 norm. Intuitively, local-TQO formalizes the notion that
        different ground states cannot be distinguished by local observables.

\end{enumerate}

Lastly, the perturbation $V$ is assumed to have the form 
\begin{align}
    V = \sum_{I \in S_\sq} \sum_{r = 0}^L V_I(r)\,,
\end{align}
such that $\text{supp}(V_I(r)) \subseteq B_r(I)$ and $\|V_I(r)\| \leq J f(r)$
for some constant $J > 0$ and a rapidly decaying function $f(r)$. Here, the norm 
$\|\cdot\|$ denotes the operator norm that is induced by the scalar product
of $\H_{\Lambda_\sq}$. A specification of the necessary decay rate of $f$ is gven 
in Ref.~\cite{Michalakis2013}. For our purposes this is irrelevant as we only consider 
perturbations of finite range (that is, $f(r)$ can be chosen as $0$ for $r$ larger than 
some fixed threshold). We refer to a perturbation satisfying the aforementioned 
conditions as a $(J,f)$-perturbation.

This preparation allows us to formulate the gap stability result of Ref.~\cite{Michalakis2013}.

\begin{theorem}[Michalakis and Zwolak \cite{Michalakis2013}]
	\label{thm:Michalakis}
	Let $H_0$ be a Hamiltonian that satisfies conditions (1)-(6) and $V$ be a $(J,f)$-
	perturbation. 
	Then, there exist finite thresholds $J_0 > 0$ and $L_0 \geq
	2$ such that the gap of $H$ remains uniformly bounded from below for $L \geq
	L_0$ and $J \leq J_0$.
\end{theorem}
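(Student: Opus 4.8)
The plan is to follow the strategy of Michalakis and Zwolak~\cite{Michalakis2013}, since the statement is imported verbatim from their work; the task is therefore not to reprove it from scratch but to outline the mechanism and to pinpoint where conditions (1)--(6) and the $(J,f)$ assumption enter. The proof rests on \emph{quasi-adiabatic continuation} (spectral flow) combined with Lieb--Robinson bounds. First I would introduce the interpolating family $H(s) = H_0 + sV$ for $s\in[0,1]$ and let $P(s)$ denote the projector onto its (possibly topologically degenerate) low-energy subspace. The goal is to show that the spectral gap $\gamma(s)$ above this subspace stays bounded below by a constant \emph{independent of $L$}, provided $J \le J_0$ and $L \ge L_0$.

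The central construction is the spectral-flow generator. Assuming the gap is open on a subinterval, one builds a quasi-local anti-Hermitian operator $\mathcal{D}(s)$ from a suitable filter function of $H(s)$, generating a unitary $U(s)$ that intertwines the low-energy subspaces, $U(s)P(0)U(s)^\dagger = P(s)$. Lieb--Robinson bounds --- which hold because each $Q_I$ has constant range (condition (1)) and each $V_I(r)$ decays rapidly (the $(J,f)$ assumption) --- guarantee that $U(s)$ is quasi-local, so that conjugation maps a local operator to one with exponentially small tails. This lets me pull the perturbation back and write $\tilde V(s) := U(s)^\dagger V U(s)$ as a sum of quasi-local terms, and then split it relative to $P_0 := P(0)$ into a block-diagonal and an off-diagonal part.

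Here the two nontrivial hypotheses do the work. The \emph{local-gap} condition (5) ensures that the low-energy space of $H_0$ restricted to any ball coincides with the globally projected one, so local excitations are genuinely gapped and the filter behaves well on finite regions. The \emph{local-TQO} condition (6) --- local indistinguishability of ground states, \cref{eq:rdm} --- forces the block-diagonal part of each local contribution to act on the low-energy subspace as a near-multiple of the identity, up to an error controlled by the decaying function $F$. Combining this with the quasi-locality of $U(s)$, one bounds the operator norm of the off-diagonal block (the part that could actually close the gap) by $J$ times a convergent sum over length scales, uniformly in $L$. A bootstrap (or differential-inequality) argument in $s$ then certifies that $\gamma(s)$ stays bounded below, say $\gamma(s)\ge\gamma(0)/2$, for all $s\in[0,1]$ whenever $J\le J_0$, which is the claim; frustration-freeness (condition (3)) and the $L$-independent bare gap (condition (4)) fix the reference scale $\gamma(0)$.

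The main obstacle --- and the reason conditions (5) and (6) are indispensable --- is preventing the extensively many local perturbation terms from conspiring to close the gap: a naive triangle-inequality estimate of $\|V\|$ scales with the volume and is useless. Local-TQO is precisely the input that converts the extensive diagonal contributions into a harmless uniform energy shift, while the rapid decay of $F$, $f$, and the Lieb--Robinson tails makes the off-diagonal remainder summable independently of system size. For our purposes the separate (and, in this paper, genuinely new) work is not this abstract theorem but the verification that $\tilde H_\G(0,\omega^*)$ satisfies (1)--(6) --- in particular frustration-freeness and local-TQO, which I would inherit from the local-unitary equivalence to Kitaev's quantum double established in \cref{sec:gs}.
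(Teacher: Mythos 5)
There is nothing in the paper to check your sketch against: the paper does not prove this statement at all. It is imported as a black-box result attributed to Michalakis and Zwolak \cite{Michalakis2013}; the appendix merely sets up conditions (1)--(6), defines $(J,f)$-perturbations, states the theorem, and then spends its effort verifying the hypotheses for $\tilde H_\G(0,\omega)$. Your decision to outline the external mechanism rather than reprove it is therefore the right call, and your account of where the hypotheses enter --- local-TQO converting the block-diagonal parts of the perturbation into near-multiples of the identity on the ground space, local-gap and the decay of $f$ making the dangerous off-diagonal remainder summable uniformly in $L$, frustration-freeness and the uniform bare gap fixing the reference scale --- is correct. One caveat: the route you sketch (spectral flow along $H_0+sV$, Lieb--Robinson quasi-locality of the flow unitary, and a bootstrap in $s$ to evade the circularity of needing a gap to define the flow) is closer to later treatments of this stability theorem than to the original Michalakis--Zwolak argument, which proceeds by an iterative block-diagonalization of the perturbation with relative-boundedness (Kato-type) estimates rather than by quasi-adiabatic continuation. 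Both are legitimate proofs of the stated result, so this is a difference of route, not a gap.

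The one concrete misstep is your closing plan to ``inherit'' frustration-freeness and local-TQO from the local-unitary equivalence of \cref{sec:gs}. Frustration-freeness is a property of a specific decomposition $H_0=\sum_I Q_I$ into local terms, not of the quantum phase, so it cannot be transported through an LU circuit; the paper instead builds an explicit coarse-grained decomposition [\cref{eq:HI,eq:Q}] and derives frustration-freeness from the existence of globally consistent independent sets. Likewise, local-TQO is not pulled back through the unitary $U$ of \cref{sec:gs} (which is only defined on embedded subspaces and involves ancillary degrees of freedom); it is proved directly by adapting the Schmidt-decomposition argument of Cui \etal to the blockade graph via the bijection between blockade ground-state patterns and quantum-double configurations. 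Since the statement under review is only the abstract stability theorem, this does not invalidate your proposal, but it does concern exactly the part of the argument that the paper actually carries out itself.
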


\paragraph{Locality and frustration-freeness of $\tilde H_\G(0,\omega)$.}
\label{par:loc}

\begin{figure}[tb]
    \centering
    \includegraphics[width=0.9\linewidth]{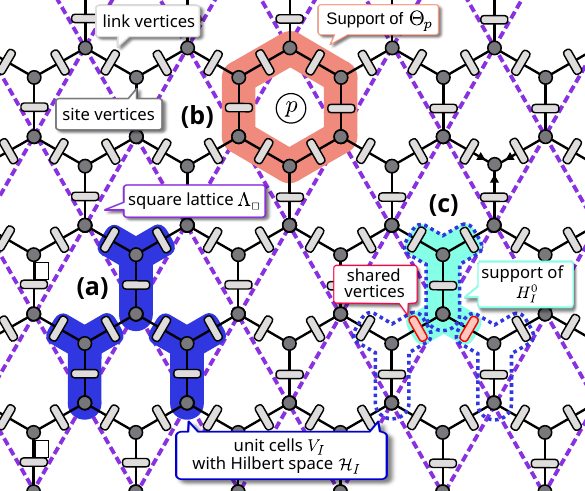}
    \caption{%
        \CaptionMark{Coarse graining of the Hamiltonian $\tilde H_0$.}
        Summary of the construction of a local, frustration-free decomposition
        of the Hamiltonian $\tilde H_0$. The blockade graph is
        represented by gray circles and light gray rectangles, connected by
        black lines. The gray circles represent the site vertices, the light
        gray rectangles represent the link vertices. 
        \textbf{(a)} The vertex set of $\G$ is partitioned into unit cells
        $V_I$, highlighted in dark blue. These unit cells form a square
        lattice, indicated by purple dashed lines. neighboring sites in the
        square lattice contain two-level systems that are in blockade. We
        associate a Hilbert space $\H_I$ to each unit cell (the tensor product
        of the Hilbert spaces of the constituents of $V_I$).
        \textbf{(b)} The support of the plaquette operator $\Theta_p$ is
        highlighted in red. It overlaps with four unit cells.
        \textbf{(c)} The Hamiltonian $H_\G^0$ is partitioned into local terms
        $H^0_{I}$. The support of one such operator is highlighted in cyan.
        Note that the support of this operator contains vertices that belong to
        the neighboring unit cells of $V_I$. These shared vertices are
        highlighted in red. The blue dotted lines mark the unit cells
        affected by $H^0_I$.
    }
    \label{fig:coarsegr}
\end{figure}

To apply \cref{thm:Michalakis}, we first have to define
suitable local Hilbert spaces and a decomposition of the form~\eqref{eq:decomp}
of the unperturbed Hamiltonian 
\begin{align}
    \tilde{H}_0 
    \equiv \tilde{H}_\G(0,\omega) 
    = H^0_\G + \omega \sum_{p} (\mathds{1} - \Theta_p)
    \,,
\end{align}
such that it is frustration-free. Note that due to the blockade interactions
frustration-freeness is a nontrivial property.

To this end, we follow the same procedure as in Ref.~\cite{Maier2025}. 
We partition the vertex set $V_\G$ of the blockade graph $\G$ into unit
cells $V_I$ as shown in \cref{fig:coarsegr}. These unit cells consist of the
vertex sets associated to two sites from $S_\hex$ and three links from
$L_\hex$; as a consequence, these unit cells form a square-lattice $\Lambda_\sq
= (S_\sq, L_\sq, P_\sq)$. We can view $L_\sq$ as a subset of $L_\hex$, where
the (vertical) links connecting the sites that belong to the same unit cell are
removed. Thus our orientation convention for the direction of links
in $L_\hex$ (see \cref{fig:fig1})
induces an orientation convention for the direction of links in $L_\sq$.
To the unit cells we associate
the local Hilbert spaces $\H_I := \bigotimes_{i \in V_I} \H_i$, where $\H_i
\simeq \mathbb{C}^2$ denotes the Hilbert space of one two-level-system. 
Next, we extend the square lattice $\Lambda_\sq$ to a finer lattice
$\tilde{\Lambda}_\sq = (\tilde{S}_\sq, \tilde{L}_\sq, \tilde{P}_\sq)$ that has
one site for every edge, site and plaquette of $\Lambda_\sq$. We refer to the
set of sites that arise from plaquettes of $\Lambda_\sq$ as $\tilde{S}_\sq^P$,
and the to the set of sites that arise from links as $\tilde{S}_\sq^L$. This
lattice is again endowed with the $\ell^\infty$-norm. For $I \in
\tilde{S}_\sq^L \cup \tilde{S}_\sq^P$, we associate the local Hilbert spaces
$\H_I \simeq \mathbb{C}$ (i.e., no degree of freedom). This construction
is a formality needed to define the local decomposition for $\tilde H_0$.
In summary, the Hilbert space of our system is decomposed as
\begin{align}
	\H_\G = \bigotimes_{I\in \tilde{S}_\sq} \H_I.
\end{align}

To obtain a local decomposition of the Hamiltonian $H^0_\G$, we define the
vertex set $\text{Int}(V_I)$ as the subset of $V_I$ consisting of the vertices
on the sites $s \in S_\hex$ that are part of $V_I$, and the vertices on the
link $l \in L_\hex$ that connects the two sites within $V_I$. Moreover, we
define $\partial V_I$ to consist of the vertices on the four links $l \in
L_\hex$ that connect a site in $V_I$ to a site not in $V_I$. Be aware that
$\partial V_I \nsubseteq V_I$. In addition, we define $\bar{V}_I :=
\text{Int}(V_I) \cup \partial V_I$; see \cref{fig:coarsegr}.
This allows us to define the local operators
\begin{align}
	\label{eq:HI}
    H^0_{I} 
    := - \sum_{\mathclap{i \in \text{Int}(V_I)}} \Delta_i n_i 
    - \sum_{\mathclap{i \in \del V_I}} \frac{\Delta_i}{2} n_i
    + U_0 \sum_{\mathclap{\substack{i,j \in \bar{V}_I\\i \sim j}}} n_i n_j
    \,,
\end{align}
where the notation $i \sim j$ denotes vertices in $\G$ that are in blockade.

The induced subgraph of $\bar{V}_I$ and the weights from \cref{eq:HI} define a
blockade graph $\G_I$. The ground state excitation patterns of \cref{eq:HI},
restricted to $\bar{V}_I$, are exactly the maximum-weight independent sets
(MWIS) of $\G_I$. The Hamiltonians $H^0_{I}$ are constructed such that the sum
$H_I^0 + H_J^0$ for two adjacent sites $I, J \in \S_\sq$ is equivalent to the
amalgamation of the blockade graphs $\G_I$ and $\G_J$. 
Hence
\begin{align}
    \label{eq:H0decomp}
    H^0_\G = \sum_{I \in S_\sq} H^0_I
\end{align}
is equivalent to the amalgamation of the blockade graphs $\G_I$ for all $I \in
S_\sq$. 
Thus the decomposition~\eqref{eq:H0decomp} is frustration-free if and only if
there exists a globally consistent independent set on $\G$ (see
\cref{sec:MWIS_tess}). Since such sets exist (by construction),
\cref{eq:H0decomp} is a local, frustration-free decomposition of $H^0_\G$.

To obtain a decomposition of the full Hamiltonian $\tilde H_0$, we define the
local operators
\begin{align}
    Q_I := 
    \begin{cases} 
        \omega(\mathds{1} - \Theta_p), & I = p \in \tilde{S}_\sq^P
        \\
        H_{I}^0, & I \in S_\sq
        \\
        0, & I \in \tilde{S}_\sq^E
    \end{cases},
\end{align}
such that 
\begin{align}
    \label{eq:Q}
    \tilde{H}_0 
    = \sum_{I \in \tilde{S}_\sq} Q_I\,.
\end{align}
First, notice that for all $I \in \tilde{S}$, $\text{supp}(Q_I) \subseteq
B_2(I)$, i.e., each term has constant range of support. This
establishes condition (1) [and (2)] from above.

To show frustration-freeness, note that the ground state energy
$E_0[\tilde{H}_0]$ of $\tilde{H}_0$ is lower bounded by the sum of the smallest
eigenvalues $q_{0,I}$ of $Q_I$, i.e., 
\begin{align}
    \label{eq:bound}
    E_0[\tilde{H}_0] \geq \sum_{I \in \tilde{S}_\sq} q_{0,I}.
\end{align}
Further, note that the operators $Q_I$ mutually commute, i.e., 
$[Q_I, Q_J] = 0$ for all $I,J \in \tilde{S}_\sq$. Thus there exists a common 
eigenbasis for all operators $Q_I$.
In particular, this implies that it suffices to construct one state that 
saturates the bound~\eqref{eq:bound}, because
then any other state with $Q_I \ket{\psi} = q_I \ket{\psi}$ and $q_I >
q_{0,I}$ has strictly larger energy.
To this end, consider the state
\begin{align}
	\ket{\omega} 
  := \frac{1}{|\L_{\G}|}\sum_{\vec{n} \in \L_{\G}} \ket{\vec{n}}
  \,.
\end{align}
The operators $\Theta_p(h)$ define bijective maps $\L_{\G} \rightarrow \L_{\G}$
for each $h \in G$. Hence, applying $\Theta_p(h)$ to $\ket{\omega}$ leads to a
permutation of the summands, i.e., $\Theta_p(h) \ket{\omega} = \ket{\omega}$.
This shows that for every $I = p \in \tilde{S}_\sq^P$, $\ket{\omega}$ satisfies
$Q_p \ket{\omega} = 0$. As $\mathds{1} - \Theta_p$ is a projector, it has
eigenvalues $0$ and $1$ and $Q_p$ has eigenvalues $0$ and $\omega$. It follows
that $\ket{\omega}$ is an eigenstate of $Q_p$ with minimal eigenvalue. Moreover
$\ket{\omega}$ also is an eigenstate with minimal eigenvalue of $H^0_{I}$ as it
is a linear combination of states in $\H^0_\G$. Consequently,
$\ket{\omega}$ saturates \cref{eq:bound} which proves the frustration-freeness
of the decomposition~\eqref{eq:Q}; hence condition (3) from above is
satisfied.

This construction also shows that $\tilde{H}_0$ has a spectral gap. To this
end, note that for $I \in S_\sq$, $H^0_{I}$ has a spectral gap of at least
$\min_{i}(\Delta_i)/2$ [we assume $\max_i(\Delta_i) \ll U_0$] and for $p \in
\tilde{S}_\sq^P$, $Q_p$ has a spectral gap of $\omega$. Hence
if one of the common eigenvectors of the operators $Q_I$ fails to be a ground
state of some operator $Q_I$, its energy is increased by  at least
$\min\{\min_i(\Delta_i)/2,\omega\}$, independent of the system size;
this establishes condition (4) from above.

The local-gap condition (5) from above can be verified with a similar argument.
Here, for $A = B_r(I_0)$ with $I_0 \in \tilde{\Lambda}_\sq$ and $r \geq 0$, we
must consider the localized Hamiltonian 
\begin{align}
    \tilde{H}^A_{0} := \sum_{\mathclap{\text{supp}(Q_I) \subseteq A}} Q_I\,.
\end{align}
Moreover, we define the localized Hamiltonian without plaquette fluctuations
\begin{align}
	H_0^A := \sum_{\text{supp}(H_I^0) \subseteq A} H_I^0.
\end{align}
If we denote the set of ground state configurations of $H_{0}^A$
as $\L_{\G}^A$, the same arguments as above shows
that the state
\begin{align}
    \ket{\omega}_A 
    := \frac{1}{|\L_{\G}^A|} \sum_{\vec{n} \in \L_{\G}^A} \ket{\vec{n}}
\end{align}
is a ground state minimizing the eigenvalue of all operators $Q_I$ with
$\text{supp}(Q_I) \subseteq A$, and that the spectral gap is lower bounded by
$\min\{\min_i(\Delta_i)/2, \omega\}$.

\paragraph{Local-TQO.}

\begin{figure}[tb]
    \centering
    \includegraphics[width=0.75\linewidth]{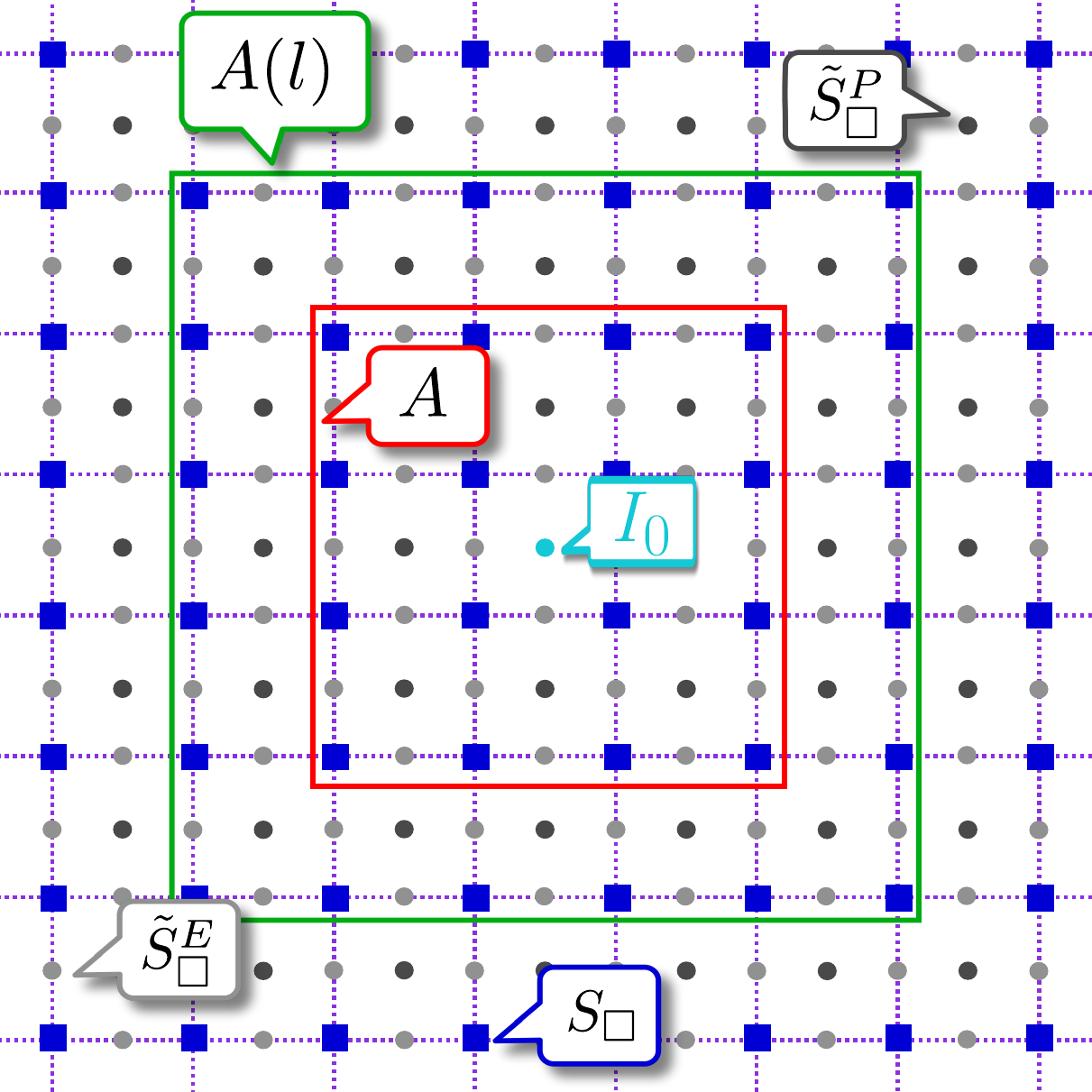}
    \caption{%
        \CaptionMark{Setting for local-TQO.}
        Summary of the construction to verify local-TQO. The blue squares
        represent the sites of the square lattice $\Lambda_\sq$ and the purple
        dotted lines its edges. The dark gray (light gray) circles represent
        the sites of the extended lattice $\tilde{\Lambda}_\sq$ that correspond
        to plaquettes (edges) of $\Lambda_\sq$. The boundary of the rectangular
        region $A = B_r(I_0)$ is highlighted by a red square, the central site
        $I_0$ is highlighted cyan. The boundary of the region $A(l) =
        B_{r+l}(I_0)$ is highlighted by a green square. The figure shows the
        configuration for $r = 3$ and $l = 2$.
    }
    \label{fig:neighborhoods}
\end{figure}

To verify the local-TQO condition~(6), we show a stronger condition, namely
that there exists an $r$-independent bound $l^{*}$ such that for $l \geq
l^{*}$, the reduced density matrices [\cref{eq:rdm2} below] are \emph{equal}. We
follow the proof of Cui~\etal~\cite[Theorem 3.1]{Cui_2020} (who proved this
condition for the original quantum double models).

Let $A = B_r(I_0)$ for some $I_0 \in \tilde{S}_\sq$ and $r \geq 0$. Moreover,
let $A(l) = B_{r+l}(I_0)$ for $l \geq 4 =: l^*$; these regions are shown in
\cref{fig:neighborhoods}. 
As before, $\tilde{H}_0^{A(l)}$ denotes the localized Hamiltonian and
$\H_\G^{A(l),0}$ its classical ground state manifold for $\omega = 0$, spanned
by excitation patterns in $\L_\G^{A(l)}$. 
For an excitation pattern $\vec{n}$, we denote its restriction to vertices that
are part of some set $X \subseteq S_\sq$ as $\vec{n}|_X$. This allows us to
define the set $\L_{\G,A}^{A(l)} := \{\vec{n}|_A \, |\, \vec{n} \in
\L_\G^{A(l)}\}$. We indicate excitation patterns of vertices in $X \subseteq
S_\sq$ by writing $\vec{n}_X$. 

With these conventions, we find that an arbitrary ground state $\ket{\omega}
\in \H_\G^{A(l),\omega}$ can be written as 
\begin{subequations}
	\begin{align}
		\ket{\omega} 
    &= \sum_{\vec{n} \in \L_\G^{A(l)}} C(\vec{n}) \ket{\vec{n}}
    \\
		&= \sum_{\vec{n}_A \in \L_{\G,A}^{A(l)}} \ket{\vec{n}_A}_A \ket{\omega_\Ac(\vec{n}_A)}_\Ac
    \,,\label{eq:SD}
	\end{align}
\end{subequations}
with some coefficients $C(\vec{n}) \in \mathbb{C}$. 
These coefficients cannot be arbitrary but have to be chosen such that
$\Theta_p(h) \ket{\omega} = \ket{\omega}$ is satisfied for every plaquette $p
\in A(l)$ and $h \in G$. Note that the implicitly defined states
$\ket{\omega_\Ac(\vec{n}_A)}$ are not normalized.

We use \cref{eq:SD} to construct a Schmidt decomposition of $\ket{\omega}$. To
this end, let $L_{\sq,A}$ denote the set of links emanating from sites in
$S_\sq \cap A$. On this finite patch of the square lattice, we can define the
quantum double Hilbert space $\H_\SG$ and operators $A_p(h)$ and $B_I$ for $h
\in G$ as usual (see Ref.~\cite{Cui_2020}; the $\square$ in $\H_\SG$
indicates that this is a quantum double on the square lattice). To define
these operators, we use the orientation conventions of $L_\sq$ as induced by
our orientation convention on $L_\hex$ (recall \cref{par:loc}). For plaquettes
with boundary edges that do not belong to $L_{\sq,A}$, the operators $A_p(h)$
are still defined as usual on the edges that are part of $L_{\sq,A}$. 

Let $\H_{\SG,A}^0$ denote the ground state manifold of the quantum double
Hamiltonian $H_{\SG}$ on the square lattice for $J_p = 0$. As this Hamiltonian
is diagonal in the group basis, we can define $\L_{\SG,A}$
such that $\H_{\SG,A}^0 =\spn{\ket{\vec{g}_A}\,|\,\vec{g}_A\in \L_{\SG,A}}$.
The operators $A_p(h)$ are
permutation matrices when represented in the group basis. Thus they induce a
group action on $\L_{\SG,A}$ by $A_p(h)\ket{\vec{g}_A} = \ket{A_p(h) \cdot
\vec{g}_A}$. By abuse of notation, we use the same symbol for the operator
$A_p(h)$ and its induced group action on $\L_{\SG,A}$.

From the condition $l \geq 4$ it follows that $\text{supp}(H_I^0) \subseteq
A(l)$ for all $I \in A$. Thus the excitation patterns $\vec{n}_A$ are
maximum-weight independent sets of the amalgamation of all site structures
$\G_s$ where the site $s \in S_\hex$ is part of a site $I \in A$. Hence,
by construction, they can by mapped
bijectively to ground state configurations of the quantum double on $L_\hex$.
For such configurations on $L_\hex$, the group elements on the edges in $L_\sq$
uniquely determine the group elements on the remaining (vertical)
edges. Thus there exists a bijection $\zeta: \L_{\G,A}^{A(l)} \rightarrow
\L_{\SG,A}$. 
This bijection satisfies
\begin{align}
	\label{eq:exchange}
	\zeta(\Theta_p(h) \cdot \vec{n}_A) 
  = A_p(h) \cdot \zeta(\vec{n}_A)\,,
\end{align}
for all $\vec{n}_A \in \L_{\G,A}^{A(l)}$, $h \in G$ and $p\in A(l)$.

For $\vec{g}_A \in \L_{\SG,A}$, we denote by $\vec{g}_A|_\dA$ the restriction of
this configuration to the links in $L_{\sq, A}$ that cross the boundary between
$A$ and $\Ac$. Cui~\etal~\cite{Cui_2020} showed that for any two configurations
$\vec{g}_A, \vec{g}'_A \in \L_{\SG,A}$ with $\vec{g}_A|_\dA = \vec{g}'_A|_\dA$,
there exists $h_p \in G$, such that $A_\text{int} := \prod_{p \in A} A_p(h_p)$
satisfies $\vec{g}'_A = A_\text{int}\cdot\vec{g}_A$.
Note that $A_\text{int}$ acts trivially on the group elements on the boundary
of $A$. It follows that for any $\vec{n}_A, \vec{n}'_A \in \L_{\G,A}^{A(l)}$
with $\zeta(\vec{n}_A)|_\dA = \zeta(\vec{n}'_A)|_\dA$, there exists $h_p \in G$
for $p \in A$, such that $\Phi_\text{int} := \prod_{p\in A} \Theta_p(h_p)$ 
satisfies $\vec{n}'_A = \Phi_\text{int}\cdot \vec{n}_A$. 
As $\ket{\omega}$ is invariant under $\Theta_p(h)$ for all $h \in G$, it
follows that 
\begin{subequations}
	\begin{align}
		\ket{\omega_\Ac(\vec{n}_A)}_\Ac 
    &= \bra{\vec{n}_A}_A \ket{\omega}\\
    &= \bra{\vec{n}_A}_A \Phi_\text{int}^\dagger \ket{\omega}\\
		&= \bra{\vec{n}'_A}_A \ket{\omega}\\
    &= \ket{\omega_\Ac(\vec{n}'_A)}_\Ac
    \,.
	\end{align}
\end{subequations}
Thus the state $\ket{\omega_\Ac(\vec{n}_A)}_\Ac$ only depends on
$\zeta(\vec{n}_A)|_\dA$. 
Define $\zeta_\dA(\vec{n}_A) := \zeta(\vec{n}_A)|_\dA$, then by abuse of
notation, we can denote this state as
$\ket{\omega_\Ac(\zeta_\dA(\vec{n}_A))}_\Ac$. 

Define the set $\L_{\SG,\dA} := \zeta_\dA(\L_{\G,A}^{A(l)})$ and for
$\vec{g}_\dA \in \L_{\SG,\dA}$, define $\L_{\G,A}^{A(l)}(\vec{g}_\dA) :=
\{\vec{n}_A \in \L_{\G,A}^{A(l)}\, |\, \zeta_\dA(\vec{n}_A) = \vec{g}_\dA\}$.  
Then \cref{eq:SD} becomes 
\begin{align}
	\label{eq:SD2}
	\ket{\omega} 
  =\;\sum_{\mathclap{\vec{g}_\dA \in \L_{\SG,\dA}}}\;
  \mathcal{N}_1(\vec{g}_\dA)\ket{\xi_A(\vec{g}_\dA)}_A \ket{\omega_\Ac(\vec{g}_\dA)}_\Ac
\end{align}
where 
\begin{align}
	\ket{\xi_A(\vec{g}_\dA)}_A 
  := \frac{1}{\mathcal{N}_1(\vec{g}_\dA)} 
  \sum_{\vec{n}_A \in \L_{\G,A}^{A(l)}(\vec{g}_\dA)} \ket{\vec{n}_A}_A
\end{align}
and $\mathcal{N}_1(\vec{g}_\dA) := \sqrt{|\L_{\G,A}^{A(l)}(\vec{g}_\dA)|}$.

We now show that \cref{eq:SD2} is a Schmidt decomposition of $\ket{\omega}$. To
this end, we must show that the states $\ket{\omega_\Ac(\vec{g}_\dA)}_\Ac$ are
orthogonal and have the same norm. 

For orthogonality, suppose that $\vec{g}_\dA$ and $\vec{g}'_\dA$ differ in link
$l$, i.e., $g_l \neq g'_l$. Let $I \in S_\sq$ be the unique site in $\Ac$ from
which $l$ emanates. As $l \geq 4 = l^*$ (be aware, that this "$l$" refers
to the length that is used to define $A(l)$ and not to a link.),
$\text{supp}(H_I^0) \subseteq A(l)$, and therefore every excitation pattern
from $\L_\G^{A(l)}$ restricted to $I$ can be mapped bijectively to group
elements on the links emanating from $I$. For every excitation pattern $\vec{n}
\in \L_{\G}^{A(l)}$, both $\vec{n}|_A$ and $\vec{n}|_I$ must associate the same
group element to link $l$, otherwise $\vec{n}$ cannot be a ground state
pattern. 
Suppose $\vec{n}_\Ac$ is an excitation pattern in the expansion of
$\ket{\omega_\Ac(\vec{g}_\dA)}_\Ac$ and $\vec{n}'_\Ac$ is an excitation pattern
in the expansion of $\ket{\omega_\Ac(\vec{g}'_\dA)}_\Ac$. Then $g_l \neq g'_l$
implies that $\vec{n}_\Ac|_I \neq \vec{n}'_\Ac|_I$ and thus
$\bra{\vec{n}_\Ac}_\Ac\ket{\vec{n}'_\Ac}_\Ac = 0$. As this holds for all
excitation pattens in the expansions of the respective states, we have shown
the desired orthogonality.

Now we show that all states $\ket{\omega_\Ac(\vec{g}_\dA)}_\Ac$ for
$\vec{g}_\dA \in \L_{\SG,\dA}$ have the same norm. Cui~\etal~\cite{Cui_2020}
showed that for any two configurations $\vec{g}_\dA, \vec{g}'_\dA \in
\L_{\SG,\dA}$, there exist group elements $h_p \in G$ such that $A_\dA :=
\prod_{p \in \dA} A_p(h_p)$ satisfies $A_\dA \cdot \vec{g}'_\dA = \vec{g}_\dA$.
Here, $p \in \dA$ denotes the condition that $p$ is neither contained in $A$
nor in $\Ac$. 
\cref{eq:exchange}, together with the previous paragraph, implies that there exist
group elements $h_p \in G$ such that for $\vec{g}_\dA, \vec{g}'_\dA \in
\L_{\SG,\dA}$, the automorphism $\Phi_\dA := \prod_{p \in \dA} \Theta_p(h_p)$
satisfies 
\begin{align}
    \label{eq:set_eq}
    \L_{\G,A}^{A(l)}(\vec{g}_\dA) 
    = \Phi_\dA \cdot \L_{\G,A}^{A(l)}(\vec{g}'_\dA)
    \,.
\end{align}
We can factor $\Phi_\dA = \Phi_\dA^A \Phi_\dA^\Ac$, where $\Phi_\dA^A$ acts
trivially on vertices in $\Ac$ and $\Phi_\dA^\Ac$ acts trivially on vertices in
$A$. Both of these permutations induce unitary operators on $\H_{\G,A}$ and
$\H_{\G,\Ac}$ respectively. By abuse of notation, we denote them with the same
symbols $\Phi_\dA^A$ and $\Phi_\dA^\Ac$. \cref{eq:set_eq} then implies that
$\Phi_\dA^A \ket{\xi_A(\vec{g}'_\dA)} = \ket{\xi_A(\vec{g}_\dA)}$. In
particular, this shows that $\mathcal{N}_1 \equiv \mathcal{N}_1(\vec{g}_\dA)$.
Moreover, from $\Phi_\dA \ket{\omega} = \ket{\omega}$ it follows that 
\begin{subequations}
	\begin{align}
		\ket{\omega_\Ac(\vec{g}'_\dA)}_\Ac &= \bra{\xi_A(\vec{g}'_\dA)}_A \ket{\omega}\\
		&= \bra{\xi_A(\vec{g}'_\dA)}_A  \Phi_\dA^\dagger \ket{\omega}\\
		&= \bra{\xi_A(\vec{g}'_\dA)}_A  (\Phi^A_\dA)^\dagger (\Phi^\Ac_\dA)^\dagger \ket{\omega}\\
		&= (\Phi^\Ac_\dA)^\dagger \bra{\xi_A(\vec{g}_\dA)}_A   \ket{\omega}\\
		&= (\Phi^\Ac_\dA)^\dagger \ket{\omega_\Ac(\vec{g}_\dA)}_\Ac\,.
	\end{align}
\end{subequations}
Since $(\Phi^\Ac_\dA)^\dagger$ is unitary, the states
$\ket{\omega_\Ac(\vec{g}_\dA)}_\Ac$ and $\ket{\omega_\Ac(\vec{g}'_\dA)}_\Ac$
have the same norm; we denote this norm as $\mathcal{N}_2$ and the states
$1/\mathcal{N}_2 \ket{\omega_\Ac(\vec{g}_\dA)}_\Ac$ are orthonormal. 

Consequently, taking the partial trace of \cref{eq:SD2} over $\Ac$ yields
\begin{align}
    \label{eq:rdm2}
    \rho_{A} 
    = (\mathcal{N}_1\mathcal{N}_2)^2 \sum_{\vec{g}_\dA 
    \in \L_{\SG,\dA}} \ket{\xi_A(\vec{g}_\dA)}_A\bra{\xi_A(\vec{g}_\dA)}_A
    \,.
\end{align}
The only terms in \cref{eq:rdm2} that could depend on the state $\ket{\omega}$
are the normalization constants $\mathcal{N}_1$ and $\mathcal{N}_2$. However,
since all states $\ket{\xi_A(\vec{g}_\dA)}_A$ are normalized, the condition $1
= \tr{\rho_A}$ implies that $(\mathcal{N}_1\mathcal{N}_2)^2 = |\L_{\SG,\dA}|$,
which is independent of $\ket{\omega}$. 

Thus we have shown that the reduced density matrix $\rho_A$ is the same for all
states $\ket{\omega} \in \H_{\G}^{A(l),\omega}$, as desired.

\section{Wilson loops}%
\label{app:wilson}

In \cref{sec:fluxlattice} we used the Wilson loop (operator) \eqref{eq:Wilsonloop}
of the quantum double model
\begin{align}
    \hat W^{R}(\gamma) 
    := \sum_{\ket{\vec g}\in \H_G} \chi_R(g_\gamma) \ket{\vec g}\bra{\vec g} 
    \label{eq:WRFourier}
\end{align}
with product $g_\gamma := \prod\nolimits_{l \in \gamma} g_{l}^{\sigma_l}$ along
a closed, oriented loop $\gamma$ on the dual lattice; the sign functions
$\sigma_l$ are defined in \cref{sec:fluxlattice} (see also \cref{fig:fig4}).
$\chi_R$ denotes the character of the irreducible representation (irrep) $R$ of
the group $G$. Note that in \cref{sec:fluxlattice} we work with the matrix elements
of the operator \eqref{eq:WRFourier} in the product basis $\ket{\vec g}$ for
simplicity.

At the fixpoint of the quantum double phase, the measurement of the Wilson loop
operators over all irreps $\hat G$ of the group $G$ uniquely determines the
enclosed flux, independent of the shape of the loop. 
Here we demonstrate this property by explicitly performing the discrete Fourier
transform for class functions
\begin{align}
    \hat W^{C}(\gamma) 
    := \frac{1}{|Z_G(r_C)|}\sum_{R \in \hat G} \chi_R^*(r_C)\hat W^{R}(\gamma)
    \label{eq:WCFourier}
\end{align}
where $Z_G(r_C)$ is the centralizer of the representative $r_C \in C$ of
conjugacy class $C \in \text{Cl}(G)$. Note that $|Z_G(\cdot)|$ is a class
function since centralizers of different representatives are isomorphic via
conjugation. By the orbit-stabilizer theorem, we can rewrite the cardinality of
the centralizer $|Z_G(r_C)| = |G|/|C|$ for any $r_C \in C$, this makes
\cref{eq:Wilsonloop2a} and \cref{eq:WCFourier} equivalent.

The \emph{completeness} of the character on the set of class functions can be
formulated as
\begin{align}
    \sum_{R \in \hat G} \chi_R^*(g)\chi_R(h) = |Z_G(g)|\delta_{g \sim h},
\end{align}
where $\delta_{g \sim h}=1$ if $g$ and $h$ are conjugate and $\delta_{g \sim
h}=0$ otherwise.
This allows us to rewrite the Fourier transform \eqref{eq:WCFourier} as
\begin{align}
    \hat W^{C}(\gamma) = \sum_{\ket{\vec g}\in \H_G} \delta_{r_C \sim g_\gamma}\ket{\vec g}\bra{\vec g};
\end{align}
where $\delta_{r_C \sim g_\gamma}=1$ if $g_\gamma \in C$ and zero and
$\delta_{r_C \sim g_\gamma}=0$ otherwise; this shows \cref{eq:Wilsonloop2b}.

As an example, consider a state $\ket{C_s}\in\H_{\tilde
\G}^0\cap\H_{\tilde\G}^S$ where one flux anyon $[C,E]$ is pinned on site $s$.
We consider loops $\gamma$ that enclose $s$. Note that the product $g_\gamma$
is conserved when reshaping the loop such that it traverses a site without flux
anyon (since then $g_0g_1g_2 = 1$ for the group elements on the three emanating
links). This allows us to contract the loop $\gamma$ to enclose only the site
$s$ with the flux anyon. Then $g_\gamma$ is just the product of the three group
elements of its emanating links, which is in $C$ for all states $\ket{\vec g}$
in $\ket{C_s}$.
Then $\bra{C_s}\hat W^R(\gamma)\ket{C_s} = \chi_R(r_C)$ for some representative
$r_C \in C$ (as proposed in \cref{sec:fluxlattice}) and $\bra{C_s}\hat
W^{C'}(\gamma)\ket{C_s} = \delta_{C',C}$ is nonzero only for $C' = C$.

\section{Braiding in $\double{S_3}$}%
\label{app:braiding}

\begin{figure}[tb]
    \centering
    \includegraphics[width=1.0\linewidth]{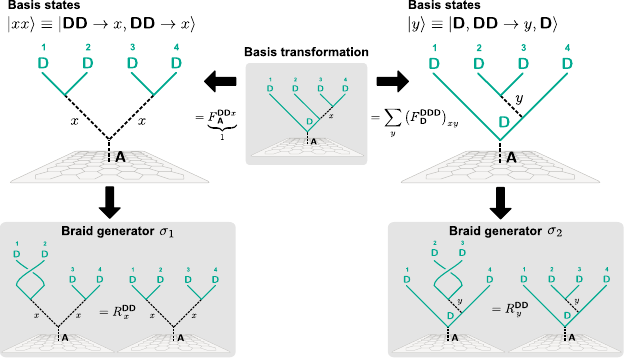}
    \caption{%
        \CaptionMark{Basis states and braid group generators}. 
        The fusion basis $\ket{xx}\equiv \ket{\anyon{DD}\rightarrow x,
        \anyon{DD}\rightarrow x}$ of $\H^\anyon{DDDD}_\anyon{A}$, which
        describes four $\anyon D$-anyons that fuse into the vacuum $\anyon A$,
        is shown on the left as a splitting diagram.  In this basis, the braid
        generators $\sigma_1$ and $\sigma_3$ are diagonal with
        $R^\anyon{DD}_x$.  Another basis $\ket{y} \equiv \ket{\anyon
        D,\anyon{DD}\rightarrow y,\anyon{D}}$ is shown on the right. In this
        basis, the braid generator $\sigma_2$ is diagonal with
        $R^\anyon{DD}_y$. The basis transformation between $\ket{xx}$ and
        $\ket{y}$ is performed via two $F$-moves: The first is trivial (since
        $F^{\anyon{DD}x}_{\,\anyon{A}}=1$) and the second gives rise to a
        non-trivial basis transformation via $\FmatD{D}$.
    }
    \label{fig:fig6-braid}
\end{figure}

The braid group $B_4$, which describes the world lines of four anyons that
start and end aligned on a row at fixed positions, is generated by three
operations: $\sigma_1$ exchanges the first anyon with the second, $\sigma_2$
the second anyon with the third, and $\sigma_3$ the third with the fourth. We
define these exchanges with an anti-clockwise orientation, as shown in
\cref{fig:fig6,fig:fig6-braid}. 
In \cref{sec:braiding} we introduced the basis
$\ket{xx}\equiv\ket{\anyon{DD}\rightarrow x,\anyon{DD}\rightarrow x}$ of the
fusion space $\H^\anyon{DDDD}_\anyon{A}$ which contains the states with four
$\anyon D$-anyons that fuse into the vacuum $\anyon A$. The fusion algebra of
$\double{S_3}$ allows for $ x \in\{ \anyon A, \anyon C, \anyon F, \anyon G,
\anyon H\}$, i.e., the first two anyons fuse into $x$ and the remaining two
anyons also fuse into $x$, which finally fuse into the vacuum $\anyon A$, see
\cref{fig:fig6-braid}. The fusion outcome of both pairs must be equal because
those fuse into the vacuum and in $\double{S_3}$ all anyons are their own
antiparticle. It follows that $\dim\H^\anyon{DDDD}_\anyon{A}=5$, which reflects
the non-abelian nature of the $\double{S_3}$ anyon theory.

Braiding anyons effects unitary operations on $\H^\anyon{DDDD}_\anyon{A}$. Due
to locality, the representations of the two braid generators $\sigma_1$ and
$\sigma_3$ are diagonal in the basis $\ket{xx}$ (braiding two anyons cannot
change their fusion channel):
\begin{subequations}
    \begin{align}
    \sigma_1:\;\ket{{\color{red}x} x} &\mapsto  R^\anyon{DD}_{\color{red}x} \ket{{\color{red}x}x}\,,\\
    \sigma_3:\;\ket{x {\color{blue}x}} &\mapsto  R^\anyon{DD}_{\color{blue}x} \ket{x{\color{blue}x}}\,,
    \end{align}
\end{subequations}
with $R^\anyon{DD}_x\in\mathbb{C}$ the $R$-matrices (phases) for braiding two
$\anyon D$-anyons that are in fusion channel $x$.
For the quantum double $\double{S_3}$, one finds \cite{Cui2015}
\begin{subequations}
    \begin{align}
        \RmatD{A} &= \RmatD{C} = \RmatD{F} = -1\,,\\
        \RmatD{G} &=- \bar{\omega}^2\,,\\
        \RmatD{H} &=- \bar{\omega}\,,
    \end{align}
\end{subequations}
with $\bar{\omega}  = e^{ 2 \pi i/3}$. 
Note that the last two equations implicitly define which anyons we call $\anyon
G$ and $\anyon H$. This was not yet fixed since we did not specify the
representations $\Gamma_{R_1}$ and $\Gamma_{R_2}$ in \cref{sec:braiding}.

The braid generator that allows us to probe the non-abelian statistics is
$\sigma_2$ -- which is \emph{not} diagonal in the $\ket{xx}$ basis since the
fusion channel of anyon 2 and 3 is not determined in this basis.
We denote the basis of $\H^\anyon{DDDD}_\anyon{A}$ in which anyon 2 and 3 fuse
into $y\in\{ \anyon A, \anyon C, \anyon F, \anyon G, \anyon H\}$ as
$\ket{y}\equiv\ket{\anyon D,\anyon{DD}\to y,\anyon D}$. The anyon $y$ then
fuses with the fourth $\anyon D$-anyon into $\anyon D$, which finally fuses
with the first $\anyon D$-anyon into the vacuum, see \cref{fig:fig6-braid}.
(The fusion channel of $y$ with the fourth $\anyon D$-anyon must be $\anyon D$
because this is the only way to fuse the first $\anyon D$-anyon into the vacuum
$\anyon A$.)

Again due to locality, in this basis the unitary representation of the braid
generator $\sigma_2$ is diagonal:
\begin{align}
    \sigma_2:\;\ket{y} \mapsto R^\anyon{DD}_y \ket{y}\,.
\end{align}

The basis change from $\ket{xx}$ to $\ket{y}$ is achieved by two $F$-moves, see
\cref{fig:fig6-braid}, where the first move is trivial because
$F^{\anyon{DD}x}_{\,\anyon{A}} = 1$. Consequently, the basis transformation is
determined by the matrix $\FmatD{D}$,
\begin{align}
    \ket{xx} = \sum_{y} \left(\FmatD{D}\right)_{xy} \ket{y},
\end{align}
with $F$-matrix given by \cite{Cui2015}
\begin{align}
    \FmatD{D}=
    \frac{1}{3} 
    \begin{pmatrix}
        1 & \sqrt{2} & \sqrt{2} & \sqrt{2}& \sqrt{2} \\
        \sqrt{2} & 2 & -1 & -1 & -1\\
        \sqrt{2}& -1 & 2 & -1 & -1\\
        \sqrt{2} & -1 & -1 & -1& 2\\
        \sqrt{2} & -1 & - 1 & 2 & -1
    \end{pmatrix}\,.
\end{align}
Here, the order of the basis states is given by $\{\anyon A, \anyon C, \anyon
F, \anyon G, \anyon H\}$. Note that $\left(\FmatD{D}\right)^2= \mathds{1}$, so
that the inverse transformation from basis $\ket{y}$ to $\ket{xx}$ is also
given by $\FmatD{D}$.

This allows us to express the matrix $U_{\sigma_2}$, which represents the braid
generator $\sigma_2$ in the basis $\ket{xx}$, by first changing to the basis
$\ket{y}$, then performing the braid $R^\anyon{DD}_y$, and finally switching
back to the basis $\ket{xx}$:
\begin{align}
    U_{\sigma_2}= \FmatD{D}\cdot R^\anyon{DD}\cdot \FmatD{D}
\end{align}
with the diagonal matrix $\left(R^\anyon{DD}\right)_{xy} =
\delta_{xy}R^\anyon{DD}_y$; this yields~\cite{Cui2015}
\begin{equation}
    U_{\sigma_2}
    =-\frac{1}{3}
    \begin{pmatrix}
        1 &  \sqrt{2} &  \sqrt{2} &   \sqrt{2}\,\bar{\omega}&   \sqrt{2}\,\bar{\omega}^2 \\
        \sqrt{2} & 2 & -1 &  - \bar{\omega}&  - \bar{\omega}^2   \\
        \sqrt{2} & -1 & 2 &  - \bar{\omega}&  - \bar{\omega}^2 \\
        \sqrt{2}\,\bar{\omega} & - \bar{\omega} &  - \bar{\omega}&  - \bar{\omega}^2 & 2 \\
        \sqrt{2}\,\bar{\omega}^2 &   - \bar{\omega}^2 &  - \bar{\omega}^2&  2 &  - \bar{\omega}
    \end{pmatrix}\,.
\end{equation}
Applying this operation on our initial state 
\begin{align}
    \BraKet{xx}{AA}=
    \begin{pmatrix}1 & 0 & 0 & 0 & 0\end{pmatrix}^T
\end{align}
leads to the result in \cref{eq:braiding} of the main text (up to a global
phase). 

\end{document}